\documentclass[a4]{llncs}

\usepackage[usenames]{color}
\usepackage{thm-restate}
\usepackage{amsmath}
\usepackage{amstext}
\usepackage{amsfonts}
\usepackage{mathtools}
\usepackage{paralist}
\usepackage{array}
\usepackage{listings}
\usepackage{courier}
\usepackage{algorithm2e}
\usepackage{tikz}
\usepackage{pgfplots} 
\usepackage{times}
\usepackage[h]{esvect}
\usepackage[strict]{changepage}


\usetikzlibrary{arrows}

\lstset{columns=fullflexible, mathescape}

\newcommand{\hide}[1]{}
\newcommand{\td}[1]{\noindent\textcolor{blue}{\ifmmode \text{[#1]}\else [#1] \fi}}

\lstdefinelanguage{scala}{
  morekeywords={abstract,case,catch,class,def,%
    do,else,extends,false,final,finally,%
    for,if,implicit,import,match,mixin,%
    new,null,object,override,package,%
    private,protected,requires,return,sealed,%
    super,this,throw,trait,true,try,%
    type,val,var,while,with,yield},
  otherkeywords={=>,<-,<\%,<:,>:,\#,@},
  sensitive=true,
  morecomment=[l]{//},
  morecomment=[n]{/*}{*/},
  morestring=[b]",
  morestring=[b]',
  morestring=[b]"""
}

\newcommand{\kw}[1]{\textbf{#1}}
\newcommand{\li}[0]{p}
\newcommand{\foldl}[0]{{\small\textsf{\kw{foldl}}}}

\newcommand{\zero}[0]{z}
\newcommand{\apply}[0]{\mathit{apply}}
\newcommand{\seq}[0]{\mathit{seq}}
\newcommand{\comb}[0]{\mathit{comb}}
\newcommand{\listof}[1]{[#1]}

\newcommand{\rdd}[0]{\mathit{rdd}}
\newcommand{\RDD}[0]{{\small\textsf{RDD}}}
\newcommand{\PairRDD}[0]{{\small\textsf{PairRDD}}}
\newcommand{\GraphRDD}[0]{{\small\textsf{GraphRDD}}}
\newcommand{\prdd}[0]{\mathit{prdd}}
\newcommand{\rddof}[1]{\rdd(#1)}

\newcommand{\graphrdd}[0]{\mathit{graphRdd}}
\newcommand{\graphrddof}[1]{\graphrdd(#1)}
\newcommand{\vertexid}[0]{\mathit{VertexID}}
\newcommand{\vertexlist}[0]{V}
\newcommand{\edgelist}[0]{E}
\newcommand{\numTri}[1]{\triangle_{#1}}
\newcommand{\agg}[0]{{\small\textsf{\kw{aggregate}}}}
\newcommand{\red}[0]{{\small\textsf{\kw{reduce}}}}
\newcommand{\redl}[0]{{\small\textsf{\kw{reducel}}}}

\newcommand{\redListWKey}[0]{{\small\textsf{\kw{reduceListWithKey}}}}
\newcommand{\treered}[0]{{\small\textsf{\kw{treeReduce}}}}
\newcommand{\nothing}[0]{{\small\textsf{\kw{Nothing}}}}
\newcommand{\maybe}[0]{{\small\textsf{\kw{Maybe}}}}
\newcommand{\just}[0]{{\small\textsf{\kw{Just}}}}
\newcommand{\fromJust}[0]{{\small\textsf{\kw{fromJust}}}}

\newcommand{\treeagg}[0]{{\small\textsf{\kw{treeAggregate}}}}
\newcommand{\aggWKey}[0]{{\small\textsf{\kw{aggregateWithKey}}}}
\newcommand{\redWKey}[0]{{\small\textsf{\kw{reduceWithKey}}}}
\newcommand{\aggMsgs}[0]{{\small\textsf{\kw{aggregateMessages}}}}

\newcommand{\aggBKey}[0]{{\small\textsf{\kw{aggregateByKey}}}}

\newcommand{\redBKey}[0]{{\small\textsf{\kw{reduceByKey}}}}

\newcommand{\perm}[0]{\mathit{perm}}
\newcommand{\send}[0]{\mathit{send}}
\newcommand{\lookup}[0]{{\small\textsf{\kw{lookup}}}}
\newcommand{\lookUp}[0]{{\small\textsf{\kw{lookUp}}}}
\newcommand{\map}[0]{{\small\textsf{\kw{map}}}}
\newcommand{\xnot}[0]{{\small\textsf{\kw{not}}}}
\newcommand{\xnull}[0]{{\small\textsf{\kw{null}}}}
\newcommand{\filter}[0]{{\small\textsf{\kw{filter}}}}
\newcommand{\filterkey}[0]{{\small\textsf{\kw{filterkey}}}}
\newcommand{\head}[0]{{\small\textsf{\kw{head}}}}

\newcommand{\val}[0]{{\small\textsf{value}}}
\newcommand{\repartition}[0]{{\small\textsf{repartition}}}
\newcommand{\mergeBy}[0]{{\small\textsf{mergeBy}}}

\newcommand{\expl}[1]{\mbox{(#1)}}
\newcommand{\hscode}[1]{{\small\textsf{#1}}}
\newcommand{\datalist}[0]{L}
\newcommand{\aggList}[0]{{\small\textsf{\kw{aggregateList}}}}

\newcommand{\partit}[0]{\mathit{part}}
\newcommand{\cat}[0]{\mathbin{+\!\!\!+}}
\newcommand{\imgof}[1]{\mathit{img}(#1)}
\newcommand{\foldlzof}[1]{{\langle#1\rangle}}
\newcommand{\comboper}[0]{\oplus}
\newcommand{\concat}[0]{{\small\textsf{\kw{concat}}}}
\newcommand{\xs}[0]{\mathit{xs}}
\newcommand{\xss}[0]{\mathit{xss}}
\newcommand{\ys}[0]{\mathit{ys}}

\newcommand{\uscore}[0]{{\char95}}

\newcommand{\attr}[1]{\mathit{attr}({#1})}
\newcommand{\dt}[0]{^D}
\newcommand{\bt}[0]{^T}

\newcommand\numberthis{\addtocounter{equation}{1}\tag{\theequation}}

\pgfplotsset{
    integral axis/.style={
        axis lines=middle,
        enlarge y limits=upper,
        axis equal image, width=8cm,
        xlabel=$x$, ylabel=$y$,
        ytick=\empty,
        xticklabel style={font=\tiny, text height=1.2ex, anchor=north},
        samples=100
    },
    integral/.style={
            domain=2:10,
            samples=6
    },
    integral fill/.style={
            integral,
            draw=none, fill=#1,
            on layer=axis background
        },
        integral fill/.default=cyan!10,
        integral line/.style={
            integral,
            very thick,
            draw=#1
        },
        integral line/.default=black
}

\newcommand{\specspark}{{\sc PureSpark}}

\renewcommand{\vec}[1]{\vv{#1}}

\hyphenation{specifica-tion de-ter-mi-nis-tic}

\begin{document}

\title{An Executable Sequential Specification \\ for Spark Aggregation \vspace{-3mm}}


\author{
Yu-Fang Chen\inst{1},
Chih-Duo Hong\inst{1},
Ond\v{r}ej Leng\'{a}l\inst{1,2},\\
Shin-Cheng Mu\inst{1},
Nishant Sinha\inst{3},
Bow-Yaw Wang\inst{1}
}

\institute{
Academia Sinica, Taiwan \and
Brno University of Technology, Czech Republic \and
IBM Research, India
\vspace{-6mm}
}


\maketitle

\begin{abstract}
  Spark is a new promising platform for scalable data-parallel
  computation. It provides several high-level application programming
  interfaces (APIs) to perform parallel data aggregation. 
  Since execution of parallel aggregation in Spark is inherently non-deterministic, a
  natural requirement for Spark programs is to give the
  same result for any execution on the same data set.
  We present {\specspark}, an~executable formal
  Haskell specification  for Spark aggregate combinators.
  Our specification allows us to deduce the precise condition
  for deterministic outcomes from Spark aggregation.
  We
  report~case studies analyzing deterministic outcomes and
  correctness~of Spark programs.
\end{abstract}


\vspace{-7.5mm}
\section{Introduction} \label{section:introduction}
\vspace{-2.0mm}
\enlargethispage{2mm}

\newcommand{\comment}[1]{}

Spark~\cite{spark12,spark-github,sparkCACM} is a~popular platform for scalable distributed data-parallel computation based on a flexible programming environment with concise and high-level APIs.
Spark is by many considered as the successor of MapReduce~\cite{DG10,SLF13}.
Despite its fame, the precursory computational model of MapReduce suffers from I/O congestion and limited programming support for distributed problem solving.
Notably, Spark has the following advantages over MapReduce.
First, it has high performance due to distributed, cached, and in-memory computation. 
Second, the platform adopts a relaxed fault tolerant model where sub-results are recomputed upon faults rather than aggressively stored.
Third, lazy evaluation semantics is used to avoid unnecessary computation. Finally, Spark offers greater programming flexibility through its powerful APIs founded in functional programming. 
Spark also owes its popularity to a unified framework for efficient graph, streaming, and SQL-based relational database computation, a machine learning library, and the support of multiple distributed data storage formats.
Spark is one of the most active open-source projects with over 1000 contributors~\cite{spark-github}.

\hide{
The principal abstraction for data-parallel computation in Spark is called a~Resilient Distributed Dataset (RDD).
An RDD represents a {read-only} collection of data items {partitioned} and stored distributively. The RDD abstraction gives a unified interface to various data storage formats. It also isolates the programmer from the low-level details of reliable distributed computation, such as task management, data distribution, and fault tolerance.
}

In a typical Spark program, a sequence of transformations followed by an action are performed on Resilient Distributed Datasets (RDDs). An RDD is the principal abstraction for data-parallel computation in Spark. It represents a {read-only} collection of data items {partitioned} and stored distributively. RDD operations such as \hscode{map}, \hscode{reduce}, and \hscode{aggregate} are called \emph{combinators}.
They generate and aggregate data in RDDs to carry out Spark computation. 
For instance, the \hscode{aggregate}
combinator takes user-defined functions $\seq$ and $\comb$: $\seq$ accumulates a sub-result for each partition while $\comb$ merges sub-results across different partitions.
Spark also provides a family of {aggregate} combinators for common data structures such as pairs and graphs.
In Spark computation, data aggregation is ubiquitous.

Programming in Spark, however, can be tricky. 
Since sub-results are computed using multiple applications of $\seq$ and $\comb$ across partitions concurrently, the order of their applications varies on different executions.
Because of indefinite orders of computation, aggregation in Spark is inherently {\it non-deterministic}.
A Spark program may produce different outcomes for the same input on different runs. 
This form of non-deterministic computation has other side effects.
For instance, the private function \hscode{AreaUnderCurve.of} in the Spark
machine learning library computes numerical integration distributively; it
exhibits numerical instability due to non-deterministic computation. Consider
the integral of $x^{73}$ on the interval $[-2, 2]$. Since $x^{73}$ is an odd
function, the integral is~$0$. In our experiments, \hscode{AreaUnderCurve.of}
returns different results ranging from $-8192.0$ to $12288.0$ on the same input
because of different orders of floating-point computation.
To ensure deterministic outcomes, programmers must carefully develop their programs to adhere to Spark requirements.

Unfortunately, Spark's documentation does not specify the requirements formally.
It only describes informal algebraic properties about combinators to ensure correctness.
The documentation provides little help to a programmer in understanding the complex, and sometimes unexpected, interaction between $\seq$ and $\comb$,
especially when these two are functions over more complex domains, e.g.\ lists or trees.
Inspecting the Spark implementation is a laborious job since
public combinators are built by composing a long chain of generic private combinators---determining
the execution semantics from the complex implementation is hard. 
Moreover, Spark is continuously evolving and the implementation semantics may change significantly across releases.
We therefore believe that a formal specification of Spark combinators is necessary to help developers understand the program semantics better,
clarify hidden assumptions about RDDs, and help to reason about correctness and sources of non-determinism in Spark programs.


\enlargethispage{2mm}

Building a formal specification for Spark is far from straightforward.
Spark is implemented in Scala and provides high-level APIs also in Python and Java.
Because Spark heavily exploits various language features of Scala, 
it is hard to derive specifications 
without formalizing the operational semantics of the Scala language, which is not an easy task by itself.
Instead of that, we have developed a~Haskell library {\specspark}~\cite{purespark}, which
for each key Spark combinator provides an abstract sequential functional specification in Haskell.
We use Haskell as a specification language for two reasons. 
First, the core of Haskell has strong formal foundations in $\lambda$-calculus. 
Second, program evaluation in Haskell, like in Scala, is lazy, which admits faithful modeling of Spark aggregation. 
Through the use of Haskell we obtain a concise formal functional model for Spark combinators without formalizing Scala.

An important goal of our specification is to make non-determinism in various combinators explicit. 
Spark developers can inspect it to identify sources of non-determinism when program executions yield unexpected outputs.
Researchers can also use it to understand distributed Spark aggregation
and investigate its computational pattern.
Our specification is also {\it executable}. A programmer can use the Haskell APIs to implement data-parallel programs, test them on different input RDDs, and verify correctness of outputs independent of the Spark programming environment.
In our case studies, we capture non-deterministic behaviors of real Spark programs by executing the corresponding {\specspark} specifications with crafted input data sets.
We also show that the sequential specification is useful in developing distributed Spark programs.

Our main contributions are summarized below:
%
\begin{itemize}
\setlength{\itemsep}{0.00mm}
\item We present formal, functional, sequential specifications for key Spark aggregate combinators. The {\specspark} specification consists of executable library APIs. It can assist Spark program development by mimicking data-parallel programming in conventional environments.
\item Based on the specification, 
we investigate and identify necessary and sufficient conditions for Spark aggregate combinators to produce deterministic outcomes for general and pair RDDs.
\item Our specification allows to deduce the precise condition
  for deterministic outcomes from Spark aggregation.
\item We perform a series of case studies on practical Spark programs to validate our formalization.
With {\specspark}, we find instances of numerical instability in the Spark machine learning library.
\item Up to our knowledge, this is the first work to provide a formal, functional specification
of key Spark aggregate combinators for data-parallel computation.
\end{itemize}

\comment{
— Spark platform is an upcoming platform for distributed, data-parallel computation.
(a flexible, scalable distributed compute platform with concise, powerful APIs and higher-order tools)
reasons: in-memory caching of results while simultaneously resilient (avoid intermediate dumps to disk), higher-order functional-programming style API  allows mapping high-level algorithms to  (composition) API calls, concise programs, with low effort. Multiple computation models unified in a single framework — from simple map-reduce computation to graph and stream computation.  Gains from Scala programming patterns (functional, chaining, pattern-matching, anonymous functions, type-omitting?) — moreover exposed as dynamically typed python interfaces. (compare complex class hierarchy based Map-reduce programming in Java)
Thus, Spark is becoming quite popular~\cite{spark-papers-rdd, GTA, loops2seq-comb-synthesis}

They key data structure used by Spark is RDD - abstracts away details of distributed data on clusters while providing familiar, \hscode{map}/\hscode{reduce}/\hscode{aggregate} APIs.
RDD is a collection of lists. Given input list L, transforming it to RDD involves (non-overlapping) partitioning L into n parts, (and then potentially distributing the n parts to different nodes)
Fine-grained ‘combinators’ over RDD for encoding algorithms. Provides 3 main API constructs, defined over two binary operators, seq, comb:
1. map: RDD ‘a -> RDD ‘b  (2) aggregate (`seq, `comb) : RDD * `seq `a -> `comb `b -> RDD `b (3) reduce — seq and comb have same types.
(filter, persist)
Multiple implementations: PairRDD (each partition is list of key-value pairs), DoubleRDD, SequenceFileRDD, VertexRDD, EdgeRDD,

The central combinator is aggregate.  Aggregate is implemented over RDD as follows: apply seq iteratively to each RDD partition p to compute a single value r_p. Now apply comb to each r_p (starting with zero). 
Because the input data may be arbitrarily partitioned, seq and comb may be applied to data in different orders for different runs of the same Spark program.
This may lead to non-deterministic end results, which we want to avoid. 
This brings us to the problem of this paper: when are Spark programs deterministic?

Why should we formalize?
— formal semantics of Spark executions is not known. Embedded in code. Makes it hard to argue for correctness/determinism of Spark programs.
— can help improve design. find more abstractions at the right places.
<example non-deterministic Spark code>: it is easy to assume determinism and not know.
Why do we want determinism? Why should we check for it?
— reproducibility of result
— developer may implicitly assume determinism when encoding the algorithm and forgets to check for it.

This paper:
— provide formal semantics of Spark: formalize various combinators in Spark in a uniform functional programming framework (Haskell). show equivalence with Spark implementation via exhaustive ‘differential’ testing.
— give formal conditions for determinism of these programs.
— give checkers for checking determinism (via reduction to assertion verification of HO haskell programs)
— case studies on a variety of practical Spark programs. Using our formalism to show determinism enables us to find CEs quickly.

Related work:
- Sen, Necula — determinism of concurrent programs.
- Utah prof - differential testing of compilers?
- differential testing: http://www.cse.chalmers.se/~palka/lic/palka_lic_print.pdf

}


%

\vspace{-3.0mm}
\section{Preliminaries} \label{section:preliminaries}
\vspace{-3.0mm}
\enlargethispage{2mm}

Let $A$ be a non-empty set and $\odot : A \times A \to A$ be a~function.
An element $i \in A$ is the \emph{identity} of $\odot$ if for every $a \in A$, it holds that 
$a = i \odot a = a \odot i$.
The function $\odot$ is \emph{associative} if for every $a, a', a'' \in A$,
$a \odot (a' \odot a'') = (a
\odot a') \odot a''$;
$\odot$ is \emph{commutative} if for every $a, a' \in A$, 
$a \odot a' = a' \odot a$.
The algebraic structure $(A, \odot)$ is a~\emph{semigroup} if $\odot$ is associative.
A \emph{monoid} is a~structure $(A, \odot, \bot)$ such that $(A, \odot)$ is a~semigroup and $\bot \in A$ is the identity of $\odot$.
The semigroup $(A, \odot)$ and monoid $(A, \odot, \bot)$ are commutative if $\odot$ is commutative.
\hide{
Observe that for all $A$, the structure $(\listof{A}, \cat, [~])$ is a~monoid.
A~function $h: \listof{A} \to B$ is a~\emph{list homomorphism} to
the monoid $(B, \odot, \bot)$ if
\vspace{-1ex}
\begin{align}
h([~])          &= \bot \\
h(\xs \cat \ys) &= h(\xs) \odot h(\ys).
\end{align}
}

Haskell is a strongly typed purely functional programming
language. Similar to Scala, Haskell programs are lazily
evaluated. 
We use several widely used Haskell functions (Figure~\ref{figure:basic-functions}).
\begin{figure}[t]
  \centering
  \begin{minipage}{.46\linewidth}
\begin{lstlisting}[language=Haskell,basicstyle=\sffamily\footnotesize,morekeywords={reducel,filterkey}]
fst :: ($\alpha$, $\beta$) $\rightarrow$ $\alpha$
fst (x, _) = x

null :: [$\alpha$] $\rightarrow$ Bool
null [] = True
null (x:xs) = False

($\cat$) :: [$\alpha$] $\rightarrow$ [$\alpha$] $\rightarrow$ [$\alpha$]
[] $\cat$ ys = ys
x:xs $\cat$ ys = x:(xs $\cat$ ys)

reducel :: ($\alpha$$\rightarrow$$\alpha$$\rightarrow$$\alpha$)$\rightarrow$[$\alpha$]$\rightarrow$$\alpha$
reducel h (x:xs) = foldl h x xs

concat :: [[$\alpha$]] $\rightarrow$ [$\alpha$]
concat [] = []
concat (xs:xss) = xs $\cat$ (concat xss)

lookup :: $\alpha$ $\rightarrow$ [($\alpha$, $\beta$)] $\rightarrow$ Maybe $\beta$
lookup k [] = Nothing
lookup k ((x, y):xys) = if k == x
       then Just y else lookup k xys

\end{lstlisting}
  \end{minipage}
  \hfill
  \begin{minipage}{.50\linewidth}
\begin{lstlisting}[language=Haskell,basicstyle=\sffamily\footnotesize,morekeywords={reducel,filterkey}]
snd :: ($\alpha$, $\beta$) $\rightarrow$ $\beta$
snd (_, y) = y

elem :: $\alpha$ $\rightarrow$ [$\alpha$] $\rightarrow$ Bool
elem x [] = False
elem x (y:ys) = x==y || elem x ys

map :: ($\alpha$ $\rightarrow$ $\beta$) $\rightarrow$ [$\alpha$] $\rightarrow$ [$\beta$]
map f [] = []
map f (x:xs) = (f x):(map f xs)

foldl :: ($\beta$$\rightarrow$$\alpha$$\rightarrow$$\beta$)$\rightarrow$$\beta$$\rightarrow$[$\alpha$]$\rightarrow$$\beta$
foldl h z [] = z
foldl h z (x:xs) = foldl h (h z x) xs

concatMap :: ($\alpha$ $\rightarrow$ [$\beta$]) $\rightarrow$ [$\alpha$] $\rightarrow$ [$\beta$]
concatMap xs = concat (map f xs)

filter :: ($\alpha$ $\rightarrow$ Bool) $\rightarrow$ [$\alpha$] $\rightarrow$ [$\alpha$]
filter p [] = []
filter p (x:xs) = if p x
     then x:(filter p xs) else filter p xs
\end{lstlisting}
  \end{minipage}
%
  \vspace{-4mm}
  \caption{Basic functions}
  \label{figure:basic-functions}
  \vspace{-4.5mm}
\end{figure}
\hscode{\kw{fst}} and \hscode{\kw{snd}} are projections on pairs. 
\hscode{\kw{null}} tests whether a list is empty. 
\hscode{\kw{elem}} is the membership function for lists; its infix notation
is often used, as in  \hscode{0 `\kw{elem}`~[]}. 
\hscode{($\cat$)} concatenates two lists; it is used as an infix
operator, as in \hscode{[\kw{False}] $\cat$ [\kw{True}]}. 
\hscode{\kw{map}} applies a function to elements of a list.  
\hscode{\kw{reducel}} merges elements of a list by a given binary function
from left to right.  
\hscode{\kw{foldl}} accumulates by applying a function to elements
of a list iteratively, also from left to right.  
\hscode{\kw{concat}} concatenates elements in a list. 
\hscode{\kw{concatMap}} applies a function to elements of a list and
concatenates the results. 
\hscode{\kw{lookup}} finds the value of a key in a list of pairs.
\hscode{\kw{filter}} selects elements from a list by a predicate.

\enlargethispage{2mm}

In order to formalize non-determinism in distributed aggregation, we
define the following non-deterministic shuffle function for lists:
\vspace{-1mm}
\begin{lstlisting}[language=Haskell,basicstyle=\sffamily\footnotesize,morekeywords={shuffle},escapechar=@]
shuffle! :: [$\alpha$] $\rightarrow$ [$\alpha$]
shuffle! xs = ... @~~~~~~~~~@ -- shuffle xs randomly
\end{lstlisting}
\vspace{-1mm}
A~random monad can be used to define random shuffling. Instead
of explicit monadic notation, we introduce the \emph{chaotic}
\hscode{\kw{shuffle}!} function in our presentation for the sake of brevity. Thus,
\hscode{\kw{shuffle}! [0, 1, 2]} evaluates to one of the six possible lists
\hscode{[0, 1, 2]}, \hscode{[0, 2, 1]}, \hscode{[1, 0, 2]}
\hscode{[1, 2, 0]}, \hscode{[2, 0, 1]}, or \hscode{[2, 1, 0]}
randomly. Using
\hscode{\kw{shuffle}!}, more chaotic functions are defined.

\noindent
\begin{minipage}{0.49\linewidth}
\begin{lstlisting}[language=Haskell,basicstyle=\sffamily\footnotesize,morekeywords={shuffle}]
map! :: ($\alpha$ $\rightarrow$ $\beta$) $\rightarrow$ [$\alpha$] $\rightarrow$ [$\beta$]
map! f xs = shuffle! (map f xs)
\end{lstlisting}
\end{minipage}
\begin{minipage}{0.49\linewidth}
\begin{lstlisting}[language=Haskell,basicstyle=\sffamily\footnotesize,morekeywords={shuffle}]
concatMap! :: ($\alpha$ $\rightarrow$ [$\beta$]) $\rightarrow$ [$\alpha$] $\rightarrow$ [$\beta$]
concatMap! f xs = concat (map! f xs)
\end{lstlisting}
\end{minipage}

\noindent
Chaotic \hscode{\kw{map}!} shuffles the result of \hscode{\kw{map}} randomly,
\hscode{\kw{concatMap}!} concatenates the 
shuffled result of \hscode{\kw{map}}. For instance, \hscode{\kw{map}! \kw{even} [0, 1]} 
evaluates to \hscode{[\kw{False}, \kw{True}]} or \hscode{[\kw{True}, \kw{False}]};
\hscode{\kw{concatMap}! \kw{fact} [2, 3]} evaluates to \hscode{[1, 2, 1, 3]} 
or \hscode{[1, 3, 1, 2]} where \hscode{\kw{fact}} computes
a sorted list of factors (note that the two sub-sequences \hscode{[1,2]} and \hscode{[1,3]} are kept intact).
\hide{
e.g., \hscode{fact 6 == [1, 2, 3, 6]}.}

\vspace{-1mm}
\begin{lstlisting}[language=Haskell,basicstyle=\sffamily\footnotesize,morekeywords={shuffle,repartition},escapechar=@]
repartition! :: [$\alpha$] $\rightarrow$ [[$\alpha$]]
repartition! xs = let ys = shuffle! xs ... 
                $\,$in yss  @~~~~~~~~~@ -- ys == concat yss
\end{lstlisting}
\vspace{-1mm}
The function \hscode{\kw{repartition}!}
shuffles a given list and partitions the shuffled list into several
non-empty lists.
For instance, \hscode{\kw{repartition}! [0, 1]} results in \hscode{[[0], [1]]},
\hscode{[[1], [0]]}, \hscode{[[0, 1]]}, or \hscode{[[1, 0]]}.
The chaotic function can be implemented by a~random monad easily;
its precise definition is omitted here.


\vspace{-3.0mm}
\section{Spark Aggregation} \label{section:data-parallel-computation}
\vspace{-2.0mm}
\emph{Resilient Distributed Datasets (RDDs)} are the basic data abstraction
in Spark. An RDD is a collection of partitions of  
immutable data; data in different partitions can be processed concurrently.
We formalize partitions by lists, and RDDs by
lists of partitions.

\noindent
\begin{minipage}{0.49\linewidth}
\begin{lstlisting}[language=Haskell,basicstyle=\sffamily\footnotesize]
type Partition $\alpha$ = [$\alpha$]
\end{lstlisting}
\end{minipage}
\begin{minipage}{0.49\linewidth}
\begin{lstlisting}[language=Haskell,basicstyle=\sffamily\footnotesize]
type RDD $\alpha$ = [Partition $\alpha$]
\end{lstlisting}
\end{minipage}

The Spark \hscode{aggregate} combinator computes \emph{sub-results} of every
partitions in an RDD, and returns the aggregated result by combining
sub-results.
\vspace{-1mm}
\begin{lstlisting}[language=Haskell,basicstyle=\sffamily\footnotesize,deletekeywords={seq},morekeywords={aggregate}]
aggregate :: $\beta$ $\rightarrow$ ($\beta$ $\rightarrow$ $\alpha$ $\rightarrow$ $\beta$) $\rightarrow$ ($\beta$ $\rightarrow$ $\beta$ $\rightarrow$ $\beta$) $\rightarrow$ RDD $\alpha$ $\rightarrow$ $\beta$
aggregate z seq comb rdd = let presults = map! (foldl seq z) rdd
                           $\,$in foldl comb z presults
\end{lstlisting}
\vspace{-1mm}
More concretely, let \hscode{z} be a default aggregated value. 
\hscode{\kw{aggregate}} applies \hscode{\kw{foldl} seq z} to every
partition of \hscode{rdd}. Hence the sub-result of each partition is
accumulated by folding elements in the partition with 
\hscode{seq}. The combinator then combines
sub-results by another folding using \hscode{comb}.

Note that the chaotic \hscode{\kw{map}!} function is used to model
non-deterministic interleavings of sub-results.
To exploit concurrency, Spark creates a task to
compute the sub-result for each partition. These tasks are executed
concurrently and hence
induce non-deterministic computation. We use the chaotic \hscode{\kw{map}!}
function to designate non-de\-ter\-min\-is\-m explicitly. 

A related combinator is \hscode{\kw{reduce}}. 
Instead of \hscode{\kw{foldl}}, the combinator uses
\hscode{\kw{reducel}} to aggregate data in an RDD.
\vspace{-1mm}
\begin{lstlisting}[language=Haskell,basicstyle=\sffamily\footnotesize,morekeywords={reducel,reduce}]
reduce :: ($\alpha$ $\rightarrow$ $\alpha$ $\rightarrow$ $\alpha$) $\rightarrow$ RDD $\alpha$ $\rightarrow$ $\alpha$
reduce comb rdd = let presults = map! (reducel comb) rdd
                   in reducel comb presults
\end{lstlisting}
\vspace{-1mm}
Similar to the \hscode{\kw{aggregate}} combinator, \hscode{\kw{reduce}} computes
sub-results concurrently. The chaotic \hscode{\kw{map}!} function
is again used to model non-deterministic computation.

Sub-results of different partitions are computed in parallel,
but the \hscode{\kw{aggregate}} combinator still combines sub-results
sequentially. 
This can be further parallelized.
Observe that several sub-results may be available simultaneously
from distributed computation. 
The Spark \hscode{\kw{treeAggregate}} combinator 
applies \hscode{comb} to pairs of sub-results concurrently until
the final result is obtained.
In addition to concurrent computation of sub-results, \hscode{\kw{treeAggregate}}
also combines sub-results from different partitions in
parallel. 

In our specification, two chaotic functions are used to model
non-deterministic computation on two different levels. The
\hscode{\kw{map}!} function models non-determinism in computing
sub-results of partitions. The \hscode{\kw{apply}!} function (introduced below) models
concurrent combination of sub-results from different partitions. It 
combines two consecutive sub-results picked chaotically, and 
repeats such chaotic combinations until
the final result is obtained. Observe 
that the computation has a binary-tree structure
with \hscode{comb} as internal nodes and sub-results from different partitions as leaves. 
\vspace{-1mm}
\begin{lstlisting}[language=Haskell,basicstyle=\sffamily\footnotesize,morekeywords={divide,apply,treeAggregate},deletekeywords={seq},escapechar=@]
apply! :: ($\beta$ $\rightarrow$ $\beta$ $\rightarrow$ $\beta$) $\rightarrow$ [$\beta$] $\rightarrow$ $\beta$
apply! comb [r] = r
apply! comb [r, r'] = comb r r'
apply! comb rs = let (ls', l', r', rs') = ... @~~~~~~~~~@ -- rs == ls' $\cat$ [l', r'] $\cat$ rs'
  in apply! comb (ls' $\cat$ [comb l' r'] $\cat$ rs')

treeAggregate:: $\beta$ $\rightarrow$ ($\beta$$\rightarrow$$\alpha$$\rightarrow$$\beta$) $\rightarrow$ ($\beta$$\rightarrow$$\beta$$\rightarrow$$\beta$) $\rightarrow$ RDD $\alpha$ $\rightarrow$ $\beta$
treeAggregate z seq comb rdd = let presults = map! (foldl seq z) rdd
  in apply! comb presults
\end{lstlisting}
\vspace{-1mm}

The \hscode{\kw{treeReduce}} combinator optimizes \hscode{\kw{reduce}} by
combining sub-results in parallel. 
Similar to \hscode{\kw{treeAggregate}}, two levels of
non-deterministic computation can occur.
\vspace{-1mm}
\begin{lstlisting}[language=Haskell,basicstyle=\sffamily\footnotesize,morekeywords={reducel,treeReduce,apply}]
treeReduce :: ($\alpha$ $\rightarrow$ $\alpha$ $\rightarrow$ $\alpha$) $\rightarrow$ RDD $\alpha$ $\rightarrow$ $\alpha$
treeReduce comb rdd = let presults = map! (reducel comb) rdd
                       in apply! comb presults
\end{lstlisting}
\vspace{-1mm}

\vspace{-3mm}
\paragraph{Pair RDDs.}

Key-value pairs are widely used in data parallel computation. If the data type
of an RDD is a~pair, we say that the 
RDD is a~\emph{pair} RDD. The first and second elements in a pair are
called the \emph{key} and the \emph{value} of the pair respectively.
\vspace{-1mm}
\begin{lstlisting}[language=Haskell,basicstyle=\sffamily\footnotesize]
type PairRDD $\alpha$ $\beta$ = RDD ($\alpha$, $\beta$)
\end{lstlisting}
\vspace{-1mm}
In a pair RDD, different pairs can have the same key. Spark provides
combinators to aggregate values associated with the same key. The
\hscode{\kw{aggregateByKey}} combinator returns an RDD 
by aggregating values associated with the same key.
We use the following functions to formalize \hscode{\kw{aggregateByKey}}:

\noindent
\begin{minipage}{0.49\linewidth}
\begin{lstlisting}[language=Haskell,basicstyle=\sffamily\footnotesize,morekeywords={hasKey}]
hasKey :: $\alpha$ $\rightarrow$ Partition ($\alpha$, $\beta$) $\rightarrow$ Bool
hasKey k ps = case (lookup k ps) of
  Just _  $\rightarrow$ True   
  Nothing $\rightarrow$ False
\end{lstlisting}
\end{minipage}
\begin{minipage}{0.49\linewidth}
\begin{lstlisting}[language=Haskell,basicstyle=\sffamily\footnotesize,morekeywords={hasValue}]
hasValue :: $\alpha$ $\rightarrow$ $\beta$ $\rightarrow$ Partition ($\alpha$, $\beta$) $\rightarrow$ $\beta$
hasValue k val ps = case (lookup k ps) of
  Just v $\rightarrow$ v
  Nothing $\rightarrow$ val
\end{lstlisting}
\end{minipage}

\vspace{-1mm}
\begin{lstlisting}[language=Haskell,basicstyle=\sffamily\footnotesize,morekeywords={addTo}]
addTo :: $\alpha$ $\rightarrow$ $\beta$ $\rightarrow$ Partition ($\alpha$, $\beta$) $\rightarrow$ Partition ($\alpha$, $\beta$)
addTo key val ps = foldl ($\lambda$r (k, v) $\rightarrow$ if key == k then r else (k, v):r) [(key, val)] ps
\end{lstlisting}

\noindent
The expression \hscode{\kw{hasKey} k ps} checks if
\hscode{key} appears in a partition of pairs.
\hscode{\kw{hasValue} k val ps} finds a value associated
with \hscode{key} in a partition of pairs. It evaluates to the
default value \hscode{val} if \hscode{key} does not appear in the
partition.
The expression \hscode{\kw{addTo} key val ps} adds the pair
\hscode{(key, val)} to the partition \hscode{ps}, and removes other pairs
with the same key. 

The \hscode{\kw{aggregateByKey}} combinator
first aggregates all pairs with the value \hscode{z} and the function
\hscode{mergeComb} in each partition. If values \hscode{vs} are
associated with the same key in a partition, the value
\hscode{\kw{foldl} mergeComb z vs} for the key is pre-aggregated. 
Since a~key may
appear in several partitions, all pre-aggregated values associated
with the key  across different partitions are merged
using \hscode{mergeValue}.
\begin{lstlisting}[language=Haskell,basicstyle=\sffamily\footnotesize,morekeywords={repartition,aggregateByKey,addTo}]
aggregateByKey :: $\gamma$ $\rightarrow$ ($\gamma$ $\rightarrow$ $\beta$ $\rightarrow$ $\gamma$) $\rightarrow$ ($\gamma$ $\rightarrow$ $\gamma$ $\rightarrow$ $\gamma$) $\rightarrow$ PairRDD $\alpha$ $\beta$ $\rightarrow$ PairRDD $\alpha$ $\gamma$
aggregateByKey z mergeComb mergeValue pairRdd =
  let mergeBy fun left (k, v) = addTo k (fun (hasValue k z left) v) left
     preAgg = concatMap! (foldl (mergeBy mergeComb) []) pairRdd
  in repartition! (foldl (mergeBy mergeValue) [] preAgg)
\end{lstlisting}
In the specification, we accumulate values associated with the same key by
\hscode{mergeComb} in each partition, keeping a~list of pairs of a~key and the partially aggregated
value for the key.
Since accumulation in different
partitions runs in parallel, the chaotic \hscode{\kw{concatMap}!} function
is used to model such non-deterministic computation. After all
partitions finish their accumulation, \hscode{mergeValue} merges
values associated with the same key across different partitions. The
final pair RDD can have a default or user-defined partitioning.
Since a user-defined partitioning may shuffle a pair RDD
arbitrarily, it is in our specification modeled by the chaotic \hscode{\kw{repartition}!}
function.

\enlargethispage{2mm}

Pair RDDs have a combinator corresponding to \hscode{\kw{reduce}} called
\hscode{\kw{reduceByKey}}. \hscode{\kw{reduceByKey}} merges
all values associated with a key by \hscode{mergeValue}, following a
similar computational pattern as
\hscode{\kw{aggregateByKey}}. Note that every key is associated with at most
one value in resultant pair RDDs of
\hscode{\kw{aggregateByKey}} or \hscode{\kw{reduceByKey}}.
\vspace{-1mm}
\begin{lstlisting}[language=Haskell,basicstyle=\sffamily\footnotesize,morekeywords={repartition,reduceByKey,addTo}]
reduceByKey :: ($\beta$ $\rightarrow$ $\beta$ $\rightarrow$ $\beta$) $\rightarrow$ PairRDD $\alpha$ $\beta$ $\rightarrow$ PairRDD $\alpha$ $\beta$
reduceByKey mergeValue pairRdd =
  let merge left (k, v) = case lookup k left of Just v' $\rightarrow$ addTo k (mergeValue v' v) left
                                             Nothing $\rightarrow$ addTo k v left
     preAgg = concatMap! (foldl merge []) pairRdd
  in repartition! (foldl merge [] preAgg)
\end{lstlisting}
\vspace{-1mm}

\noindent
Spark also provides a library, called GraphX, for a~distributed analysis of graphs.
See App.~\ref{app:graph_rdds} for a~formalization of some of its key functions.

\hide{
Using RDDs, Spark provides a framework to analyze graphs distributively.
In the Spark GraphX library, each vertex in a graph is designated by a
\hscode{VertexId}, and associated with a~vertex attribute.
Each edge on the other hand is represented by \hscode{VertexId}s of its source and destination
vertices. An edge is also associated with an edge attribute.
\vspace{-1mm}
\begin{lstlisting}[language=Haskell,basicstyle=\sffamily\footnotesize]
type VertexId = Int
type VertexRDD $\alpha$ = PairRDD VertexId $\alpha$
type EdgeRDD $\beta$ = RDD (VertexId, VertexId, $\beta$)
data GraphRDD $\alpha$ $\beta$ = Graph { vertexRdd :: VertexRDD $\alpha$, edgeRdd :: EdgeRDD $\beta$ }
\end{lstlisting}
\vspace{-1mm}
Let \hscode{graphRdd} be a graph RDD. Its vertex RDD \hscode{(vertexRdd
 graphRdd)} contains pairs of vertex identifiers and
attributes. Different from conventional pair RDDs, each vertex
identifier can appear at most once in the vertex RDD since a vertex is
associated with exactly one attribute. If, for
instance, two pairs with the same vertex identifier are generated during
computation, their associated attributes must be merged
to obtain a~valid vertex RDD. The edge RDD \hscode{(edgeRdd graphRdd)} 
consists of triples of source and destination vertex identifiers, and 
edge attributes. Multi-edged directed graphs are allowed.
In a graph RDD, the vertex and edge RDDs need to be
consistent. That is, the source and destination vertex identifiers of
any edge from the edge RDD must appear in the vertex RDD of the graph RDD.

\enlargethispage{2mm}

The Spark GraphX library provides aggregate combinators for
graph RDDs. We begin with an informal description of a slightly
more general 
\hscode{aggregateMessagesWithActiveSet}
combinator~(Algorithm~\ref{algorithm:pseudo-code-aggregateMessages}).
The combinator takes functions
\hscode{sendMsg} and \hscode{mergeMsg}, and a list
\hscode{active} of vertices as its parameters. 
The list \hscode{active} determines \emph{active} edges, that is,
edges with source or destination vertex identifiers in
\hscode{active}. For each active edge, the function
\hscode{aggregateMessagesWithActiveSet} invokes \hscode{sendMsg} to
send messages to its vertices. Messages sent to each vertex are merged
by \hscode{mergeMsg}. Since a vertex is associated with at most one
message after merging, the result is a~valid vertex RDD.

\vspace{-1ex}
\begin{algorithm}[H]
\footnotesize
{

  \ForEach{active edge $e$}
  {
    call \textsf{sendMsg} on $e$ to send messages to vertices of $e$\;
  }
  \ForEach{vertex $v$ receiving messages}
  {
    call \textsf{mergeMsg} to merge all messages sent to $v$\;
  }
  \Return a vertex RDD with merged messages\;
}
  \vspace{1mm}
  \label{algorithm:pseudo-code-aggregateMessages}
  \caption{aggregateMessagesWithActiveSet}
\end{algorithm}

Formally, the function \hscode{sendMsg} accepts source and destination 
vertex identifiers, attributes of the vertices, and the edge attribute of an
edge as inputs. It sends messages to the source or destination vertex, 
both, or none.
In our specification,
\hscode{lookup} is used to obtain vertex attributes from a vertex
RDD. We generate a pair RDD of vertex identifiers and messages by invoking \hscode{sendMsg} on every
active edge. The messages associated with the same vertex are
then merged by applying \hscode{reduceByKey} on the pair
RDD. The resultant vertex RDD contains merged messages as vertex
attributes. We call it a \emph{message} RDD for clarity.
Note that if a vertex from the input graph RDD does not receive
any message, it is not present in the output message RDD. The 
combinator \hscode{aggregateMessages} in the Spark GraphX
library is defined by \hscode{aggregateMessagesWithActiveSet}. 
It invokes \hscode{aggregateMessagesWithActiveSet} by passing the list of
all vertex identifiers as the \hscode{active} list.
The combinator effectively applies \hscode{sendMsg} to every 
edge in a graph RDD. 
\begin{lstlisting}[language=Haskell,basicstyle=\sffamily\footnotesize]
aggregateMessagesWithActiveSet ::
         (VertexId $\rightarrow$ $\alpha$ $\rightarrow$ VertexId $\rightarrow$ $\alpha$ $\rightarrow$ $\beta$ $\rightarrow$ [(VertexId, $\gamma$)]) 
         $\rightarrow$ ($\gamma$ $\rightarrow$ $\gamma$ $\rightarrow$ $\gamma$) $\rightarrow$ [VertexId] $\rightarrow$ GraphRDD $\alpha$ $\beta$ $\rightarrow$ VertexRDD $\gamma$
aggregateMessagesWithActiveSet sendMsg mergeMsg active graphRdd =
  let isActive (srcId, dstId, _) = srcId `elem` active || dstId `elem` active
     vAttrs = concat (vertexRdd graphRdd)
     f edge = if isActive edge then
                let (srcId, dstId, edgeAttr) = edge
                   srcAttr = fromJust (lookup srcId vAttrs)
                   dstAttr = fromJust (lookup dstId vAttrs)
                in sendMsg srcId srcAttr dstId dstAttr edgeAttr
              else []
     pairRdd = map (concatMap f) (edgeRdd graphRdd)
  in reduceByKey mergeMsg pairRdd

aggregateMessages :: (VertexId $\rightarrow$ $\alpha$ $\rightarrow$ VertexId $\rightarrow$ $\alpha$ $\rightarrow$ $\beta$ $\rightarrow$ [(VertexId, $\gamma$)])
         $\rightarrow$ ($\gamma$ $\rightarrow$ $\gamma$ $\rightarrow$ $\gamma$) $\rightarrow$ GraphRDD $\alpha$ $\beta$ $\rightarrow$ VertexRDD $\gamma$
aggregateMessages sendMsg mergeMsg graphRdd =
  let vertices = concatMap (map fst) (vertexRdd graphRdd)
  in aggregateMessagesWithActiveSet sendMsg mergeMsg vertices graphRdd 
\end{lstlisting}
\vspace{-3mm}

\enlargethispage{2mm}

Many graph algorithms perform fixed point computation. 
The Spark GraphX library hence provides a Pregel-like function to
apply \hscode{aggregateMessages} on a graph RDD
repetitively~\cite{MABDHLC:2010:PLSGP}. The Spark
\hscode{pregel} function takes four input parameters \hscode{initMsg}, \hscode{vprog},
\hscode{sendMsg}, and \hscode{mergeMsg}~(Algorithm~\ref{algorithm:pseudo-code-pregel}). 
At initialization, 
it updates vertex attributes of the graph RDD by invoking
\hscode{vprog} with the initial message \hscode{initMsg}. The
\hscode{pregel} function then calls \hscode{aggregateMessages} to
obtain a message RDD.
If a vertex 
receives a message, its attribute is updated by \hscode{vprog} with
the message. After updating vertex attributes, \hscode{pregel} obtains
a new message RDD by invoking
\hscode{aggregateMessagesWithActiveSet} with the active list equal to
message-receiving vertices. Subsequently, 
only edges connecting to such vertices can send new messages.  

\DecMargin{2mm}
\begin{algorithm}[H]
{\footnotesize
  \ForEach{vertex $v$ in G}
  {
    call \textsf{vprog} on $v$ with \textsf{initMsg} to obtain its
    initial vertex attribute\; 
  }
  msgRdd $\leftarrow$ call \textsf{aggregateMessages} on G\; 
  \While{msgRdd is not empty}
  {
    \ForEach{vertex $v$ with message $m$ in msgRdd}
    {
      call \textsf{vprog} on $v$ with $m$ to update its vertex attribute on G\;
    }
    msgRdd $\leftarrow$ call \textsf{aggregateMessagesWithActiveSet} 
    with \textsf{active} equal to the vertices in msgRdd\;
  }
  \Return G\;
}
  \vspace{-1mm}
  \caption{pregel}
  \label{algorithm:pseudo-code-pregel}
\end{algorithm}

\enlargethispage{2mm}

We use several auxiliary functions to specify the Spark
\hscode{pregel} function. Given a function computing an attribute from
a vertex identifier and an attribute, the auxiliary
function \hscode{mapVertexRDD} applies the function to every vertex in
a vertex RDD and obtains another vertex RDD with new attributes.
The \hscode{mapVertexRDD} function is used in \hscode{mapVertices} to update vertex attributes in
graph RDDs. Moreover, recall that \hscode{aggregateMessagesWithActiveSet}
returns a message RDD. The auxiliary function
\hscode{joinGraph} updates a graph RDD with messages in a message
RDD. For each vertex in the graph RDD, its attribute is joined with
the message in the message RDD. If there is no message, the
vertex attribute is left unchanged. The \hscode{pregel} function
sets up the initial graph RDD by \hscode{mapVertices}. It then
computes the initial message RDD by
\hscode{aggregateMessages}. In each iteration, a new graph RDD is
obtained by joining the graph RDD with a message
RDD. \hscode{aggregateMessagesWithActiveSet} is then invoked to compute 
a new message RDD for the next iteration. The \hscode{pregel} function
terminates when no more message is sent.
\vspace{-1mm}
\begin{lstlisting}[language=Haskell,basicstyle=\sffamily\footnotesize, columns=fullflexible]
mapVertexRDD :: (VertexId $\rightarrow$ $\alpha$ $\rightarrow$ $\beta$) $\rightarrow$ VertexRDD $\alpha$ $\rightarrow$ VertexRDD $\beta$
mapVertexRDD f vRdd = map (map ($\lambda$(i, attr) $\rightarrow$ (i, f i attr))) vRdd

mapVertices :: (VertexId $\rightarrow$ $\alpha$ $\rightarrow$ $\gamma$) $\rightarrow$ GraphRDD $\alpha$ $\beta$ $\rightarrow$ GraphRDD $\gamma$ $\beta$
mapVertices updater gRdd = Graph {
  vertexRdd = mapVertexRDD updater (vertexRdd gRdd),
  edgeRdd = edgeRdd gRdd }

joinGraph :: (VertexId $\rightarrow$ $\alpha$ $\rightarrow$ $\gamma$ $\rightarrow$ $\alpha$) $\rightarrow$ GraphRDD $\alpha$ $\beta$
            $\rightarrow$ VertexRDD $\gamma$ $\rightarrow$ GraphRDD $\alpha$ $\beta$
joinGraph joiner gRdd msgRdd = let assoc = concat msgRdd
     updt i attr = case lookup i assoc of Just v  $\rightarrow$ joiner i attr v
                                       $\,$Nothing $\rightarrow$ attr
  in mapVertices updt gRdd

pregel :: $\gamma$ $\rightarrow$ (VertexId $\rightarrow$ $\alpha$ $\rightarrow$ $\gamma$ $\rightarrow$ $\alpha$) $\rightarrow$
        (VertexId $\rightarrow$ $\alpha$ $\rightarrow$ VertexId $\rightarrow$ $\alpha$ $\rightarrow$ $\beta$ $\rightarrow$ [(VertexId, $\gamma$)])
        $\rightarrow$ ($\gamma$ $\rightarrow$ $\gamma$ $\rightarrow$ $\gamma$) $\rightarrow$ GraphRDD $\alpha$ $\beta$ $\rightarrow$ GraphRDD $\alpha$ $\beta$
pregel initMsg vprog sendMsg mergeMsg graphRdd =
  let initG = let init_f i attr = vprog i attr initMsg 
             in mapVertices init_f graphRdd
     initMsgRdd = aggregateMessages sendMsg mergeMsg initG
     loop curG [] = curG
     loop curG msgRdd = let newG = joinGraph vprog curG msgRdd
          active = concatMap (map fst) msgRdd
          msgRdd' = aggregateMessagesWithActiveSet
                       sendMsg mergeMsg active newG
       in loop newG msgRdd'
  in loop initG initMsgRdd
\end{lstlisting}
\vspace{-1mm}

}

\vspace{-3.0mm}
\section{Deterministic Aggregation} \label{section:deterministic-aggregation}
\vspace{-2.0mm}

\newcommand{\propose}[1]{{\color{blue}New proposal: #1}}
\newcommand{\anotherpropose}[1]{{\color{blue}Yet new proposal: #1}}

Having deterministic outcomes is desired from all aggregation functions.
If a~function may return different values on different executions, the function is often not implemented correctly.
A program with explicit assumptions on the input data is also desirable.
Otherwise, the program may work correctly on certain data sets
but produce unexpected outcomes on others where implicit assumptions do not hold~\cite{xiao14mr}.
We now investigate conditions under which Spark aggregation
combinators always produce deterministic outcomes.
Proofs of the given lemmas can be found in App.~\ref{app:proofs}.
Proofs of some crucial lemmas have also been formalized using Agda~\cite{purespark}.


We first show how to deal with non-deterministic behaviors
in the \hscode{\kw{aggregate}} combinator.
Consider a variant of the formalization of
\hscode{\kw{aggregate}} from
Section~\ref{section:data-parallel-computation}:
\vspace{-1mm}
\begin{lstlisting}[language=Haskell,basicstyle=\sffamily\footnotesize,deletekeywords={seq},morekeywords={aggregate}]
aggregate'::$\beta$ $\rightarrow$ ($\beta$ $\rightarrow$ $\alpha$ $\rightarrow$ $\beta$) $\rightarrow$ ($\beta$ $\rightarrow$ $\beta$ $\rightarrow$ $\beta$) $\rightarrow$ RDD $\alpha$ $\rightarrow$ $\beta$
aggregate' z seq comb rdd = let presults = perm (map (foldl seq z) rdd)
  in foldl comb z presults
\end{lstlisting}
\vspace{-1mm}
Observe that we changed the application of the chaotic \hscode{\kw{map}!} function
with an application of the permutation \hscode{perm} after the regular \hscode{\kw{map}} function.
The function composition \hscode{perm( \kw{map} ...)} is a~concrete instantiation
of \hscode{\kw{map}!}, that is, a~function that permutes its list argument.
%
Notice that \hscode{perm} can be pushed inside \hscode{\kw{map}}:
%
\begin{lstlisting}[language=Haskell,basicstyle=\sffamily\footnotesize,morekeywords={permutations}]
perm (map f xs) == map f (perm xs).
\end{lstlisting}
\vspace{-1mm}
Assume that \hscode{rdd} was obtained from a~list~\hscode{xs} by splitting and permuting, that is,
\hscode{rdd == perm' (split xs)} where \hscode{split :: [$\alpha$] $\rightarrow$ [[$\alpha$]]} satisfies \hscode{xs == (concat . split) xs}.
We can therefore rewrite the computation of \hscode{presults} in \hscode{\kw{aggregate}'} to
\vspace{-1mm}
\begin{lstlisting}[language=Haskell,basicstyle=\sffamily\footnotesize,deletekeywords={seq, split}]
let pres = perm (map (foldl seq z) (perm' (split xs))),
\end{lstlisting}
\vspace{-1mm}
After pushing \hscode{perm} inside \hscode{map}, we obtain
\vspace{-1mm}
\begin{lstlisting}[language=Haskell,basicstyle=\sffamily\footnotesize,deletekeywords={seq, split}]
let pres = map (foldl seq z) ((perm . perm') (split xs)).
\end{lstlisting}
\vspace{-1mm}
Since \hscode{perm . perm'} is also a permutation \hscode{perm''}, we have
\vspace{-1mm}
\begin{lstlisting}[language=Haskell,basicstyle=\sffamily\footnotesize,deletekeywords={seq, split}]
let pres = map (foldl seq z) rdd'
\end{lstlisting}
\vspace{-1mm}
where \hscode{rdd'} is another RDD obtained from \hscode{xs} by splitting and shuffling.
Let us call (deterministic) instances of \hscode{\kw{repartition}!} as \emph{partitionings}. 
As a~consequence, we focus only on proving if calls to
\hscode{\kw{aggregate}$\dt$} defined below have deterministic outcomes
for different partitionings of a list into RDDs:
\vspace{-1mm}
\begin{lstlisting}[language=Haskell,basicstyle=\sffamily\footnotesize,deletekeywords={seq},morekeywords={aggregate}]
aggregate$\dt$:: $\beta$ $\rightarrow$ ($\beta$ $\rightarrow$ $\alpha$ $\rightarrow$ $\beta$) $\rightarrow$ ($\beta$ $\rightarrow$ $\beta$ $\rightarrow$ $\beta$) $\rightarrow$ RDD $\alpha$ $\rightarrow$ $\beta$
aggregate$\dt$ z seq comb rdd = let pres = map (foldl seq z) rdd
                            $\ $in foldl comb z pres
\end{lstlisting}
\vspace{-1mm}
Moreover, we define deterministic versions of \hscode{\kw{reduce}}
\vspace{-1mm}
\begin{lstlisting}[language=Haskell,basicstyle=\sffamily\footnotesize,morekeywords={reducel},morekeywords={reduce}]
reduce$\dt$ :: ($\alpha$ $\rightarrow$ $\alpha$ $\rightarrow$ $\alpha$) $\rightarrow$ RDD $\alpha$ $\rightarrow$ $\alpha$
reduce$\dt$ comb rdd = let presults = perm (map (reducel comb) rdd)
  in reducel comb presults
\end{lstlisting}
\vspace{-1mm}
and also \hscode{\kw{treeAggregate}$\dt$} and \hscode{\kw{treeReduce}$\dt$} in a~similar way.

In the following, given a~function \hscode{f} that takes an RDD as one of its
parameters and contains a~single occurrence of the chaotic \hscode{\kw{map}!}
(respectively~\hscode{\kw{concatMap}!}) function, we use \hscode{f$\dt$} to denote the
function obtained from \hscode{f} by replacing the chaotic \hscode{\kw{map}!}
(respectively~\hscode{\kw{concatMap}!}) with a~regular \hscode{\kw{map}}
(respectively~\hscode{\kw{concatMap}}).
A~similar reasoning 
can show that it suffices
to check whether calls to \hscode{f$\dt$} have deterministic outcomes for
different partitionings on a list into RDDs.

For better readability, standard mathematical notation
of functions is used in the rest of this section. We represent a~Haskell function application \hscode{f
x1 \dots\:
xn} as $f(x_1, \dots, x_n)$.

\vspace{-3.0mm}
\subsection{\textsf{aggregate}}\label{sec:aggregate}
\vspace{-2.0mm}

In this section, we give conditions for deterministic outcomes of calls to the
aggregate combinator $\kw{\agg}(\zero, \seq, \allowbreak \comboper, \rdd)$ for 
$\zero :: \beta$, $\seq :: \beta \times \alpha \to \beta$, 
$\comboper :: \beta \times \beta \to \beta$, and $\rdd :: \RDD\ \alpha$.
We first define what it means for calls to the $\agg$ combinator to have
deterministic outcomes.
\vspace{-1mm}
\begin{definition}\label{def:agg-det-out}
Calls to
$\kw{\agg}(\zero, \seq, \comboper, \rdd)$ have \emph{deterministic outcomes}
if
\vspace{-1mm}
\begin{equation}
  \kw{\agg}\dt(\zero, \seq, \comboper, \partit(\datalist)) = \kw{\foldl}(\seq, \zero, \datalist)
\vspace{-1mm}
\end{equation}
for all lists $\datalist$ and partitionings $\partit$.
\end{definition}

\enlargethispage{1mm}

Conventionally, $\agg$ is regarded as a parallelized counterpart of $\foldl$.
For example, the sequential $\agg$ function in the standard Scala library
ignores the $\comboper$ operator and is implemented by $\foldl$.
This is why we characterize deterministic $\kw{\agg}$ as $\foldl$ in Definition~\ref{def:agg-det-out}.
Our characterization, however, does not cover all $\kw{\agg}$ calls that always give the same outputs.
In particular, it does not cover an $\kw{\agg}$ call where $\comboper$ is a~constant function,
which is, however, quite suspicious in a~distributed data-parallel computation
and should be reported.

We give necessary and sufficient conditions for $\kw{\agg}$ calls to have
deterministic outcomes in several lemmas, culminating in Corollary~\ref{col:det-agg}.
The first lemma allows us to check only conditions on $\seq$ and $\comboper$
over all possible pairs of lists instead of enumerating all possible
partitionings on lists.
For brevity, we use $\foldlzof{p_1}$ for $\kw{\foldl}(\seq, \zero, p_1)$,
and
$\imgof{\foldl(\seq, \zero)}$ for the image of
$\foldl(\seq, \zero, \datalist)$ for any list $\datalist$.
That is, $\imgof{\foldl(\seq, \zero)} = \{y \mid \textmd{there is a list\,} 
\datalist \textmd{ such that } \foldl(\seq, \zero, \datalist) = y\}$.

\begin{restatable}{lemma}{homoIff}\label{lem:homoIff}
Calls to $\agg(\zero, \seq, \comboper, \rdd)$ have deterministic outcomes iff:
\vspace{-1ex}
\begin{enumerate}
  \item  $(\imgof{\foldl(\seq, \zero)}, \comboper, \zero)$ is a~commutative monoid, and
  \item for all lists $p_1, p_2 :: \listof{\alpha}, \foldlzof{p_1 \cat p_2} = \foldlzof{p_1} \comboper \foldlzof{p_2}$ .
\end{enumerate}
\end{restatable}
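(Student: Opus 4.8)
The plan is to prove both implications. Throughout I use that a partitioning $\partit(\datalist)$ returns an RDD $[q_1,\dots,q_m]$ whose concatenation $q_1\cat\dots\cat q_m$ is a permutation of $\datalist$, and that unfolding the definition gives $\agg\dt(\zero,\seq,\comboper,[q_1,\dots,q_m]) = \foldl(\comboper,\zero,[\foldlzof{q_1},\dots,\foldlzof{q_m}])$. The whole argument rests on choosing specific partitionings that isolate single applications of $\comboper$ to elements of $\imgof{\foldl(\seq,\zero)}$.

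For the ``only if'' direction I assume deterministic outcomes and derive the two conditions in the following order. First I would take the singleton partitioning $[\datalist]$, for which determinism gives $\zero\comboper\foldlzof{\datalist}=\foldlzof{\datalist}$; as every element of the image is some $\foldlzof{\datalist}$, this shows $\zero$ is a left identity. Next, for nonempty $p_1,p_2$ the partitioning $[p_1,p_2]$ of $p_1\cat p_2$ yields $(\zero\comboper\foldlzof{p_1})\comboper\foldlzof{p_2}=\foldlzof{p_1\cat p_2}$, and cancelling the left identity gives condition~2. Crucially, because a partitioning first shuffles, $[p_2,p_1]$ is also a partitioning of $p_1\cat p_2$; equating its outcome to the same $\foldlzof{p_1\cat p_2}$ and again cancelling $\zero$ forces $\foldlzof{p_1}\comboper\foldlzof{p_2}=\foldlzof{p_2}\comboper\foldlzof{p_1}$, i.e.\ commutativity on the image. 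Right identity then follows from commutativity and left identity, which extends condition~2 to the cases where $p_1$ or $p_2$ is empty; closure is immediate since condition~2 rewrites $\foldlzof{p}\comboper\foldlzof{q}$ as $\foldlzof{p\cat q}\in\imgof{\foldl(\seq,\zero)}$; and associativity on the image follows from condition~2 combined with associativity of $\cat$. Together these give the commutative monoid of condition~1.

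For the ``if'' direction I assume conditions~1 and~2 and evaluate $\agg\dt$ on an arbitrary $\partit(\datalist)=[q_1,\dots,q_m]$. Using that $\zero$ is the identity and $\comboper$ is associative, the left fold $\foldl(\comboper,\zero,[\foldlzof{q_1},\dots,\foldlzof{q_m}])$ collapses to the monoid product $\foldlzof{q_1}\comboper\dots\comboper\foldlzof{q_m}$, which by repeated use of condition~2 equals $\foldlzof{q_1\cat\dots\cat q_m}$. It then remains to show that $\foldlzof{}$ depends only on the multiset of list elements: decomposing any list through condition~2 into a $\comboper$-product of singleton folds $\seq(\zero,x_i)$ and appealing to commutativity and associativity shows $\foldlzof{}$ is invariant under permutation. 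Since $q_1\cat\dots\cat q_m$ is a permutation of $\datalist$, the value equals $\foldlzof{\datalist}=\foldl(\seq,\zero,\datalist)$, establishing determinism.

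I expect the main obstacle to be commutativity, which is exactly the place where the shuffling inside a partitioning is essential (it lets partition order vary freely), together with the careful bookkeeping of the identity laws. The most delicate point is the empty list and the behaviour of $\comboper$ at $\zero$: since partitions are required nonempty, no ``ordinary'' partitioning ever applies $\comboper$ to $\zero\comboper\zero$ nor realises condition~2 with an empty argument, so these base cases---and the identity law evaluated at $\zero$ itself---must be secured separately, by admitting the degenerate partitioning of the empty list (a single empty partition) or by a dedicated base-case analysis.
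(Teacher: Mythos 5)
Your proof is correct and, in the ``only if'' direction, essentially the paper's own argument: the paper also extracts identity, closure, commutativity and associativity from hand-picked partitionings such as $[\datalist]$, $[\datalist,[~]]$, $[p_1,p_2]$ versus $[p_2,p_1]$ (Lemma~\ref{lemma:aggregate_commu}), and gets condition~2 from the two-block partitioning. Your derivation of associativity is slightly slicker than the paper's: you get it for free from condition~2 and associativity of $\cat$, whereas the paper runs a separate three-partition cyclic-permutation computation. The ``if'' direction is where you genuinely diverge. The paper factors the argument through an auxiliary equivalence (Lemma~\ref{lem:homo-means-good}) stating that $\foldlzof{\cdot}$ being a list homomorphism to $(\imgof{\foldl(\seq,\zero)},\comboper,\zero)$ is the same as $\foldl(\comboper,\zero,\map(\foldlzof{\cdot},\xss))=\foldlzof{\concat(\xss)}$, and then appeals to associativity and commutativity to absorb a permutation \emph{of whole partitions}. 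You instead collapse the outer fold to a monoid product and then explicitly prove that $\foldlzof{\cdot}$ is invariant under arbitrary permutations of the \emph{elements}, by decomposing any list into a $\comboper$-product of singleton folds $\seq(\zero,x_i)$. This buys you something real: \hscode{\kw{repartition}!} shuffles at element granularity, so the concatenation of the partitions is only a permutation of $\datalist$, and the paper's partition-level permutation step silently assumes the original order is recoverable by reordering blocks; your singleton decomposition closes that gap. Finally, you are right to flag the empty-partition/right-identity corner case as delicate --- the paper itself resolves it by using the partitioning $[\datalist,[~]]$, even though its definition of \hscode{\kw{repartition}!} promises non-empty partitions --- so your proposed ``dedicated base-case analysis'' is exactly what is needed there, and no more hand-wavy than the published proof.
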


Note that condition 2 in Lemma~\ref{lem:homoIff} is equivalent to saying that
$\foldlzof{\cdot}$ is a~list
homomorphism to the monoid $(\imgof{\foldl(\seq, \zero)}, \comboper, \zero)$~\cite{B:TL:86}.

The lemma below further helps us reduce the need of testing conditions over all
possible pairs of lists to conditions over elements of $\alpha \times \imgof{\foldl(\seq, \zero)}$.

\begin{restatable}{lemma}{aggreSuffCond}
  %
  Let $\comboper$ be associative on $\gamma = \imgof{\foldl(\seq,
  \zero)}$ and $\zero$ be the identity of $\comboper$ on $\gamma$.
  The following are equivalent:
  \begin{enumerate}

  \item  for all lists $p_1, p_2 :: \listof{\alpha}$,
    \vspace{-1ex}
    \begin{equation}\label{eq:9461f388-4968-41b8-83af-b8a49f9cfedc}
      \foldlzof{p_1 \cat p_2} = \foldlzof{p_1} \comboper \foldlzof{p_2},
    \end{equation}

  \item  for all elements $d :: \alpha$ and $e :: \gamma$,
    \vspace{-1ex}
    \begin{equation}\label{eq:cd2c544e-7b22-4c31-b1d0-1160239694b7}
      \seq(e, d) = e \comboper \seq(\zero, d).
    \end{equation}
  \end{enumerate}
\end{restatable}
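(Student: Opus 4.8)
The plan is to collapse both conditions to a single identity about folding from an arbitrary starting accumulator, and then establish that identity by list induction. Throughout I keep the abbreviation $\foldlzof{p} = \foldl(\seq, \zero, p)$, write $\foldl(\seq, e, p)$ for the fold started from a general accumulator $e$, and use the standard fact that $\foldl$ distributes over concatenation, namely $\foldl(\seq, e, p_1 \cat p_2) = \foldl(\seq, \foldl(\seq, e, p_1), p_2)$ (an easy induction on $p_1$); in particular $\foldlzof{p_1 \cat p_2} = \foldl(\seq, \foldlzof{p_1}, p_2)$. I also record two closure facts: if $e = \foldlzof{q} \in \gamma$ then $\seq(e, d) = \foldl(\seq, e, \listof{d}) = \foldlzof{q \cat \listof{d}} \in \gamma$, so $\gamma$ is closed under each $\seq(\cdot, d)$; and $\seq(\zero, d) = \foldlzof{\listof{d}} \in \gamma$. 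Because $\gamma = \imgof{\foldl(\seq, \zero)}$ is by definition exactly the set of realisable values $\foldlzof{p_1}$, condition~(1) is equivalent to the claim $(\star)$ that $\foldl(\seq, e, p) = e \comboper \foldlzof{p}$ for every $e \in \gamma$ and every list $p$.

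The direction (1)$\Rightarrow$(2) is immediate: given $e \in \gamma$, pick $p_1$ with $\foldlzof{p_1} = e$ and instantiate~(1) at $p_2 = \listof{d}$, obtaining $\seq(e, d) = \foldlzof{p_1 \cat \listof{d}} = \foldlzof{p_1} \comboper \foldlzof{\listof{d}} = e \comboper \seq(\zero, d)$, which is precisely~(2).

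The substance is (2)$\Rightarrow$(1), which I prove via the claim $(\star)$ by induction on $p$, quantifying over all $e \in \gamma$ simultaneously. The base case $p = \listof{}$ reads $\foldl(\seq, e, \listof{}) = e = e \comboper \zero$, using that $\zero$ is the identity of $\comboper$ on $\gamma$. For the step $p = d : p'$, I unfold once to $\foldl(\seq, e, d : p') = \foldl(\seq, \seq(e, d), p')$; since $\seq(e, d) \in \gamma$ the induction hypothesis applies and gives $\seq(e, d) \comboper \foldlzof{p'}$. Rewriting $\seq(e, d) = e \comboper \seq(\zero, d)$ by~(2) and reassociating---legitimate because $e$, $\seq(\zero, d)$ and $\foldlzof{p'}$ all lie in $\gamma$, where $\comboper$ is associative---turns this into $e \comboper \bigl(\seq(\zero, d) \comboper \foldlzof{p'}\bigr)$. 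A second application of the induction hypothesis, now with accumulator $\seq(\zero, d) \in \gamma$, collapses the inner term via $\seq(\zero, d) \comboper \foldlzof{p'} = \foldl(\seq, \seq(\zero, d), p') = \foldlzof{d : p'}$, yielding $\foldl(\seq, e, d : p') = e \comboper \foldlzof{d : p'}$. This proves $(\star)$, and taking $e = \foldlzof{p_1}$, $p = p_2$ recovers condition~(1).

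I expect the only real obstacle to be setting up the induction correctly rather than any hard computation. The key design decision is to generalise the accumulator and range over all $e \in \gamma$, since the inductive step invokes the hypothesis at two different accumulators, $\seq(e, d)$ and $\seq(\zero, d)$. Making this sound requires the closure bookkeeping noted above---that every intermediate value fed to $\comboper$ or produced by $\seq(\cdot, d)$ actually lies in $\gamma$, so that the associativity and identity hypotheses are applicable. Once that invariant is maintained, the calculation is mechanical.
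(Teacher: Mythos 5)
Your proof is correct and follows essentially the same route as the paper's: both directions hinge on reducing condition~(1) to the generalized identity $\foldl(\seq, e, p) = e \comboper \foldlzof{p}$ for $e \in \gamma$, established by induction on $p$ using~(2), associativity of $\comboper$ on $\gamma$, and $\zero$ being the identity. The only cosmetic difference is that you induct from the head of the list and invoke the induction hypothesis at two accumulators ($\seq(e,d)$ and $\seq(\zero,d)$), whereas the paper inducts from the tail and applies~(2) once forwards and once backwards; your explicit closure bookkeeping for $\gamma$ is a welcome addition but the substance is the same.
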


\noindent Summarizing the lemmas, we get the following corollary:

\begin{corollary}\label{col:det-agg}
  Calls to $\agg(\zero, \seq, \comboper, \rdd)$ have deterministic outcomes
  iff
  \vspace{-1ex}
  \begin{enumerate}
    \item  $(\imgof{\foldl(\seq, \zero)}, \comboper, \zero)$ is a~commutative
      monoid and
    \item  for all $d :: \alpha$ and $e :: \imgof{\foldl(\seq, \zero)}$, it holds that $\seq(e, d)
      = e \comboper \seq(\zero, d)$.
  \end{enumerate}
\end{corollary}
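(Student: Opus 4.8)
The plan is to obtain the corollary directly by composing the two preceding lemmas, using the commutative-monoid requirement as the bridge that activates the hypotheses of the second lemma. The key observation is that condition~1 of the corollary is verbatim condition~1 of Lemma~\ref{lem:homoIff}, so the only real content is to show that, once condition~1 is in force, the list-homomorphism property (condition~2 of Lemma~\ref{lem:homoIff}) and condition~2 of the corollary are interchangeable. Throughout I write $\gamma = \imgof{\foldl(\seq, \zero)}$.

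For the forward direction, I would assume that calls to $\agg(\zero, \seq, \comboper, \rdd)$ have deterministic outcomes. Lemma~\ref{lem:homoIff} then yields two facts: that $(\gamma, \comboper, \zero)$ is a commutative monoid, which is exactly condition~1 of the corollary, and that $\foldlzof{\cdot}$ is a list homomorphism, i.e.\ $\foldlzof{p_1 \cat p_2} = \foldlzof{p_1} \comboper \foldlzof{p_2}$ for all lists $p_1, p_2$. Since a commutative monoid is in particular associative with identity $\zero$ on $\gamma$, the premises of the second lemma are satisfied, so I may apply its equivalence; the homomorphism property then converts into $\seq(e, d) = e \comboper \seq(\zero, d)$ for all $d :: \alpha$ and $e :: \gamma$, which is condition~2 of the corollary.

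For the converse, I would assume conditions~1 and~2 of the corollary. Condition~1 again supplies associativity of $\comboper$ and identity $\zero$ on $\gamma$, so the second lemma is applicable, and its equivalence turns condition~2 of the corollary back into the list-homomorphism property. At this point both premises of Lemma~\ref{lem:homoIff} hold---condition~1 gives the commutative monoid and the homomorphism property gives its condition~2---so its conclusion delivers deterministic outcomes, closing the loop.

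The one point requiring care, and the only place where the argument could go wrong if reasoned sloppily, is the order of invocation: the second lemma's equivalence is legitimate only after $\comboper$ is known to be associative with identity $\zero$ on $\gamma$, and this is precisely what the commutative-monoid condition provides. I would therefore always establish condition~1 before appealing to the second lemma in either direction, ensuring it is never used outside its stated hypotheses. Beyond this bookkeeping there is no substantive obstacle; the corollary is a clean composition of the two lemmas rather than an independent proof.
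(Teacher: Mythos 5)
Your proposal is correct and matches the paper's intent exactly: the paper offers no separate proof of Corollary~\ref{col:det-agg}, merely noting that it summarizes Lemma~\ref{lem:homoIff} and the subsequent lemma, and your composition---using the commutative-monoid condition to discharge the associativity-and-identity hypotheses before invoking the second lemma's equivalence in either direction---is precisely the argument being elided. Your remark about the order of invocation is the right point of care, and nothing further is needed.
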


\vspace{-3.0mm}
\subsection{\textsf{reduce}}\label{sec:reduce}
\vspace{-2.0mm}

This section explores conditions for deterministic outcomes of 
calls to $\red(\comboper, \rdd)$ for $\comboper :: \alpha \times \alpha \to \alpha$ and $\rdd :: \RDD\ \alpha$.
We use the function \hscode{\kw{reduce}$\dt$} defined in the
introduction of Section~\ref{section:deterministic-aggregation}.
For \hscode{\kw{reduce}}, we assume that for any non-empty list, all partitions
of its~partitioning are non-empty (otherwise the result of \hscode{\kw{reduce}} is
undefined).

\enlargethispage{2mm}

We define deterministic outcomes for $\red$ as follows.
\begin{definition}
Calls to
$\red(\comboper, \rdd)$ have \emph{deterministic outcomes}
if
%
%
\vspace{-1ex}
\begin{equation}
  \red\dt(\comboper, \partit(\datalist)) = \redl(\comboper, \datalist)
\end{equation}
for all lists $\datalist$ and partitionings $\partit$.
\end{definition}

\newsavebox{\redseqcombbox}
\begin{lrbox}{\redseqcombbox}
\noindent
\hspace*{-6mm}
\begin{minipage}{0.39\linewidth}
\begin{lstlisting}[language=Haskell,basicstyle=\sffamily\footnotesize, morekeywords={aggregate}, deletekeywords={seq}, mathescape]
seq' x y = case x of
  Nothing $\rightarrow$ Just y
  Just x' $\rightarrow$ Just (x' $\comboper$ y)
\end{lstlisting}
\end{minipage}
\begin{minipage}{0.59\linewidth}
\begin{lstlisting}[language=Haskell,basicstyle=\sffamily\footnotesize, morekeywords={aggregate}, deletekeywords={seq}, mathescape]
($\comboper$') x y = case (x, y) of (Nothing, y')      $\rightarrow$ y'
            (x', Nothing)      $\rightarrow$ x'
            (Just x', Just y') $\rightarrow$ Just (x' $\comboper$ y') .
\end{lstlisting}
\end{minipage}
\end{lrbox}
\vspace{-1mm}

We reduce the problem of checking if $\red$ has deterministic outcomes to the problem of checking if $\agg$ has deterministic outcomes by the following lemma.

\vspace{-2mm}
\begin{restatable}{lemma}{lemmaRedIsAggr}\label{lem:lemmaRedIsAggr}
Calls to $\red(\comboper, \rdd)$ have deterministic
outcomes iff calls to $\agg(\allowbreak \nothing, \seq', \comboper', \rdd)$ have deterministic outcomes, where $\seq'$ and $\comboper'$ are as follows:

\vspace{0mm}
\usebox{\redseqcombbox}
\end{restatable}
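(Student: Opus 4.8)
The plan is to show that $\seq'$ and $\comboper'$ lift the partial reduction to $\maybe\ \alpha$ in such a way that a run of $\agg\dt(\nothing, \seq', \comboper', \rdd)$ computes exactly $\just$ of the corresponding run of $\red\dt(\comboper, \rdd)$; once this correspondence is in place, the two notions of deterministic outcome become equivalent by injectivity of $\just$. Concretely, I would first prove a per-partition lifting lemma: for every \emph{non-empty} list $p$ we have $\foldl(\seq', \nothing, p) = \just(\redl(\comboper, p))$, and $\foldl(\seq', \nothing, [~]) = \nothing$. This follows by a straightforward induction on $p$ using the two clauses of $\seq'$: the head element turns $\nothing$ into a $\just$, and every subsequent step applies $\comboper$ underneath $\just$, exactly mirroring $\redl$, which folds $\comboper$ from the head element over the tail. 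Since a partitioning always yields non-empty partitions (it is an instance of $\repartition!$), every per-partition sub-result in the $\agg\dt$ computation is a $\just$ value.

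Next I would establish the top-level correspondence. From the definition of $\comboper'$, the value $\nothing$ is a two-sided identity and $\just\ a \comboper' \just\ b = \just(a \comboper b)$, so folding $\comboper'$ from $\nothing$ over a list of $\just$ sub-results $[\just\ r_1, \dots, \just\ r_k]$ collapses to $\just(\redl(\comboper, [r_1, \dots, r_k]))$. Combining this with the per-partition lemma gives, for any $\rdd$ with non-empty partitions,
\[
\agg\dt(\nothing, \seq', \comboper', \rdd) = \just\bigl(\redl(\comboper, \map(\redl(\comboper, \cdot), \rdd))\bigr),
\]
that is, $\just$ of the permutation-free version of $\red\dt$. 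In the same way, the right-hand side $\foldl(\seq', \nothing, \datalist)$ of the aggregate-determinism equation equals $\just(\redl(\comboper, \datalist))$ for non-empty $\datalist$ by the per-partition lemma, and equals $\nothing$ when $\datalist$ is empty (matching $\partit([~]) = [~]$).

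Finally I would discharge the $\perm$ that $\red\dt$ carries but $\agg\dt$ does not, and assemble the biconditional. Using $\perm(\map(f, \xs)) = \map(f, \perm(\xs))$, the internal $\perm$ of $\red\dt$ merely reorders partitions, turning $\partit(\datalist)$ into another partitioning $\partit'(\datalist)$; since determinism is quantified over \emph{all} partitionings, ranging over $\perm$ adds nothing, so $\red\dt$ with and without $\perm$ define the same determinism condition. Both directions of the iff then follow by injectivity of $\just$: $\red$ has deterministic outcomes iff $\redl(\comboper, \map(\redl(\comboper, \cdot), \partit(\datalist))) = \redl(\comboper, \datalist)$ for all $\datalist, \partit$, which by applying $\just$ is equivalent to $\agg\dt(\nothing, \seq', \comboper', \partit(\datalist)) = \foldl(\seq', \nothing, \datalist)$, i.e.\ $\agg(\nothing, \seq', \comboper', \rdd)$ has deterministic outcomes. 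I expect the main obstacle to be the careful bookkeeping around this $\perm$ asymmetry together with the boundary cases (the empty list, and the standing non-emptiness assumption for $\red$), rather than the lifting computations themselves, which are routine inductions.
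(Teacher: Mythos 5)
Your proposal is correct and follows essentially the same route as the paper: your per-partition lifting lemma is the paper's Lemma~\ref{lem:redl} (which identifies $\redl(\comboper,\cdot)$ with $\fromJust$ of the lifted fold over $\seq'$ from $\nothing$), and your top-level correspondence reproduces the paper's equational derivation that $\fromJust(\agg(\nothing, \seq', \comboper', \rdd)) = \red\dt(\comboper, \rdd)$, from which the biconditional follows by injectivity of $\just$. Your explicit bookkeeping for the $\perm$ asymmetry and the empty/non-empty boundary cases is somewhat more careful than the paper's presentation, but it does not change the argument.
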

\vspace{-3mm}
Combining Corollary~\ref{col:det-agg} and Lemma~\ref{lem:lemmaRedIsAggr}, we
get the condition for deterministic outcomes of $\red(\comboper,
\rdd)$ calls.
\begin{restatable}{corollary}{coroReduce}\label{col:det-red}
Calls to $\red(\comboper, \rdd)$ have deterministic outcomes iff
$(\alpha, \comboper)$ is a commutative semigroup.
\end{restatable}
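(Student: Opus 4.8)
The plan is to chain Lemma~\ref{lem:lemmaRedIsAggr} with Corollary~\ref{col:det-agg} and then show that the conditions produced by the latter collapse to ``$\comboper$ is associative and commutative''. By Lemma~\ref{lem:lemmaRedIsAggr}, calls to $\red(\comboper, \rdd)$ have deterministic outcomes iff calls to $\agg(\nothing, \seq', \comboper', \rdd)$ do, so it suffices to instantiate Corollary~\ref{col:det-agg} with $\zero = \nothing$ and the lifted operators $\seq', \comboper'$ on $\maybe\ \alpha$.

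First I would compute the image $\imgof{\foldl(\seq', \nothing)}$. Unfolding $\foldl(\seq', \nothing, \cdot)$ yields $\nothing$ on the empty list and $\just(\redl(\comboper, \datalist))$ on every non-empty $\datalist$; since the singleton $[d]$ gives $\just(d)$ for every $d :: \alpha$, the image is exactly $\maybe\ \alpha$ (and it contains its prospective identity $\nothing$). This step is the one that needs care, and I expect it to be the only real obstacle: because the conditions in Corollary~\ref{col:det-agg} are quantified over the image, the fact that singletons realize every $\just(d)$ is precisely what promotes ``commutative/associative on the image'' to ``commutative/associative on all of $\alpha$''. Had the image been a proper subset, the semigroup laws would only be forced there.

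Next I would dispatch the second condition of Corollary~\ref{col:det-agg}, namely $\seq'(e, d) = e \comboper' \seq'(\nothing, d)$ for all $d :: \alpha$ and $e :: \maybe\ \alpha$. As $\seq'(\nothing, d) = \just(d)$, a two-case split on $e$ settles it using only the defining clauses: for $e = \nothing$ both sides are $\just(d)$, and for $e = \just(x')$ both sides are $\just(x' \comboper d)$. Hence this condition holds unconditionally and imposes no constraint on $\comboper$.

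It then remains to analyze the first condition, that $(\maybe\ \alpha, \comboper', \nothing)$ be a commutative monoid. By the pattern-matching clauses of $\comboper'$, the value $\nothing$ is always a two-sided identity, so that part is automatic. For the other two laws I would reduce to the all-$\just$ case: whenever an argument equals $\nothing$, both associativity and commutativity follow from $\nothing$ being the identity, whereas on $\just(a), \just(b), \just(c)$ the operator acts as $\just(\cdot \comboper \cdot)$, so commutativity of $\comboper'$ is equivalent to $a \comboper b = b \comboper a$ and associativity of $\comboper'$ is equivalent to $(a \comboper b) \comboper c = a \comboper (b \comboper c)$. Since the image is all of $\maybe\ \alpha$, these equivalences range over all of $\alpha$, so the commutative-monoid condition holds iff $(\alpha, \comboper)$ is a commutative semigroup. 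Combining this with the vacuous second condition yields the corollary; apart from the image computation, the whole argument is a routine unfolding and case analysis.
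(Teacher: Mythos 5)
Your proposal is correct and follows essentially the same route as the paper's proof: reduce via Lemma~\ref{lem:lemmaRedIsAggr}, instantiate Corollary~\ref{col:det-agg}, show the second condition is a tautology by a case split on $e$, and reduce the commutative-monoid condition on $(\maybe\ \alpha, \comboper', \nothing)$ to the semigroup laws for $\comboper$ on $\alpha$. If anything, you are slightly more explicit than the paper in justifying that $\imgof{\foldl(\seq', \nothing)}$ is all of $\maybe\ \alpha$ (via singleton lists), a point the paper merely asserts.
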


\vspace{-4.0mm}
\subsection{\textsf{treeAggregate} and \textsf{treeReduce}}\label{sec:treeAggregate}
\vspace{-2.0mm}

\enlargethispage{3mm}

This section gives conditions for deterministic outcomes of calls to the following
two aggregate combinators:
\begin{enumerate}
  \item  $\treeagg(\zero, \seq, \comboper, \rdd)$ for 
    $\zero :: \beta$, $\seq :: \beta \times \alpha \to \beta$, 
    $\comboper :: \beta \times \beta \to \beta$,
    and $\rdd :: \RDD\ \alpha$; and

  \item  $\treered(\comboper, \rdd)$ for $\comboper :: \alpha \times \alpha \to \alpha$, $\rdd :: \RDD\ \alpha$.
\end{enumerate}
Different from \hscode{\kw{aggregate}} and \hscode{\kw{reduce}}, the tree variants have
another level of non-de\-ter\-min\-is\-m modeled by
\hscode{\kw{apply}!}. The chaotic function effectively 
simulates non-deterministic computation with a binary-tree structure (Section~\ref{section:data-parallel-computation}).

To define calls to \hscode{\kw{treeAggregate}} and
\hscode{\kw{treeReduce}} to have deterministic outcomes, we use the functions
\hscode{\kw{treeAggregate}$\bt$} and \hscode{\kw{treeReduce}$\bt$} 
obtained by 
adding an explicit deterministic instantiation of \hscode{\kw{apply}!}
to \hscode{\kw{treeAggregate}$\dt$} and \hscode{\kw{treeReduce}$\dt$}.
%
\begin{definition}
Calls to
$\treeagg(\zero, \seq, \comboper, \rdd)$ and $\treered(\comboper,
\rdd)$ have \emph{deterministic outcomes} if
\vspace{-1ex}
\begin{equation}
  \treeagg\bt\!(\apply,\zero, \seq, \comboper, \partit(\datalist)) = \foldl(\seq, \zero, \datalist)
\end{equation}
and
\vspace{-1ex}
\begin{equation}
  \treered\bt\!(\apply,\comboper, \partit(\datalist)) = \redl(\comboper, \datalist)
\end{equation}
respectively for all lists $\datalist$, partitionings
$\partit$, and instantiations $\apply$ of \hscode{\kw{apply}!}.
\end{definition}

The following two propositions state necessary and sufficient conditions for the $\treeagg$ and $\treered$ combinators to have deterministic outcomes.
\hide{
The proof of these propositions are more difficult mainly due to the chaotic \hscode{divide!} function that randomly picks and combines two consecutive intermediate results.
}
\vspace{-1mm}
\begin{restatable}{proposition}{lemmaTreeAgg}\label{lemma:treeAgg}
Calls to $\treeagg(\zero, \seq, \comboper, \rdd)$ have deterministic outcomes
iff calls to $\agg(\zero, \seq, \comboper, \rdd)$ have deterministic outcomes.
\end{restatable}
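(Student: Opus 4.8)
The plan is to route both implications through the characterisation already proved for $\agg$, namely Lemma~\ref{lem:homoIff}: calls to $\agg(\zero,\seq,\comboper,\rdd)$ have deterministic outcomes iff (i) $(\imgof{\foldl(\seq,\zero)},\comboper,\zero)$ is a commutative monoid and (ii) $\foldlzof{p_1\cat p_2}=\foldlzof{p_1}\comboper\foldlzof{p_2}$ for all lists $p_1,p_2$. Since that lemma settles the $\agg$ side, it suffices to show that calls to $\treeagg$ have deterministic outcomes iff (i) and (ii) hold. The organising observation is that $\agg\dt$ and $\treeagg\bt$ build exactly the same list of partition sub-results $\mathit{pres}=\map(\foldl(\seq,\zero),\partit(\datalist))$, every entry of which lies in $\gamma=\imgof{\foldl(\seq,\zero)}$; the two combinators differ only in how $\mathit{pres}$ is collapsed, $\agg\dt$ by $\foldl(\comboper,\zero,\cdot)$ and $\treeagg\bt$ by an arbitrary instantiation $\apply$ of the binary-tree reduction \hscode{\kw{apply}!}.

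For the direction $\agg$ det $\Rightarrow\treeagg$ det, assume $\agg$ has deterministic outcomes, so by Lemma~\ref{lem:homoIff} condition (i) holds. I would first prove a path-independence lemma for $\apply$: under (i), for every instantiation $\apply$ and every non-empty list $rs$ over $\gamma$ we have $\apply(\comboper,rs)=\foldl(\comboper,\zero,rs)$. This follows by strong induction on the length of $rs$ using the recursion of \hscode{\kw{apply}!}: the one- and two-element base cases are immediate from the identity law, and in the inductive step $\apply$ replaces some adjacent pair $r_i,r_{i+1}$ by $r_i\comboper r_{i+1}$ and recurses on the shorter list, whose value by the induction hypothesis is the left-nested product; associativity then reinserts the bracket, so the tree shape is irrelevant. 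Because $\apply$ preserves the list order, only associativity and the identity are needed here. Combining this with the hypothesis (which gives $\foldl(\comboper,\zero,\mathit{pres})=\foldl(\seq,\zero,\datalist)$ for every partitioning) yields $\treeagg\bt=\apply(\comboper,\mathit{pres})=\foldl(\comboper,\zero,\mathit{pres})=\foldl(\seq,\zero,\datalist)$, i.e.\ $\treeagg$ det.

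For the converse, $\treeagg$ det $\Rightarrow\agg$ det, I would recover conditions (i) and (ii) by feeding $\treeagg\bt$ carefully chosen partitionings and instantiations, crucially exploiting that for $\agg$/$\treeagg$ (unlike $\red$) partitions may be empty, so that a bare $\zero=\foldlzof{[\,]}$ can be made to appear among the sub-results. Concretely: the two-block partitioning $[p_1,p_2]$ of $p_1\cat p_2$ forces the two-element $\apply$ and yields (ii) for non-empty blocks; the blocks $[[\,],\ell]$ and $[\ell,[\,]]$ yield $\zero\comboper\foldlzof{\ell}=\foldlzof{\ell}=\foldlzof{\ell}\comboper\zero$, establishing the identity law and the empty cases of (ii); partitioning $\ell_1\cat\ell_2$ once as $[\ell_1,\ell_2]$ and once, after shuffling, as $[\ell_2,\ell_1]$ forces commutativity, since both outcomes must equal $\foldlzof{\ell_1\cat\ell_2}$; and a three-block partitioning combined with the two distinct tree shapes available to $\apply$ on three elements forces associativity. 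Together these establish (i) and (ii), whence Lemma~\ref{lem:homoIff} delivers $\agg$ det.

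The main obstacle is the identity law in the converse direction. Since $\apply$ never applies $\comboper$ to $\zero$ directly---the seed $\zero$ is consumed inside each partition's $\foldl(\seq,\zero,\cdot)$---determinism of $\treeagg$ by itself says nothing about how $\comboper$ treats $\zero$; in fact, without admitting empty partitions one can exhibit $\seq,\comboper$ (e.g.\ a constant $\seq$ together with $\comboper=\min$) for which $\treeagg$ is deterministic yet $\zero$ fails to be an identity, so the equivalence would break. The crux is therefore the recognition that empty partitions are legitimate here, which is exactly what lets a lone $\zero$ surface as a sub-result and pins down the monoid identity. The only other nontrivial ingredient is the path-independence induction for $\apply$; the commutativity and associativity extractions are routine once the appropriate partitionings and tree shapes are selected.
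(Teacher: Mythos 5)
Your proof is correct, and the forward direction ($\agg$ det $\Rightarrow$ $\treeagg$ det) is essentially the paper's argument: the paper invokes Lemma~\ref{lemma:aggregate_commu} to get associativity and commutativity of $\comboper$ on the image and then asserts that every sequence of chaotic divide--combine steps collapses to the left-linear one; your explicit path-independence induction on $\apply$ is just a more careful writeup of that step. Where you genuinely diverge is the converse. The paper's proof is a two-line reduction: it exhibits the left-most instantiation \hscode{dividel} of the chaotic choice, shows the resulting \hscode{applyl} equals $\redl$, and concludes that $\agg\dt$ is (essentially) one particular instantiation of $\treeagg\bt$, so determinism over all instantiations specializes to determinism of $\agg$. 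You instead re-derive the algebraic conditions of Lemma~\ref{lem:homoIff} from scratch by feeding $\treeagg\bt$ crafted partitionings and tree shapes (two blocks for the homomorphism law, swapped blocks for commutativity, three blocks with both bracketings for associativity, and an empty block to surface $\zero$ as a sub-result and pin down the identity). Your route is longer, but it buys something the paper's shortcut quietly elides: $\redl(\comboper,\mathit{pres})$ starts from the first sub-result whereas $\foldl(\comboper,\zero,\mathit{pres})$ seeds with $\zero$, so the claimed identification of the left-linear $\treeagg\bt$ with $\agg\dt$ already presupposes that $\zero$ acts as an identity --- exactly the point you flag. Your use of the empty partition to recover the identity law is consistent with the paper's own practice (the proof of Lemma~\ref{lemma:aggregate_commu} uses the partitioning $[L,[\,]]$), though note it sits somewhat uneasily with the stated definition of \hscode{\kw{repartition}!}, which promises non-empty parts; under a strictly non-empty reading your own counterexample (constant $\seq$ with $\comboper=\min$) shows the identity law cannot be recovered, so this modelling choice is load-bearing for the proposition in either proof.
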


\enlargethispage{4mm}

\vspace{-1mm}
\begin{restatable}{proposition}{lemTreeRedIsRed}
Calls to $\treered(\comboper, \rdd)$ have deterministic outcomes iff
calls to $\red(\comboper, \rdd)$ have deterministic outcomes.
\end{restatable}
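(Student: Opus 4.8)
The plan is to reduce the claim to the two correspondences already at our disposal, namely Proposition~\ref{lemma:treeAgg} (which equates determinism of $\treeagg$ with that of $\agg$) and Lemma~\ref{lem:lemmaRedIsAggr} (which equates determinism of $\red$ with that of a particular $\agg$ instance), exactly mirroring how Corollary~\ref{col:det-red} was obtained in the flat case. The only new ingredient needed is a tree analogue of Lemma~\ref{lem:lemmaRedIsAggr}: that $\treered(\comboper, \rdd)$ is deterministic iff $\treeagg(\nothing, \seq', \comboper', \rdd)$ is, where $\seq'$ and $\comboper'$ are the operators of Lemma~\ref{lem:lemmaRedIsAggr}. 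The crucial observation is that the top-level combination in $\treered\bt$ and $\treeagg\bt$ is the \emph{same} deterministic instantiation $\apply$ of \hscode{\kw{apply}!}; the two combinators differ only in whether the per-partition sub-results are raw $\alpha$-values or the $\maybe\,\alpha$-values produced by $\foldl(\seq', \nothing)$. I would therefore establish the simulation identity
\begin{equation}
  \treeagg\bt(\apply, \nothing, \seq', \comboper', \partit(\datalist)) = \just\bigl(\treered\bt(\apply, \comboper, \partit(\datalist))\bigr)
\end{equation}
for every non-empty list $\datalist$, every partitioning $\partit$ into non-empty partitions, and every tree shape $\apply$.

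The identity is a short induction following the structure of \hscode{\kw{apply}!}. Since every partition is non-empty, each leaf $\foldl(\seq', \nothing, p_i)$ evaluates to $\just(\redl(\comboper, p_i))$, so all leaves fed to \hscode{\kw{apply}!} have the form $\just(\cdot)$; the \nothing-branches of $\comboper'$ are thus never taken, and on two $\just$-arguments $\comboper'(\just x, \just y) = \just(x \comboper y)$. An induction on the binary tree built by \hscode{\kw{apply}!} then shows that the whole $\comboper'$-tree equals $\just$ applied to the corresponding $\comboper$-tree over the unwrapped sub-results, which is exactly the displayed equation. Because $\foldl(\seq', \nothing, \datalist) = \just(\redl(\comboper, \datalist))$ for non-empty $\datalist$ and $\just$ is injective, the defining equation of deterministic outcomes for $\treeagg(\nothing, \seq', \comboper')$ holds for a given $\partit$ and $\apply$ iff the defining equation for $\treered(\comboper)$ holds for the same $\partit$ and $\apply$. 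As both definitions quantify over the same lists, partitionings, and instantiations of \hscode{\kw{apply}!}, the tree analogue follows.

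The statement then comes out by chaining equivalences. By Proposition~\ref{lemma:treeAgg} applied to the instance $(\nothing, \seq', \comboper')$, the combinator $\treeagg(\nothing, \seq', \comboper', \rdd)$ is deterministic iff $\agg(\nothing, \seq', \comboper', \rdd)$ is; by Lemma~\ref{lem:lemmaRedIsAggr} the latter is equivalent to determinism of $\red(\comboper, \rdd)$. Composing these with the tree analogue of the previous paragraph yields that $\treered(\comboper, \rdd)$ has deterministic outcomes iff $\red(\comboper, \rdd)$ does. (Alternatively, one could argue directly through the commutative-semigroup characterization of Corollary~\ref{col:det-red}, building two-element and three-element singleton partitionings to witness failures of commutativity and associativity in the forward direction, and invoking order/reassociation invariance of a commutative-semigroup fold in the backward direction; but reusing Proposition~\ref{lemma:treeAgg} avoids re-proving that invariance.)

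The main obstacle I anticipate is not the algebra but the bookkeeping around partiality: $\agg$ and $\treeagg$ over $\maybe\,\alpha$ are total even on empty partitions and on the empty list, where \nothing acts as the monoid identity, whereas $\red$ and $\treered$ are defined only under the standing assumption that all partitions of a non-empty list are non-empty. I would handle this exactly as in the flat case of Lemma~\ref{lem:lemmaRedIsAggr}: restrict the simulation identity to non-empty partitions so that all leaves are $\just$-values, and check that the empty-partition and empty-list cases on the $\treeagg$ side are vacuously consistent (they reduce to \nothing and impose no constraint). A secondary point worth stating explicitly is that the equivalence of the two determinism definitions must range over the \emph{same} set of tree shapes $\apply$; this is immediate because the simulation identity is proved for each fixed $\apply$, so the universal quantifier over instantiations of \hscode{\kw{apply}!} transfers verbatim.
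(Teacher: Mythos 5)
Your argument is correct, but it takes a genuinely different route from the paper. The paper's entire proof is the remark that one ``follows the same structure as the proof of Proposition~\ref{lemma:treeAgg}'': for the forward direction one instantiates the chaotic \hscode{\kw{apply}!} as the left-leaning \hscode{\kw{applyl}}, observes that this collapses $\treered$ to $\red$, and for the backward direction one uses the fact that deterministic $\red$ forces $\comboper$ to be associative and commutative (Corollary~\ref{col:det-red}, via Lemma~\ref{lemma:aggregate_commu}), so that every tree shape reassociates to the left-leaning one. You instead factor through the $\maybe$-lifting: you prove a new tree analogue of Lemma~\ref{lem:lemmaRedIsAggr} via the simulation identity $\treeagg\bt(\apply, \nothing, \seq', \comboper', \partit(\datalist)) = \just(\treered\bt(\apply, \comboper, \partit(\datalist)))$, and then chain it with Proposition~\ref{lemma:treeAgg} and Lemma~\ref{lem:lemmaRedIsAggr}, both used as black boxes. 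Your route buys you something real: you never have to re-prove the reassociation/permutation invariance of a commutative-semigroup tree fold for $\comboper$ itself, only a structural induction showing that $\comboper'$ on $\just$-wrapped leaves mirrors $\comboper$ on the unwrapped ones (which is immediate from the definition of $\comboper'$ and Lemma~\ref{lem:redl}). The price is the bookkeeping you already flag: matching the instantiations of \hscode{\kw{apply}!} across the two types, and discharging the empty-partition and empty-list cases on the $\treeagg$ side, where $\nothing$ being the definitional identity of $\comboper'$ does the work. Your treatment of these points is at least as careful as the paper's own handling of the analogous issues in the flat case, so the argument goes through; your parenthetical ``alternative'' via two- and three-element partitionings is closer in spirit to the paper's intended proof, though the paper's forward direction proceeds by instantiation rather than by counterexample construction.
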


\vspace{-3.0mm}
\subsection{\textsf{aggregateByKey} and \textsf{reduceByKey}}\label{sec:aggregateByKey}
\vspace{-1.0mm}

We proceed by investigating conditions for the following combinators on pair RDDs:
\begin{enumerate}
  \item  $\aggBKey(\zero, \seq, \comboper, \prdd)$ for $\zero :: \gamma$, $\seq :: \gamma \times \beta
    \to \gamma$, $\comboper :: \gamma \times \gamma \to \gamma$, and $\prdd ::
    \PairRDD\ \alpha\ \beta$; and

  \item  $\redBKey(\comboper, \prdd)$ for $\comboper ::
    \beta \times \beta \to \beta$ and $\prdd :: \PairRDD\ \alpha\ \beta$.
\end{enumerate}

\hide{
\noindent
When inferring conditions for deterministic outcomes of calls to
\hscode{aggregateByKey}, we make use of the following auxiliary function:
\vspace{-1mm}
\begin{lstlisting}[language=Haskell,basicstyle=\sffamily\footnotesize,deletekeywords={seq},morekeywords={aggregate}]
aggregateWithKey :: $\alpha$ $\rightarrow$ $\gamma$ $\rightarrow$ ($\gamma$ $\rightarrow$ $\beta$ $\rightarrow$ $\gamma$) $\rightarrow$ ($\gamma$ $\rightarrow$ $\gamma$ $\rightarrow$ $\gamma$) 
                    $\rightarrow$ PairRDD $\alpha$ $\beta$ $\rightarrow$ $\gamma$
aggregateWithKey k z seq comb pairRdd =
  let select p = key p == k
     vrdd = filter (not . null)
              (map ((map value) . (filter select)) pairRdd)
  in aggregate z seq comb vrdd
\end{lstlisting}
\vspace{-1mm}
%
%
}

\hide{
We use the following version of \hscode{aggregateByKey} with the
partitioning given explicitly:
\vspace{-1mm}
\begin{lstlisting}[language=Haskell,basicstyle=\sffamily\footnotesize,morekeywords={aggregateByKey}]
aggregateListByKey :: ([($\alpha$, $\beta$)] $\rightarrow$ [[($\alpha$, $\beta$)]]) $\rightarrow$ $\gamma$ $\rightarrow$ ($\gamma$$\rightarrow$$\beta$$\rightarrow$$\gamma$) 
                     $\rightarrow$ ($\gamma$$\rightarrow$$\gamma$$\rightarrow$$\gamma$) $\rightarrow$ [($\alpha$, $\beta$)] $\rightarrow$ PairRDD $\alpha$ $\gamma$
aggregateListByKey part z mergeComb mergeValue list =
  aggregateByKey z mergeComb mergeValue (part list)
\end{lstlisting}
\vspace{-1mm}
}
\noindent
We define an auxiliary function $\filterkey$ that obtains
a~list of all values associated with the given key from a list of pairs.
\vspace{-1mm}
\begin{lstlisting}[language=Haskell,basicstyle=\sffamily\footnotesize,morekeywords={filterkey}]
filterkey :: $\alpha$ $\rightarrow$ [($\alpha$, $\beta$)] $\rightarrow$ [$\beta$]
filterkey _ [] = []
filterkey k (k, v):xs = v:(filterkey k xs)
filterkey k (_, _):xs = filterkey k xs
\end{lstlisting}
\vspace{-1mm}

\noindent Deterministic outcomes of calls to $\aggBKey$ are now defined using
the function $\aggBKey\dt$ as follows.
\begin{definition}
Calls to
$\aggBKey(\zero, \seq, \comboper, \prdd)$ have
\emph{deterministic outcomes} if 
\vspace{-1ex}
\begin{equation*}
  \lookup(k, \aggBKey\dt(\zero, \seq, \comboper, \partit(\datalist)))
  = \foldl(\zero, \seq, \filterkey(k, \datalist))
\end{equation*}
for all lists $\datalist$ of pairs, partitionings $\partit$, and keys~$k$.
\end{definition}

\hide{
The following lemma helps us to change the analysis of the conditions when
$\aggBKey$ has a deterministic outcome to the analysis of $\aggWKey$, which
handles the aggregation with respect to each key individually.

\begin{restatable}{lemma}{lemAggBKeyIsAggWKey}\label{lem:lemAggBKeyIsAggWKey}
It holds that
\vspace{-1ex}
\begin{align*}
  \MoveEqLeft
  \lookUp(k, \aggBKey(\zero, \seq, \comboper, \prdd)) \\
  &= \aggWKey(k, \zero, \seq, \comboper, \prdd)) ,
\end{align*}
\end{restatable}
\noindent where $\lookUp$ searches the first value with a given
key in an RDD:
\begin{align*}
& \lookUp(k, \xss) = \\
&\quad \quad \head_z(\concat(\map(\map(\val \circ \filterkey~k),
\xss))) ,
\end{align*}
and $\head_z$ returns $z$ when the input is empty.
}

Finally, the following proposition states the conditions that need to hold for
calls to $\aggBKey$ to have deterministic outcomes.
\begin{restatable}{proposition}{lemAggByKeyIsAgg}\label{lem:lemAggByKeyIsAgg}
  Calls to $\aggBKey(\zero, \seq, \comboper, \prdd)$ have
  deterministic outcomes iff calls to
  $\agg(\zero, \seq, \comboper, \rdd)$ have deterministic
  outcomes.
\end{restatable}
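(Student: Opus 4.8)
The plan is to prove the two directions of the equivalence separately, reducing everything to the characterization of deterministic $\agg$ already supplied by Lemma~\ref{lem:homoIff}. Throughout I abbreviate $\foldlzof{q} = \foldl(\seq,\zero,q)$ and, for a partitioning $\partit(\datalist)$ with partitions $p_1,\dots,p_m$, I set $q_i = \filterkey(k,p_i)$. Two preliminary observations drive both directions. First, in $\aggBKey\dt$ the chaotic \hscode{concatMap!} has been replaced by the deterministic \hscode{concatMap}, and its original non-determinism is exactly captured by the quantification over partitionings $\partit$ (by the same "push the permutation into the partitioning" argument used for $\agg$ at the start of this section); the trailing \hscode{repartition!} only reshuffles the output pairs and, since every key occurs at most once in the result, is absorbed by the outer $\lookup$, so the left-hand side below is well defined. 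Second, tracing the within-partition merge (the fold of \hscode{mergeBy}~$\seq$ using \hscode{addTo} and \hscode{hasValue}) shows that partition $p_i$ contributes to key $k$ the single value $\foldlzof{q_i}$ whenever $k$ occurs in $p_i$, and contributes nothing otherwise; the subsequent cross-partition fold of \hscode{mergeBy}~$\comboper$ (started from $\zero$ via \hscode{hasValue}) then yields
\[
  \lookup\big(k,\aggBKey\dt(\zero,\seq,\comboper,\partit(\datalist))\big)
  = \foldl\big(\comboper,\zero,[\foldlzof{q_{i_1}},\dots,\foldlzof{q_{i_r}}]\big),
\]
where $i_1<\dots<i_r$ enumerate the partitions containing $k$.

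For the direction ``$\aggBKey$ deterministic $\Rightarrow$ $\agg$ deterministic'' I would use a single-key embedding. Given any list $\datalist' :: \listof{\beta}$ and any partitioning $\partit'$ arising in the $\agg$ problem, fix one key $k$ and tag every element, replacing each value $v$ by the pair $(k,v)$; this produces a pair list $\datalist$ and an induced partitioning $\partit$. Then $\filterkey(k,\datalist)=\datalist'$, every (non-empty) partition of $\partit(\datalist)$ contains $k$, and each $q_i$ is exactly the corresponding value-partition of $\partit'(\datalist')$, so no empty projections arise and $i_1,\dots,i_r=1,\dots,m$. The displayed identity therefore collapses to $\agg\dt(\zero,\seq,\comboper,\partit'(\datalist'))$, and the assumed determinism of $\aggBKey$ gives $\agg\dt(\zero,\seq,\comboper,\partit'(\datalist')) = \foldlzof{\datalist'}$. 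Since every partitioning $\partit'$ of $\datalist'$ is realised by such a tagged $\partit$ (for which the $\aggBKey$ assumption holds), this is exactly the determinism of $\agg$.

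For the converse I would invoke Lemma~\ref{lem:homoIff}: determinism of $\agg$ gives that $(\imgof{\foldl(\seq,\zero)},\comboper,\zero)$ is a commutative monoid and that $\foldlzof{\cdot}$ is a list homomorphism, i.e.\ $\foldlzof{p\cat q}=\foldlzof{p}\comboper\foldlzof{q}$ and $\foldlzof{[\,]}=\zero$. Fix $\datalist$, $\partit$, and $k$. Because $\foldlzof{q_i}=\zero$ is the identity for every empty $q_i$, and the leading $\zero$ in the fold is also the identity, the right-hand side of the displayed identity equals $\foldlzof{q_1}\comboper\cdots\comboper\foldlzof{q_m}$, which by the homomorphism property equals $\foldlzof{q_1\cat\cdots\cat q_m}$. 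Since $\filterkey$ distributes over concatenation and $\partit$ only shuffles and splits, $q_1\cat\cdots\cat q_m$ is a permutation of $\filterkey(k,\datalist)$; and a list homomorphism into a \emph{commutative} monoid is invariant under permutations of its argument. Hence the product equals $\foldlzof{\filterkey(k,\datalist)}$, which is precisely the deterministic outcome required of $\aggBKey$.

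The main obstacle I expect is the second preliminary observation: making the bookkeeping of the per-key computation fully precise, i.e.\ formally verifying that \hscode{mergeBy}, \hscode{addTo}, and \hscode{hasValue} realise the per-key fold $\foldlzof{q_i}$ inside a partition and that a partition lacking $k$ silently contributes the monoid identity. Getting this interaction right --- together with cleanly separating the two sources of non-determinism (\hscode{concatMap!} becoming the partitioning quantifier, \hscode{repartition!} being absorbed by $\lookup$) and reconciling the skipped empty projections with $\agg$'s treatment of them through the $\zero$-identity --- is where the real work lies; the algebraic manipulation afterwards is routine given Lemma~\ref{lem:homoIff}.
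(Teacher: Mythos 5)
Your proof is essentially correct, but it takes a genuinely different route from the paper's. The paper factors the whole statement through the auxiliary function $\aggWKey$: Lemma~\ref{lem:lemAggBKeyIsAggWKey} shows, by a point-free calculation (naturality plus three inductive identities about $\filterkey$, \hscode{mergeBy} and \hscode{\kw{foldl}}), that $\lookUp(k,\aggBKey(\zero,\seq,\comboper,\prdd)) = \aggWKey(k,\zero,\seq,\comboper,\prdd)$, and since $\aggWKey$ is \emph{literally} a call to $\agg$ on the key-$k$ projection of the RDD, both directions of the iff drop out simultaneously from the substitution $\datalist' = \filterkey(k,\datalist)$. You instead compute the per-key unfolding of $\aggBKey\dt$ directly and then argue the two directions separately: the forward direction by a single-key tagging embedding that manufactures, for every instance of the $\agg$ problem, an $\aggBKey$ instance that collapses to it; the backward direction by invoking Lemma~\ref{lem:homoIff} and showing that the fold over the nonempty projections $\foldlzof{q_{i_1}},\dots,\foldlzof{q_{i_r}}$ equals $\foldlzof{\filterkey(k,\datalist)}$ via the homomorphism property, the identity $\foldlzof{[\,]}=\zero$ for skipped partitions, and permutation-invariance of a homomorphism into a commutative monoid. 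Your decomposition buys two things the paper's argument leaves implicit: the embedding in the forward direction sidesteps the paper's tacit claim that induced partitionings of $\filterkey(k,\datalist)$ range over \emph{all} partitionings of $\datalist'$, and the backward direction makes visible exactly where commutativity is needed (to reorder the per-partition key-$k$ blocks). The cost is that you must prove the two directions separately and you still owe the same bookkeeping the paper owes: your ``second preliminary observation'' (that \hscode{mergeBy}, \hscode{addTo} and \hscode{hasValue} realise $\foldlzof{q_i}$ per partition and that partitions lacking $k$ contribute nothing) is precisely the content of the paper's identities \eqref{eq:filterkey-fold}--\eqref{eq:mapfold-select}, which it likewise only asserts ``can be proved by induction.'' You correctly identify this as where the real work lies, so the sketch is sound; just be aware that the paper is itself inconsistent about whether partitionings may contain empty parts (its definition of \hscode{\kw{repartition}!} forbids them, yet the proof of Lemma~\ref{lemma:aggregate_commu} uses one), and your forward direction, which assumes nonempty partitions, is consistent with the official definition.
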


\noindent We define when calls to $\redBKey$ have deterministic outcomes via $\redBKey\dt$.
\begin{definition}
Calls to $\redBKey(\comboper, \prdd)$ have
\emph{deterministic outcomes} if
\vspace{-1ex}
\begin{equation*}
  \lookup(k, \redBKey\dt(\comboper, \partit(\datalist))) = \redl(\comboper, \filterkey(k, \datalist))
\end{equation*}
for all list $\datalist$ of pairs, partitioning $\partit$, and key $k$.
\end{definition}

\begin{restatable}{proposition}{lemRedBKeyIsRed}\label{lem:lemRedBKeyIsRed}
  Calls to $\redBKey(\comboper, \prdd)$ have deterministic outcomes iff
  calls to $\red(\comboper, \rdd)$ have deterministic outcomes.
\end{restatable}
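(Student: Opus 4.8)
The plan is to transport the statement down to the already-proved equivalence between $\red$ and $\agg$ (Lemma~\ref{lem:lemmaRedIsAggr}) by routing it through the keyed combinators, mirroring the argument that relates $\aggBKey$ to $\agg$. Concretely, I would first establish a keyed analogue of Lemma~\ref{lem:lemmaRedIsAggr}: that calls to $\redBKey(\comboper, \prdd)$ have deterministic outcomes iff calls to $\aggBKey(\nothing, \seq', \comboper', \prdd)$ do, where $\seq'$ and $\comboper'$ are the very same $\maybe$-lifted operations used in Lemma~\ref{lem:lemmaRedIsAggr}. Granting this auxiliary equivalence, the proof closes by the chain
\[
  \redBKey(\comboper, \prdd)
  \;\Longleftrightarrow\;
  \aggBKey(\nothing, \seq', \comboper', \prdd)
  \;\Longleftrightarrow\;
  \agg(\nothing, \seq', \comboper', \rdd)
  \;\Longleftrightarrow\;
  \red(\comboper, \rdd),
\]
where each $\Longleftrightarrow$ means ``the left call has deterministic outcomes iff the right one does'': the middle step is Proposition~\ref{lem:lemAggByKeyIsAgg} instantiated with $\zero = \nothing$, $\seq = \seq'$, and $\comboper = \comboper'$, while the last step is Lemma~\ref{lem:lemmaRedIsAggr}.

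To prove the auxiliary equivalence I would compare the two determinism definitions key by key. Fix a key $k$ and let $vs = \filterkey(k, \datalist)$; since $k$ occurs in either result only when $vs$ is non-empty, it suffices to treat non-empty $vs$. Running $\aggBKey\dt(\nothing, \seq', \comboper', \partit(\datalist))$ and $\redBKey\dt(\comboper, \partit(\datalist))$ side by side on the same partitioning $\partit$, I claim the stored value of $k$ produced by the former is exactly the $\just$-wrapped stored value produced by the latter. This is because the intra-partition fold (built from $\seq'$, respectively the $\hscode{merge}$ function of $\redBKey$) and the inter-partition combination (from $\comboper'$, respectively $\hscode{merge}$ again) perform the identical ordered reduction of the values carried by each key, the only difference being that the $\aggBKey$ computation keeps every aggregated value under a single $\just$ and uses $\nothing$ merely as the neutral initial accumulator. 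Hence for present $k$ one gets $\lookup(k, \aggBKey\dt(\ldots)) = \just(\redl(\comboper, vs))$ exactly when $\lookup(k, \redBKey\dt(\ldots)) = \redl(\comboper, vs)$; combined with the identity $\foldl(\seq', \nothing, vs) = \just(\redl(\comboper, vs))$ (the same fact underlying Lemma~\ref{lem:lemmaRedIsAggr}) and injectivity of $\just$, the $\aggBKey$ determinism equation for $k$ holds iff the $\redBKey$ one does.

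I expect the main obstacle to be exactly this auxiliary equivalence, i.e.\ verifying rigorously that the two-phase by-key bookkeeping of $\redBKey$ (the $\hscode{addTo}$/$\lookup$ manipulation inside $\hscode{merge}$) computes, for each key and each fixed partitioning, the same ordered reduction as the $\maybe$-lifted $\aggBKey(\nothing, \seq', \comboper')$. This is the keyed counterpart of the computation-matching step in Lemma~\ref{lem:lemmaRedIsAggr}, now complicated by the interleaving of distinct keys inside a partition; the argument must confirm that grouping by key commutes with the fold, that the $\nothing$ default never leaks into the value of a present key, and that the absent-key edge case ($vs = [\,]$) is handled consistently on both sides. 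The remaining two equivalences in the chain are immediate instantiations of Proposition~\ref{lem:lemAggByKeyIsAgg} and Lemma~\ref{lem:lemmaRedIsAggr} and need no further work.
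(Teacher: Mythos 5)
Your proposal is correct in outline, but it takes a genuinely different route from the paper. The paper completes the square
$\{\redBKey,\red,\aggBKey,\agg\}$ along its top edge: it introduces a~reduce analogue of the keyed-to-filtered correspondence, namely Lemma~\ref{lem:lemRedBKeyIsRedWKey} stating $\lookup(k, \redBKey(\comboper, \prdd)) = \redWKey(k, \comboper, \prdd)$, and then unfolds $\redWKey$ into a~call to $\red$ on the per-key value RDD and substitutes $\datalist' = \filterkey(k,\datalist)$, exactly mirroring the proof of Proposition~\ref{lem:lemAggByKeyIsAgg} one level down. You instead go around the other three edges: a~new keyed $\maybe$-lifting equivalence $\redBKey(\comboper) \leftrightarrow \aggBKey(\nothing, \seq', \comboper')$, followed by black-box applications of Proposition~\ref{lem:lemAggByKeyIsAgg} and Lemma~\ref{lem:lemmaRedIsAggr}. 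Both decompositions require exactly one new lemma of comparable substance, and your identification of where the real work lies is accurate: your auxiliary equivalence must unwind the \hscode{addTo}/\hscode{\kw{lookup}} bookkeeping to show that \hscode{mergeBy} with default $\nothing$ and the lifted operators tracks the \hscode{merge} of $\redBKey$ key by key, $\just$-wrapped --- this is the keyed counterpart of the computation-matching calculation inside Lemma~\ref{lem:lemmaRedIsAggr}, just as the paper's Lemma~\ref{lem:lemRedBKeyIsRedWKey} is the keyed counterpart of its Lemma~\ref{lem:lemAggBKeyIsAggWKey}. One modest advantage of your route is that the $\nothing$ default cleanly absorbs partitions in which a~given key does not occur (where $\comboper'$ treats $\nothing$ as identity), whereas the paper's route handles this by the explicit \hscode{\kw{filter} (\kw{not} . \kw{null})} inside $\redWKey$; the paper's route, in exchange, keeps the reduce and aggregate developments structurally parallel. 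Do make sure, when writing out the auxiliary equivalence, that you compare the two determinism definitions' right-hand sides via $\foldl(\seq', \nothing, vs) = \just(\redl(\comboper, vs))$ only for non-empty $vs$, since $\redl$ is undefined on the empty list while $\foldl$ returns $\nothing$ there.
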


\hide{
\vspace{-3.0mm}
\subsection{\textsf{aggregateMessages}}\label{sec:aggregateMessages}
\vspace{-2.0mm}



Finally, we discuss deterministic outcomes of calls to
$\aggMsgs(\send, \comboper, \graphrdd)$ for $\send :: \vertexid \times \alpha \times \vertexid \times \alpha \times \beta \to \listof{(\vertexid, \gamma)}$, 
$\comboper :: \gamma \times \gamma \to \gamma$, 
and $\graphrdd :: \GraphRDD\ \alpha\ \beta$.
We define deterministic outcomes first.
\begin{definition}
Calls to the message aggregation function $\aggMsgs(\!\send\!,\!\comboper,\!\graphrdd\!)$ have
\emph{deterministic outcomes} if for any
two graph RDD representations of the same graph
\vspace{-1ex}
\begin{align*}
\MoveEqLeft
\graphrdd_1, \graphrdd_2 :: \hscode{GraphRDD}\ \alpha\ \beta,
\end{align*}
we have for all vertex identifier $v :: \vertexid$,
\vspace{-1ex}
\begin{align*}
\lookup(v, \aggMsgs(\send, \comboper, \graphrdd_1))=\\
\lookup(v, \aggMsgs(\send, \comboper, \graphrdd_2)).
\end{align*}
\end{definition}

The following lemma gives a sufficient condition for $\aggMsgs$ to
have deterministic outcomes.
\begin{restatable}{lemma}{lemDetAggMsg}\label{lem:lemDetAggMsg}
If calls to $\redBKey(\comboper, \rdd)$ have deterministic outcomes,
then calls to $\aggMsgs(\send,
\comboper, \graphrdd)$ also have deterministic outcomes.
\end{restatable}
}

\vspace{-3.0mm}
\subsection{Discussion}\label{sec:testing-det-aggr}
\vspace{-2.0mm}

Our conditions for deterministic outcomes are more general than it
appears. In addition to scalar data, such as integers, 
they are also applicable to RDDs containing non-scalar data,
such as lists or sets. 
In our extended set of
case studies, we will prove deterministic outcomes from a
distributed Spark program using non-scalar data
(App.~\ref{app:case_studies}). 

Corollary~\ref{col:det-agg} gives necessary and sufficient conditions for calls to $\agg$ to have deterministic outcomes.
Instead of checking whether $\agg$ computes the same result on
all possible partitionings on any list for given $\zero$, $\seq$, 
and $\comb$, 
the corollary, instead, allows us to investigate properties for all elements of
$\imgof{\foldl(\seq, \zero)} \times \imgof{\foldl(\seq, \zero)}$
and 
$\alpha \times \imgof{\foldl(\seq, \zero)}$.
Our precise conditions reduce the need of checking all partitionings
to checking all elements of Cartesian products. 
It appears that deterministic outcomes from calls to combinators
can be verified automatically. The problem, however, remains difficult
for the following reasons:

\hide{
Note that $\imgof{\foldl(\seq, \zero)}\subseteq \beta$. These conditions remove one source of unboundedness when testing if an $\agg$ combinator has deterministic outcomes in the sense that they reduce the need of checking all lists in $\listof{\listof{\alpha}}$ to checking all elements in a subset of $(\alpha\cup\beta) \times \beta$.
Even with this reduction, if we do not put any restrictions on $\seq$ and $\comb$, the task still remains difficult for the following reasons:
}
\begin{enumerate}[(a)]

  \item  The domain $\imgof{\foldl(\seq, \zero)}$ can be infinite and in general not computable.\label{issue:uncomputable-domain}

  \item  Even if $\alpha$ and $\imgof{\foldl(\seq, \zero)}$ are computable, $\seq$ and $\comboper$ may not be computable. Na\"{i}vely enumerating elements in $\alpha$ and $\imgof{\foldl(\seq, \zero)}$ would not work.

  \item  Testing equality between elements of $\imgof{\foldl(\seq,\zero)\!}$ can be
    undecidable.

\end{enumerate}
Given $\seq :: \beta \times \alpha \to \beta$, recall that $\imgof{\foldl(\seq, \zero)}$ is a subset of $\beta$.
A sound but incomplete way to 
avoid
(\ref{issue:uncomputable-domain}) in practice is to
test the properties of $\comboper$ on all elements of $\beta$ instead.
If a~counterexample is found for some elements of $\beta$, the counterexample may not be valid in a~real \hscode{\kw{aggregate}} call because it may not belong to
$\imgof{\foldl(\seq, \zero)}$.
In practical cases, the sets $\alpha$ and $\beta$ are finite (such as machine integers) and equality between their elements is decidable. Even for such cases, checking if outcomes of $\agg$ are deterministic is still difficult since $\seq$ and $\comboper$ might not terminate for some input.
In many real Spark programs, however, $\seq$ and $\comboper$ are very simple and
thus computable (for instance, with only bounded loops or recursion).
A semi-procedure to test these conditions might work on such practical examples.



\vspace{-3.0mm}
\section{Case Studies} \label{section:examples}
\vspace{-2.0mm}
We evaluated advantages of our {\specspark} specification on several case studies.
In this section, we
first analyze a~Spark implementation of linear classification.
Using the \hscode{\kw{treeAggregate}} specification and its criteria for
deterministic outcomes, we construct inputs yielding non-deterministic outcomes
from the Spark implementation.
Second, we analyze an implementation of a standard scaler and find a
non-deterministic behavior there, too.
Yet more case studies are provided in App.~\ref{app:case_studies}.

\vspace{-3.0mm}
\subsection{Linear Classification}
\vspace{-2.0mm}

Linear classification is a well-known machine learning technique to
classify data sets. Fix a set of \emph{features}. A \emph{data
 point} is a vector of numerical feature values. A \emph{labeled} data
point is a data point with a discrete label. Given a labeled data set,
the \emph{classification problem} is to classify (new) unlabeled data
points by the labeled data set. A particularly useful subproblem is the
\emph{binary} classification problem. Consider, for instance, a 
data set of vital signs of some population; each data point is labeled
by the diagnosis of a disease (positive or negative). The binary
classification problem can be used to predict whether a person has the
particular disease. Linear classification solves the binary
classification problem by finding an optimal hyperplane to divide the
labeled data points. After a hyperplane is obtained,
linear classification predicts an unlabeled data point by the
half-space containing the point. Logistic regression and linear
Support Vector Machines (SVMs) are linear classification
algorithms. 

\enlargethispage{4mm}

Consider a~data set $\{ (\vec{x}_i, y_i) : 1 \leq i \leq n
\}$ of data points $\vec{x}_i \in \mathbb{R}^d$ labeled by
$y_i \in \{ 0, 1 \}$.
Linear classification can be expressed as a numerical optimization problem:
\vspace{-2mm}
\begin{equation*}
  \min\limits_{\vec{w} \in \mathbb{R}^d} f (\vec{w})
  \hspace{1em}
  \textmd{with}
  \hspace{1em}
  f (\vec{w}) =
  \xi R (\vec{w}) + 
  \frac{1}{n} \sum^n_{i=1} L (\vec{w}; \vec{x}_i, y_i)
\vspace{-2mm}
\end{equation*}
where $\xi \geq 0$ is a \emph{regularization parameter}, $R
(\vec{w})$ is a \emph{regularizer}, and $L (\vec{w}; \vec{x}_i, y_i)$
is a \emph{loss function}. A vector $\vec{w}$ corresponds to a
hyperplane in the data point space. The vector $\vec{w}_{\mathit{opt}}$ attaining the
optimum hence classifies unlabeled data points with criteria
defined by the objective function $f (\vec{w})$. Logistic regression and
linear SVM are but two instances of the optimization problem with
objective functions defined by different regularizers 
and loss functions.

In the Spark machine learning library, the numerical optimization
problem is solved by gradient descent. Very roughly, gradient descent
finds a local minimum of $f (\vec{w})$ by ``walking'' in the opposite
direction of the gradient of $f (\vec{w})$. 
The mean of subgradients
at data points is needed to compute the gradient of $f (\vec{w})$. The
Spark machine learning library invokes \hscode{\kw{treeAggregate}} to
compute the mean. Floating-point addition is used as
the \hscode{comb} parameter of the aggregate combinator. Since 
floating-point addition is not associative, we expect
to observe non-deterministic outcomes
(Proposition~\ref{lemma:treeAgg}). 

Consider the following three labeled data points: $-10^{20}$ labeled with $1$, $600$ labeled
with $0$, and $10^{20}$ labeled with $1$. We create a 20-partition RDD with an
equal number of the three labeled data points. The Spark machine
learning library function \hscode{LogisticRegressionWithSGD.train} is
used to generate a logistic regression model to predict the data
points $-10^{20}$, $600$, and $10^{20}$ in each run. Among 49 runs, 19
of them classify the three data points into two different classes: the
two positive data points are always classified in the same class, while
the negative data point in the other. The other 30 runs, however,
classify all three data points into the same class. We observe similar
predictions from \hscode{SVMWithSGD.train} with the same labeled data
points. 37 out of 46 runs classify the data points into two different
classes; the other 9 runs classify them into one class.
Interestingly, the data points are always classified
into two different classes by both logistic regression and linear SVM
when the input RDD has only three partitions. 
As we expected from our analysis of the function,
non-deterministic outcomes were witnessed in our Spark
distributed environment.

\vspace{-3.0mm}
\subsection{Standard Scaler}
\vspace{-2.0mm}
Standardization of data sets is a common pre-processing step in 
machine learning. 
Many machine learning algorithms
tend to perform better when the training set is similar to the standard
normal distribution.
In the Spark machine learning library, the class
\hscode{StandardScaler} is provided to standardize data sets.
The function \hscode{StandardScaler.fit} takes an RDD of raw data and
returns an instance of \hscode{StandardScalerModel} to transform data
points. Two transformations are available in \hscode{StandardScalerModel}. 
One standardizes a~data point by mean, and the other normalizes by 
variance of raw data. If data points in raw data are transformed by
mean, the transformed data points have the mean equal to~$0$.
Similarly, if they are transformed by variance, the transformed
data points have the variance~$1$.

The \hscode{StandardScaler} implementation uses \hscode{\kw{treeAggregate}} to
compute statistical information.
It uses floating-point addition to combine means of
raw data in different partitions.
As in the previous use case, since floating-point addition is not
associative, \hscode{StandardScaler} does not produce deterministic
outcomes (Section~\ref{sec:treeAggregate}). 
In our experiment, we create a 100-partition RDD with values
$-10^{20}, 600, 10^{20}$ of the same number of occurrences. The
mean of the data set is $(-10^{20} \times n + 600 \times n + 10^{20}
\times n) / (3n) = 200$ where $n$ is the number of occurrences of each
value. The data point $200$ should therefore be after standardization transformed to
$0$. In $50$ runs on the same data set in our distributed Spark platform,
\hscode{StandardScaler} transforms $200$ to a~range of values from $-944$
to $1142$, validating our prediction of a~non-deterministic outcome.

\enlargethispage{4mm}

\enlargethispage{4mm}

\vspace{-3.0mm}
\section{Related Work}
\vspace{-2.0mm}


{\bf MapReduce modeling and optimization.}
In the MapReduce (MR) computation,
various cost and performance models have been proposed~\cite{SLF13,herodotou2011profiling,modelMR,zhang13}.
These models estimate the execution time and resource requirements of MR jobs.
Karloff et al. developed a formal computation model for MR~\cite{ksv10} and showed how a variety of algorithms can exploit the combination of sequential and parallel computation in MR.
We are not aware of a~similar work in the context of Spark. To the best of our knowledge, our work is the first to address the problem of formal, functional specification of Spark aggregation.
Verifying the correctness of a MR program involves checking the commutativity and associativity of the reduce function.
Xu et al. propose various semantic criteria to model commonly held assumptions on MR programs~\cite{xu13semantic}, including determinism, partition isolation, commutativity, and associativity of map/reduce combinators.
Their empirical survey shows that these criteria are often overlooked by programmers and violated in practice.
A recent survey~\cite{xiao14mr} has found that a large number of industrial MR programs are, in fact, non-commutative.
Recent work has proposed techniques for checking commutativity of bounded reducers automatically~\cite{CHSW15}.
Because it is non-trivial to implement high-level algorithms using the MR framework, various approaches to compute optimized MR implementations have been proposed~\cite{GTA12,liu14,RFRS14}.
Emoto et al.~\cite{GTA12} formalize the algebraic conditions using semiring homomorphism, under which an efficient program based on the generate-test-aggregate programming model can be specified in the MR framework.
Given a monolithic {\it reduce} function, the work in~\cite{liu14} tries to decompose {\it reduce} into partial aggregation functions (similar to {\it seq} and {\it comb} in this paper) using program inversion techniques.
{\sc Mold}~\cite{RFRS14} translates imperative Java code into MR code by transforming imperative loops into {\it fold} combinators using semantic-preserving program rewrite rules.

\noindent
{\bf Numerical Stability under MapReduce.} Several works try to scale up machine learning algorithms for large datasets using MapReduce~\cite{mrml06,SLF13}.
To achieve numerically stable results across multiple runs~\cite{bennett09,tian12}, for example, preventing overflow, underflow and round-off errors due to finite-precision arithmetic, a variety of techniques are proposed~\cite{tian12}: generalizing sequential numerical stability techniques to distributed settings, shifting data values by constants, divide-and-conquer, etc.
We showed that simulating machine learning algorithms using our specification enables early detection of points of numerical instability.

\noindent
{\bf Relational Query Optimization.}
Relational query optimization is an extensively researched topic~\cite{Chaudhuri98,Io96}: the goal is to obtain equivalent but more efficient query expressions by exploiting the algebraic properties of the constituent operators, for instance, join, select, together with statistics on relations and indices. 
For example, while inner joins commute independent of data, left joins commute only in specific cases.
Query optimization for partitioned tables has received less attention~\cite{Hero11,db2partitioned}: because the key relational operators are not partition-aware, most work has focused on necessary but not sufficient conditions for query equivalence.
In contrast, we investigate determinism of Spark aggregate expressions, constructed using partition-aware {\it seq} and {\it comb} combinators.
We describe necessary and sufficient conditions under which these computations yield deterministic results independent of the data partitions.

\noindent
{\bf Deterministic Parallel Programming.}
In order to enable deterministic-by-default parallel programming~\cite{bocchino09,BS10,bocchino11,budimlic10,leijen11}, researchers have developed several programming abstractions and logical specification languages to ensure that programs produce the same output for the same input independent of thread scheduling.
For example, Deterministic Parallel Java~\cite{bocchino09,bocchino11} ensures exclusive writes to shared memory regions by means of verified, user-provided annotations over memory regions.
In contrast, deterministic outcomes from Spark aggregation depend on algebraic properties like commutativity and associativity of {\it seq} and {\it comb} functions and their interplay

\comment{

R. L. Bocchino, Jr., V. S. Adve, D. Dig, S. V. Adve, S. Heumann,
R. Komuravelli, J. Overbey, P. Simmons, H. Sung, and M. Vakilian.
A type A type and effect system for deterministic parallel Java. In OOPSLA, 2009

Z. Budimlic, M. Burke, V. Cav ´ e, K. Knobe, G. Lowney, R. Newton, ´
J. Palsberg, D. Peixotto, V. Sarkar, F. Schlimbach, and S. Tas¸irlar.
Concurrent Collections. Sci. Program., 18, August 2010.

D. Leijen, M. Fahndrich, and S. Burckhardt. Prettier concurrency:
purely functional concurrent revisions. In Haskell, 2011.

Map Reduce:

X Sakr S, Liu A, Fayoumi AG. The family of MapReduce and large-scale data processing systems. ACM Comp. Surveys Oct 2013; 46(1):44 pp. Article 11. 5, 36

Map Reduce + Machine Learning:
C. Chu, S. Kim, Y. Lin, Y. Yu, G. Bradski, A. Ng, and K. Olukotun.
Map-reduce for machine learning on multicore. NIPS 2006

D. Gillick, A. Faria, and J. DeNero. Mapreduce: Distributed computing
for machine learning, 2008.

--
Scalable and Numerically Stable Descriptive
Statistics in SystemML 
Yuanyuan Tian Shirish Tatikonda Berthold Reinwald
ICDE'12

J. Bennett, R. Grout, P. P´ebay, D. Roe, and D. Thompson. Numerically
stable, single-pass, parallel statistics algorithms. In IEEE International
Conference on Cluster Computing and Workshops, pages 1–8. IEEE,
2009.

---- automated verification of distributed system properties
Verifying Eventual Consistency of
Optimistic Replication Systems
Ahmed Bouajjani Constantin Enea Jad Hamza

tla+ at amazon.

formalization of TSO and other weak memory execution architectures.

Given a set M and an associative binary operator $\bop$ on M with
its identity $\idelm$, the pair $(M,\bop)$ is called a {\it monoid}. For example,
the set of integers with the usual plus operator forms a monoid
$(Z,+)$. 
}

\vspace{-3.0mm}
\section{Conclusion} \label{section:conclusion}
\vspace{-2.0mm}

In this paper, we give a~Haskell specification for various Spark
aggregate combinators.
We focus on aggregation of RDDs representing general sets, sets of pairs, and
graphs.
Based on our specification, we derive necessary and sufficient conditions that guarantee deterministic
outcomes of the considered Spark aggregate combinators.
We investigate several case studies and use the conditions to predict
non-deterministic outcomes. 
Our executable specification can be used by developers
for more detailed analysis and efficient development of
distributed Spark programs.
We also believe that our specifications are valuable resources for research communities to understand Spark better.

\hide{
We present a Haskell specification for various Spark aggregate
combinators for values, pairs, and graph RDDs. Based on our
specification, requirements for deterministic outcomes from Spark
aggregation are characterized. Non-deterministic outcomes predicted by
our characterizations are witnessed. Our executable specification
moreover enables more detailed analysis and efficient development of
distributed Spark programs.
}

There are several future directions.
The conditions for deterministic outcomes of aggregate combinators could be
used for:
\begin{inparaenum}[(i)]
  \item  creating fully mechanized proofs for properties about data-parallel programs;
  \item  developing automatic techniques for detecting non-deterministic outcomes of 
    data-parallel programs; and
  \item  synthesizing deterministic concurrent programs
    from sequential specifications.
\end{inparaenum}
We have formalized the
proofs of some crucial lemmas in Agda~\cite{purespark}.
Using Scalaz~\cite{scalaz-github}, verified Haskell specifications 
can be translated to Spark programs to ensure determinism by construction.

\hide{
Lots of opportunities follow from our first executable formal
specification. Our specification could be adopted to carry out
mechanized proofs about Spark aggregation with expressive type
systems. If Spark programs adopted our specification, their properties
could also be proved formally. Using Scalaz~\cite{scalaz-github}, Haskell specifications 
moreover could be translated to Spark programs. One could analyze
Spark programs formally.

Our specification also opens the door for automatic verification. 
Non-deterministic outcomes are predicted and observed by our
characterization. Such a deviant behavior is unique and undesirable in
data parallel computation. Automatic program analysis techniques could
be developed to identify such unexpected outcomes in Spark programs. 
Other undesirable behaviors unique in data parallel computation could
also be identified from our specification.
}

\enlargethispage{4mm}

\smallskip\noindent\emph{Acknowledgement.}
This work was supported by
the Czech Science Foundation (project 17-12465S),
the BUT FIT project FIT-S-17-4014,
the IT4IXS: IT4Innovations Excellence in Science project (LQ1602), and
Ministry of Science and Technology, R.O.C. (MOST projects 103-2221-E-001-019-MY3 and 103-2221-E-001-020-MY3).

\enlargethispage{2mm}

\vspace{-3mm}
\bibliographystyle{splncs03}
\bibliography{literature}

\clearpage
\appendix

\vspace{-0.0mm}
\section{Graph RDDs}\label{app:graph_rdds}
\vspace{-0.0mm}

Using RDDs, Spark provides a framework to analyze graphs distributively.
In the Spark GraphX library, each vertex in a graph is designated by a
\hscode{VertexId}, and associated with a~vertex attribute.
Each edge on the other hand is represented by \hscode{VertexId}s of its source and destination
vertices. An edge is also associated with an edge attribute.
\vspace{-1mm}
\begin{lstlisting}[language=Haskell,basicstyle=\sffamily\footnotesize]
type VertexId = Int
type VertexRDD $\alpha$ = PairRDD VertexId $\alpha$
type EdgeRDD $\beta$ = RDD (VertexId, VertexId, $\beta$)
data GraphRDD $\alpha$ $\beta$ = Graph { vertexRdd :: VertexRDD $\alpha$, edgeRdd :: EdgeRDD $\beta$ }
\end{lstlisting}
\vspace{-1mm}
Let \hscode{graphRdd} be a graph RDD. Its vertex RDD \hscode{(vertexRdd
 graphRdd)} contains pairs of vertex identifiers and
attributes. Different from conventional pair RDDs, each vertex
identifier can appear at most once in the vertex RDD since a vertex is
associated with exactly one attribute. If, for
instance, two pairs with the same vertex identifier are generated during
computation, their associated attributes must be merged
to obtain a~valid vertex RDD. The edge RDD \hscode{(edgeRdd graphRdd)} 
consists of triples of source and destination vertex identifiers, and 
edge attributes. Multi-edged directed graphs are allowed.
In a graph RDD, the vertex and edge RDDs need to be
consistent. That is, the source and destination vertex identifiers of
any edge from the edge RDD must appear in the vertex RDD of the graph RDD.

\enlargethispage{2mm}

The Spark GraphX library provides aggregate combinators for
graph RDDs. We begin with an informal description of a slightly
more general 
\hscode{\kw{aggregateMessagesWithActiveSet}}
combinator~(Algorithm~\ref{algorithm:pseudo-code-aggregateMessages}).
The combinator takes functions
\hscode{sendMsg} and \hscode{mergeMsg}, and a list
\hscode{active} of vertices as its parameters. 
The list \hscode{active} determines \emph{active} edges, that is,
edges with source or destination vertex identifiers in
\hscode{active}. For each active edge, the function
\hscode{\kw{aggregateMessagesWithActiveSet}} invokes \hscode{sendMsg} to
send messages to its vertices. Messages sent to each vertex are merged
by \hscode{mergeMsg}. Since a vertex is associated with at most one
message after merging, the result is a~valid vertex RDD.

\begin{algorithm}[H]
\footnotesize
{

  \ForEach{active edge $e$}
  {
    call \textsf{sendMsg} on $e$ to send messages to vertices of $e$\;
  }
  \ForEach{vertex $v$ receiving messages}
  {
    call \textsf{mergeMsg} to merge all messages sent to $v$\;
  }
  \Return a vertex RDD with merged messages\;
}
  \vspace{1mm}
  \label{algorithm:pseudo-code-aggregateMessages}
  \caption{\hscode{\kw{aggregateMessagesWithActiveSet}}}
\end{algorithm}

Formally, the function \hscode{sendMsg} accepts source and destination 
vertex identifiers, attributes of the vertices, and the edge attribute of an
edge as inputs. It sends messages to the source or destination vertex, 
both, or none.
In our specification,
\hscode{\kw{lookup}} is used to obtain vertex attributes from a vertex
RDD. We generate a pair RDD of vertex identifiers and messages by invoking \hscode{sendMsg} on every
active edge. The messages associated with the same vertex are
then merged by applying \hscode{\kw{reduceByKey}} on the pair
RDD. The resultant vertex RDD contains merged messages as vertex
attributes. We call it a \emph{message} RDD for clarity.
Note that if a vertex from the input graph RDD does not receive
any message, it is not present in the output message RDD. The 
combinator \hscode{\kw{aggregateMessages}} in the Spark GraphX
library is defined by \hscode{\kw{aggregateMessagesWithActiveSet}}. 
It invokes \hscode{\kw{aggregateMessagesWithActiveSet}} by passing the list of
all vertex identifiers as the \hscode{active} list.
The combinator effectively applies \hscode{sendMsg} to every 
edge in a graph RDD. 
\begin{lstlisting}[language=Haskell,basicstyle=\sffamily\footnotesize,morekeywords={aggregateMessagesWithActiveSet,reduceByKey,aggregateMessages}]
aggregateMessagesWithActiveSet ::
         (VertexId $\rightarrow$ $\alpha$ $\rightarrow$ VertexId $\rightarrow$ $\alpha$ $\rightarrow$ $\beta$ $\rightarrow$ [(VertexId, $\gamma$)]) 
         $\rightarrow$ ($\gamma$ $\rightarrow$ $\gamma$ $\rightarrow$ $\gamma$) $\rightarrow$ [VertexId] $\rightarrow$ GraphRDD $\alpha$ $\beta$ $\rightarrow$ VertexRDD $\gamma$
aggregateMessagesWithActiveSet sendMsg mergeMsg active graphRdd =
  let isActive (srcId, dstId, _) = srcId `elem` active || dstId `elem` active
     vAttrs = concat (vertexRdd graphRdd)
     f edge = if isActive edge then
                let (srcId, dstId, edgeAttr) = edge
                   srcAttr = fromJust (lookup srcId vAttrs)
                   dstAttr = fromJust (lookup dstId vAttrs)
                in sendMsg srcId srcAttr dstId dstAttr edgeAttr
              else []
     pairRdd = map (concatMap f) (edgeRdd graphRdd)
  in reduceByKey mergeMsg pairRdd

aggregateMessages :: (VertexId $\rightarrow$ $\alpha$ $\rightarrow$ VertexId $\rightarrow$ $\alpha$ $\rightarrow$ $\beta$ $\rightarrow$ [(VertexId, $\gamma$)])
         $\rightarrow$ ($\gamma$ $\rightarrow$ $\gamma$ $\rightarrow$ $\gamma$) $\rightarrow$ GraphRDD $\alpha$ $\beta$ $\rightarrow$ VertexRDD $\gamma$
aggregateMessages sendMsg mergeMsg graphRdd =
  let vertices = concatMap (map fst) (vertexRdd graphRdd)
  in aggregateMessagesWithActiveSet sendMsg mergeMsg vertices graphRdd 
\end{lstlisting}


Many graph algorithms perform fixed point computation. 
The Spark GraphX library hence provides a Pregel-like function to
apply \hscode{\kw{aggregateMessages}} on a graph RDD
repetitively~\cite{MABDHLC:2010:PLSGP}. The Spark
\hscode{\kw{pregel}} function takes four input parameters \hscode{initMsg}, \hscode{vprog},
\hscode{sendMsg}, and \hscode{mergeMsg}~(Algorithm~\ref{algorithm:pseudo-code-pregel}). 
At initialization, 
it updates vertex attributes of the graph RDD by invoking
\hscode{vprog} with the initial message \hscode{initMsg}. The
\hscode{\kw{pregel}} function then calls \hscode{\kw{aggregateMessages}} to
obtain a message RDD.
If a vertex 
receives a message, its attribute is updated by \hscode{vprog} with
the message. After updating vertex attributes, \hscode{\kw{pregel}} obtains
a new message RDD by invoking
\hscode{\kw{aggregateMessagesWithActiveSet}} with the active list equal to
message-receiving vertices. Subsequently, 
only edges connecting to such vertices can send new messages.  

\DecMargin{2mm}
\begin{algorithm}[H]
{\footnotesize
  \ForEach{vertex $v$ in G}
  {
    call \textsf{vprog} on $v$ with \textsf{initMsg} to obtain its
    initial vertex attribute\; 
  }
  msgRdd $\leftarrow$ call \hscode{\kw{aggregateMessages}} on G\; 
  \While{msgRdd is not empty}
  {
    \ForEach{vertex $v$ with message $m$ in msgRdd}
    {
      call \textsf{vprog} on $v$ with $m$ to update its vertex attribute on G\;
    }
    msgRdd $\leftarrow$ call \hscode{\kw{aggregateMessagesWithActiveSet}} 
    with \textsf{active} equal to the vertices in msgRdd\;
  }
  \Return G\;
}
  \vspace{-1mm}
  \caption{\hscode{\kw{pregel}}}
  \label{algorithm:pseudo-code-pregel}
\end{algorithm}


We use several auxiliary functions to specify the Spark
\hscode{\kw{pregel}} function. Given a function computing an attribute from
a vertex identifier and an attribute, the auxiliary
function \hscode{\kw{mapVertexRDD}} applies the function to every vertex in
a vertex RDD and obtains another vertex RDD with new attributes.
The \hscode{\kw{mapVertexRDD}} function is used in \hscode{\kw{mapVertices}} to update vertex attributes in
graph RDDs. Moreover, recall that \hscode{\kw{aggregateMessagesWithActiveSet}}
returns a message RDD. The auxiliary function
\hscode{\kw{joinGraph}} updates a graph RDD with messages in a message
RDD. For each vertex in the graph RDD, its attribute is joined with
the message in the message RDD. If there is no message, the
vertex attribute is left unchanged. The \hscode{\kw{pregel}} function
sets up the initial graph RDD by \hscode{\kw{mapVertices}}. It then
computes the initial message RDD by
\hscode{\kw{aggregateMessages}}. In each iteration, a new graph RDD is
obtained by joining the graph RDD with a message
RDD. \hscode{\kw{aggregateMessagesWithActiveSet}} is then invoked to compute 
a new message RDD for the next iteration. The \hscode{\kw{pregel}} function
terminates when no more message is sent.
\begin{lstlisting}[language=Haskell,basicstyle=\sffamily\footnotesize, columns=fullflexible,morekeywords={mapVertexRDD,mapVertices,joinGraph,pregel,aggregateMessages,aggregateMessagesWithActiveSet}]
mapVertexRDD :: (VertexId $\rightarrow$ $\alpha$ $\rightarrow$ $\beta$) $\rightarrow$ VertexRDD $\alpha$ $\rightarrow$ VertexRDD $\beta$
mapVertexRDD f vRdd = map (map ($\lambda$(i, attr) $\rightarrow$ (i, f i attr))) vRdd

mapVertices :: (VertexId $\rightarrow$ $\alpha$ $\rightarrow$ $\gamma$) $\rightarrow$ GraphRDD $\alpha$ $\beta$ $\rightarrow$ GraphRDD $\gamma$ $\beta$
mapVertices updater gRdd = Graph {
  vertexRdd = mapVertexRDD updater (vertexRdd gRdd),
  edgeRdd = edgeRdd gRdd }

joinGraph :: (VertexId $\rightarrow$ $\alpha$ $\rightarrow$ $\gamma$ $\rightarrow$ $\alpha$) $\rightarrow$ GraphRDD $\alpha$ $\beta$
            $\rightarrow$ VertexRDD $\gamma$ $\rightarrow$ GraphRDD $\alpha$ $\beta$
joinGraph joiner gRdd msgRdd = let assoc = concat msgRdd
     updt i attr = case lookup i assoc of Just v  $\rightarrow$ joiner i attr v
                                       $\,$Nothing $\rightarrow$ attr
  in mapVertices updt gRdd

pregel :: $\gamma$ $\rightarrow$ (VertexId $\rightarrow$ $\alpha$ $\rightarrow$ $\gamma$ $\rightarrow$ $\alpha$) $\rightarrow$
        (VertexId $\rightarrow$ $\alpha$ $\rightarrow$ VertexId $\rightarrow$ $\alpha$ $\rightarrow$ $\beta$ $\rightarrow$ [(VertexId, $\gamma$)])
        $\rightarrow$ ($\gamma$ $\rightarrow$ $\gamma$ $\rightarrow$ $\gamma$) $\rightarrow$ GraphRDD $\alpha$ $\beta$ $\rightarrow$ GraphRDD $\alpha$ $\beta$
pregel initMsg vprog sendMsg mergeMsg graphRdd =
  let initG = let init_f i attr = vprog i attr initMsg 
             in mapVertices init_f graphRdd
     initMsgRdd = aggregateMessages sendMsg mergeMsg initG
     loop curG [] = curG
     loop curG msgRdd = let newG = joinGraph vprog curG msgRdd
          active = concatMap (map fst) msgRdd
          msgRdd' = aggregateMessagesWithActiveSet
                       sendMsg mergeMsg active newG
       in loop newG msgRdd'
  in loop initG initMsgRdd
\end{lstlisting}

\vspace{-0.0mm}
\subsection{Deterministic Aggregation in Graph Rdds}\label{sec:label}
\vspace{-0.0mm}

In this section, we explore necessary and sufficient conditions for aggregation
in graph RDDs.
%
%
%
In particular, we investigate deterministic outcomes of calls to the function
$\aggMsgs(\send, \comboper, \graphrdd)$ for $\send :: \vertexid \times \alpha \times \vertexid \times \alpha \times \beta \to \listof{(\vertexid, \gamma)}$, 
$\comboper :: \gamma \times \gamma \to \gamma$, 
and $\graphrdd :: \GraphRDD\ \alpha\ \beta$.
We define deterministic outcomes first.
\begin{definition}
Calls to the function $\aggMsgs(\!\send\!,\!\comboper,\!\graphrdd\!)$ have
\emph{deterministic outcomes} if for any
two graph RDD representations of the same graph
\vspace{-1ex}
\begin{align*}
\MoveEqLeft
\graphrdd_1, \graphrdd_2 :: \hscode{GraphRDD}\ \alpha\ \beta,
\end{align*}
we have for all vertex identifiers $v :: \vertexid$,
\vspace{-1ex}
\begin{align*}
\lookup(v, \aggMsgs(\send, \comboper, \graphrdd_1))=\\
\lookup(v, \aggMsgs(\send, \comboper, \graphrdd_2)).
\end{align*}
\end{definition}

The following proposition gives a sufficient condition for $\aggMsgs$ to
have deterministic outcomes.
\begin{restatable}{proposition}{lemDetAggMsg}\label{lem:lemDetAggMsg}
It holds that if calls to the function $\redBKey(\comboper, \rdd)$ have deterministic outcomes,
then calls to the function $\aggMsgs(\send,
\comboper, \graphrdd)$ also have deterministic outcomes.
\end{restatable}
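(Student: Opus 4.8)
The plan is to reduce the claim to the determinism of $\redBKey$, which by Proposition~\ref{lem:lemRedBKeyIsRed} and Corollary~\ref{col:det-red} is equivalent to $(\gamma, \comboper)$ being a commutative semigroup. First I would unfold the definitions of $\aggMsgs$ and of \hscode{aggregateMessagesWithActiveSet}. Because $\aggMsgs$ passes the list of \emph{all} vertex identifiers as the active set, and because every graph RDD is consistent (both endpoints of each edge occur in the vertex RDD), every edge is active; thus the branch of the local function \hscode{f} that returns the empty list is never taken and \hscode{sendMsg} is applied to every edge. The only chaotic function occurring in $\aggMsgs$ is the one hidden inside the final $\redBKey$ call: the pair RDD \hscode{pairRdd} of vertex-identifier/message pairs is built by an ordinary \hscode{map (concatMap f)} over the edge RDD, which is deterministic for a fixed representation.

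The crux is then to show that two representations $\graphrdd_1,\graphrdd_2$ of the same graph yield pair RDDs $\mathit{pairRdd}_1,\mathit{pairRdd}_2$ whose flattenings $\concat(\mathit{pairRdd}_1)$ and $\concat(\mathit{pairRdd}_2)$ contain the \emph{same multiset} of pairs. Two facts are needed here. First, the attribute looked up for each endpoint is independent of the representation: since a valid vertex RDD associates at most one attribute with every vertex identifier, the lookup of an endpoint's attribute returns the same value no matter how the vertices are distributed among partitions or ordered within them. Second, the two edge RDDs carry the same multiset of edges (multi-edges are allowed), so applying the same $\send$ to corresponding edges produces the same per-edge message lists; the overall collection of emitted pairs therefore differs between the two representations only in order and partitioning.

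With this in hand, I would invoke the hypothesis. Writing $\mathit{pairRdd}_i = \partit_i(\datalist_i)$ for suitable flat lists $\datalist_i$ and partitionings $\partit_i$, the assumed determinism of $\redBKey$ gives, for every vertex identifier $v$,
\[
  \lookup(v, \redBKey\dt(\comboper, \partit_i(\datalist_i))) = \redl(\comboper, \filterkey(v, \datalist_i)) .
\]
Here $\lookup$ on the (possibly repartitioned) output is well defined because each key occurs at most once in the result of $\redBKey$, so the trailing \hscode{repartition!} cannot affect it. Finally, since $\redBKey$ is assumed deterministic, $(\gamma, \comboper)$ is a commutative semigroup by Proposition~\ref{lem:lemRedBKeyIsRed} and Corollary~\ref{col:det-red}; hence $\redl(\comboper, \cdot)$ depends only on the multiset of its input list. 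As $\filterkey(v, \datalist_1)$ and $\filterkey(v, \datalist_2)$ are permutations of one another (they enumerate the same multiset of messages addressed to $v$), the two right-hand sides coincide, and therefore $\lookup(v, \aggMsgs(\send, \comboper, \graphrdd_1)) = \lookup(v, \aggMsgs(\send, \comboper, \graphrdd_2))$.

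I expect the main obstacle to be the bookkeeping in the second step: precisely formalizing ``representations of the same graph'' and proving the multiset-equality of emitted pairs, including the order-independence of the vertex-attribute lookups and the treatment of the final \hscode{repartition!} under $\lookup$. Once the emitted pairs are shown to agree as multisets, the commutativity and associativity supplied by the hypothesis finish the argument routinely.
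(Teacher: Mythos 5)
Your proposal is correct and follows essentially the same route as the paper's own proof: the only source of non-determinism in $\aggMsgs$ is the final $\redBKey$ call on the emitted pair RDD, so its assumed determinism yields $\redl(\comboper, \filterkey(v, \cdot))$ for every vertex, independent of the graph representation. The paper's version is considerably terser and leaves implicit the points you spell out (all edges being active, representation-independence of the attribute lookups, and the multiset equality of the emitted pairs across representations), so your added bookkeeping is a faithful elaboration rather than a different argument.
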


\vspace{-0.0mm}
\section{Extended Set of Case Studies}\label{app:case_studies}
\vspace{-0.0mm}

This section of the appendix gives yet more case studies that we explored when
analyzing Spark's machine learning and graph libraries.

\vspace{-0.0mm}
\subsection{Vertex Coloring}
\vspace{-0.0mm}
Let $\Gamma = \{1,...,k\}$ denote the set of $k$ \emph{colors}.
Given an undirected graph $G=(V,E)$, a \emph{$k$-coloring} of $G$ is a
map $C: V \rightarrow \Gamma$ such that $C(v) \neq C(u)$ for any
$\{ v, u \} \in E$. 
In this case study, we will implement the Communication-Free Learning
(CFL) algorithm~\cite{leith2006convergence} to find a $k$-coloring using
the Spark GraphX library. Let $0 < \beta < 1$. 
The algorithm computes a $k$-coloring by iterations. 
We say a vertex $v$ is \emph{inactive} if all vertices
adjacent to $v$ 
have colors different from the color of $v$.
Otherwise, $v$ is \emph{active}. 
At the $n$-th iteration, the CFL
algorithm randomly chooses a color $C_n(v) \in \Gamma$ by the color
distribution $P_n (v, \bullet)$ of $v$. 
The color distribution $P_n (v, \bullet)$ is defined as follows. 
For $n = 0$, $P_0(v, c) = 1/k$ for
all $v \in V$ and $c \in \Gamma$. Each vertex hence chooses one
of the $k$ colors uniformly at random. For $n > 0$, let $c = C_{n-1}
(v)$ be the color of $v$ in the previous iteration.
\begin{itemize}
\item If $v$ is inactive, define $P_n(v, c) = 1$ and $P_n(v, d) = 0$
  for $d \neq c$. Thus $v$ does not change its color.
\item Otherwise, define
\vspace{-1ex}
  \begin{equation*}
    \hspace{-3.2mm}P_{n}(v, d) =
    \begin{cases}
      (1-\beta)\cdot P_{n-1} (v, c) & \textmd{if } d = c\\
      (1-\beta)\cdot P_{n-1}(v,d)+\beta/(k-1)\!\!\! & \textmd{if } d \neq c
    \end{cases}
  \end{equation*}
  Thus $v$ is more likely to choose a color different from~$c$.
\end{itemize}
Observe that $C_n$ stabilizes if and only if it is a $k$-coloring. 

\hide{
The CFL algorithm can be parallelized as follows. 
\begin{enumerate}
\item Initialize $C_0(v)$ and $P_0(v,\bullet)$ for each vertex $v$.
\item A) An active vertex $v$ is deactivated if no vertices in $N(v)$ share the same color with $v$. 
        An inactive vertex $v$ is activated if there is a vertex in $N(v)$ sharing the same color with $v$.
        \vspace{.5em} \\
        B) Each vertex $v$ computes $C_n(v)$ and $P_n(v,\bullet)$ upon the $n$-th change of the vertex state (from active to inactive or vice versa).
  \item Repeat Step 2 until all vertices are simultaneously inactive.
\end{enumerate}
}

We implement the CFL algorithm using \hscode{\kw{pregel}} in {\specspark}.
For each vertex $v$, its attribute consists
of the vertex color $C_n(v)$, the color distribution $P_n(v,\bullet)$,
the vertex state (active or not), and a random number generator.
As in
Section~\ref{section:triangle-count}, 
an edge \hscode{($u$, $v$, $\uscore$)} with $u \geq v$ in an edge RDD
represents $\{ u, v \} \in E$.
Given a graph RDD \hscode{graphRdd}, we construct its base graph
\hscode{baseG} with
initial vertex attributes.
\vspace{-1mm}
\begin{lstlisting}[language=Haskell,basicstyle=\sffamily\footnotesize]
initDist = map ($\lambda$_ $\rightarrow$ 1.0 / fromIntegral k) [1..k]

baseG = mapVertices ($\lambda$i _ $\rightarrow$ let (c, g) = randomR (1, k) (mkStdGen i)
                             in (c, initDist, True, g)) graphRdd
\end{lstlisting}
\vspace{-1mm}
where \hscode{initDist} is the uniform distribution over $k$ colors.

Consider the following \hscode{sendMsg} function:
\vspace{-1mm}
\begin{lstlisting}[language=Haskell,basicstyle=\sffamily\footnotesize]
sendMsg srcId (srcColor, _, srcActive, _) dstId (dstColor, _, dstActive, _) _ =
  if srcColor == dstColor then [(srcId, True), (dstId, True)]
  else (if srcActive then [(srcId, False)] else []) $\cat$
      (if dstActive then [(dstId, False)] else [])
mergeMsg msg1 msg2 = msg1 || msg2
\end{lstlisting}
\vspace{-1mm}

If the source and destination vertices of an edge have the same color,
\hscode{sendMsg} sends \hscode{\kw{True}} to both vertices to update
vertex attributes. 
If they have different colors and the source vertex is
active, \hscode{\kw{False}} is sent to the source vertex. Similarly,
\hscode{\kw{False}} is sent to the destination vertex if the vertex is
active. \hscode{mergeMsg} is the disjunction of
messages. After applying \hscode{\kw{aggregateMessagesWithActiveSet}} with
\hscode{sendMsg} and \hscode{mergeMsg}, a vertex may
receive a Boolean message. If a vertex receives \hscode{\kw{True}}, it
becomes active since one of its neighbors has the same color. 
Otherwise, the vertex becomes inactive.

We use \hscode{vprog} to update vertex attributes. For each vertex
receiving a message, its vertex state, color, and color distribution
are updated according to the CFL algorithm.
The auxiliary function \hscode{sampleColor}
chooses a color randomly by the color distribution. 
The \hscode{helper} function in \hscode{vprog} computes the 
color distribution $P_n (v, \bullet)$ for the next iteration.
\vspace{-1mm}
\begin{lstlisting}[language=Haskell,basicstyle=\sffamily\footnotesize]
sampleColor dist p = let f (color, mass) weight =
       $\ $(if m < p then succ color else color, m)
           where m = mass + weight
  in fst (foldl f (1, 0.0) dist)

vprog _ (c, dist, _, g) active = let helper (i, res) weight =
        let decay = weight * (1 - beta)
           d = decay + (if c == i then 0 else beta / fromIntegral (numColors-1))
           e = if c == i then 1.0 else 0.0
        in (succ i, if active then res $\cat$ [d] else res $\cat$ [e])
     dist' = snd (foldl helper (1, []) dist)
     (p, g') = random g
     c' = if active then sampleColor dist' p else c
  in (c', dist', active, g')
\end{lstlisting}
\vspace{-1mm}

Finally, we invoke \hscode{\kw{pregel}} to compute a $k$-coloring:
\vspace{-1mm}
\begin{lstlisting}[language=Haskell,basicstyle=\sffamily\footnotesize,morekeywords={pregel}]
coloring = pregel True vprog sendMsg mergeMsg baseG
\end{lstlisting}
\vspace{-1mm}
We test our executable Haskell specification on a typical Linux server. 
Since our Spark specification {\specspark}  is faithful to Spark APIs, 
we realize it in the GraphX library with little manual effort. 
Our implementation works as intended on the distributed Spark platform.

\hide{
It is shown in \cite{duffy2008complexity} that the CFL algorithm colors a graph in exponential time with high probability. More precisely, given any $\epsilon \in (0,1)$, the algorithm can find a $k$-coloring for a $k$-colorable graph with probability $1-\epsilon$ in $O(N\exp^{cN\lg 1/\epsilon})$ iterations, where $N$ is the number of vertices and $c$ is a constant depending on $\beta$.
}

\subsection{Connected Components}
The Spark GraphX library implements a connected component algorithm
for direct graphs. The documentation however does not explain what
connected components are in directed graphs. We will find out
what the implementation does here. Consider
the following {\specspark} specification extracted from the
Spark implementation: 
\vspace{-1mm}
\begin{lstlisting}[language=Haskell,basicstyle=\sffamily\footnotesize,morekeywords={pregel,mapVertices,connectedComponent}]
connectedComponent graphRdd = 
  let baseG = mapVertices ($\lambda$i _ $\rightarrow$ i) graphRdd
     initMsg = maxBound :: Int
     sendMsg src srcA dst dstA _ =
       if srcA < dstA then [(dst, srcA)]
       else if dstA < srcA then [(src, dstA)]
       else []
     vprog _ attr msg = min attr msg
  in pregel initMsg vprog sendMsg min baseG
\end{lstlisting}
\vspace{-1mm}
Given a graph RDD \hscode{graphRdd}, its base graph \hscode{baseG} is obtained by
setting the attribute of a vertex to the identifier of the vertex. 
\hscode{sendMsg} compares the attributes of the source and destination
vertices of an edge. The smaller attribute is sent to the vertex with the larger
attribute. If both attributes are equal, no message is sent. If a
number of messages are sent to a vertex, only the minimal message
remains after applying \hscode{\kw{aggregateMessagesWithActiveSet}} with
\hscode{sendMsg} and \hscode{\kw{min}}. When a vertex receives a
message, its attribute is set to the minimum of its attribute and
the message.

Consider a graph $G = (V, E)$ with $E \subseteq V \times V$. We 
use $\attr{v}$ for the attribute of the vertex $v \in V$. Two vertices
$u$ and $v$ are \emph{linked} if $(u, v) \in E$ or $(v, u) \in E$.
Using our specification of \hscode{\kw{pregel}}, it is not hard to see that the
{\specspark} specification implements Algorithm~\ref{algorithm:connected-components}.
Note that the two for-each loops essentially propagate
minimal attributes to linked vertices. When the set
$\mathit{active}$ is empty, the attributes of every linked vertices
are equal and the algorithm terminates. We say two vertices $u$ and $v$ are
\emph{connected} if there are $w_0 = u, w_1, \ldots, w_k = v$
such that $w_i$ and $w_{i+1}$ are linked for $0 \leq i < k$. When
\hscode{\kw{connectedComponent}} terminates, connected vertices
have the same attribute equal to the minimal vertex
identifier among them. Hence the Spark
implementation returns a graph RDD whose vertex attributes are the
minimal vertex identifiers of connected vertices.

\begin{algorithm}
{\footnotesize
  $\attr{v}$ $\leftarrow$ the vertix identifier of $v$\;
  $\mathit{active}$ $\leftarrow$ $V$\;
  \While{$\mathit{active} \neq \emptyset$}
  {
    $\mathit{active}'$ $\leftarrow$ $\emptyset$\;
    \ForEach{$v \in \mathit{active}$}
    {
      \uIf{$\attr{u} < \attr{v}$ for some $u$ linked with $v$}
      {
        send $\attr{u}$ to $v$ and add $v$ to $\mathit{active}'$
      }
      \uIf{$\attr{v} < \attr{u}$ for some $u$ linked with $v$}
      {
        send $\attr{v}$ to $u$ and add $u$ to $\mathit{active}'$
      }
    }
    \ForEach{$v \in \mathit{active}'$}
    {
      $\attr{v}$ $\leftarrow$ the minimal attribute sent to $v$
    }
    $\mathit{active}$ $\leftarrow$ $\mathit{active}'$\;
  }
}  
  \caption{\hscode{\kw{connectedComponents}}}
  \label{algorithm:connected-components}
\end{algorithm}

\hide{
The input graph $G$ uses the vertex id as the initial attribute of each vertex.
The \hscode{vprog} function returns the smaller value between a vertex attribute and a received message.
The \hscode{sendMsg} function compares for each edge $(v,u)\in E$ the attributes $a(v), a(u)$ of vertices $v,u$. It emits message $(u, a(v))$ if $a(u)>a(v)$, and emits $(v, a(u))$ if $a(v)>a(u)$. The \hscode{\kw{min}} function returns the smaller value between the two inputs.

According to the specification, the \hscode{pregel} function consists of the following steps:

{\bfseries Step 1:} It updates the attribute of each vertex by calling \hscode{vprog} on it with the initial message \hscode{initMsg}. In this case, no attributes are changed because \hscode{initMsg} is larger than or equal to any possible attribute values.

{\bfseries Step 2:} The function \hscode{aggregateMessages} is then invoked with the input graph $G$. It applies \hscode{sendMsg} to all edges $(v,u)\in E$ and then applies \hscode{reduceByKey} with the function \hscode{min}, which generates a set $M$ of messages such that for each message $(v,k) \in M$, 
i)   $k$ is smaller than the attribute of $v$,
ii)  $k$ is the smallest attribute among those of all vertices in $N(v)$, and
iii) there exists no other message $(v', k') \in M$ with $v' = v$.

{\bfseries Step 3:} If the set $M$ of messages is non-empty, then the \hscode{vprog} function is invoked to update the vertex attributes. That is, for each message $(v,k) \in M$, the attribute of vertex $v$ is updated to the smaller value between $a(v)$ and $k$.

{\bfseries Step 4:}   
Collect the set of active vertices, i.e., the set of vertices present in the messages emitted by \hscode{sendMsg}, and invoke the function \hscode{aggregateMessagesWithActiveSet}. The function works in a similar way as \hscode{aggregateMessages} does. The only difference is that it applies \hscode{sendMsg} only to edges containing at least one active vertex. If at the end the function generates a non-empty set of messages, then go back to {\bfseries Step 3} and re-iterate.

Below, we illustrate the algorithm by running the specification on a concrete example.
Each vertex is labeled with its attribute.

\begin{tikzpicture}[->,>=stealth',shorten >=1pt,auto,node distance=1.5cm, 
  thick,main node/.style={circle,fill=white,draw,font=\sffamily\large\bfseries}]

  \node[main node, label={$v_1$}] (1) {1};
  \node[main node, label={$v_2$}] (2) [right of=1] {2};
  \node[main node, label={$v_3$}] (3) [right of=2] {3};
  \node[main node, label={$v_4$}] (4) [right of=3] {4};

  \path[every node/.style={font=\sffamily\small}]
    (1) edge [bend left](2)
    (2) edge [bend left](1)
    (2) edge (3)
    (4) edge (3);
\end{tikzpicture}
\\
In {\bfseries Step 1}, no vertex changes its attribute. In {\bfseries Step 2}, a set of messages $\{(v_2,1), (v_3,2), (v_4,3) \}$ is generated. Then the vertex attributes are updated by \hscode{vprog}, and the following graph is resulted.

\begin{tikzpicture}[->,>=stealth',shorten >=1pt,auto,node distance=1.5cm,
  thick,main node/.style={circle,fill=white,draw,font=\sffamily\large\bfseries}]

  \node[main node, label={$v_1$}] (1) {1};
  \node[main node, label={$v_2$}] (2) [right of=1] {1};
  \node[main node, label={$v_3$}] (3) [right of=2] {2};
  \node[main node, label={$v_4$}] (4) [right of=3] {3};

  \path[every node/.style={font=\sffamily\small}]
    (1) edge [bend left](2)
    (2) edge [bend left](1)
    (2) edge (3)
    (4) edge (3);
\end{tikzpicture}
\\
In {\bfseries Step 4}, the active set is $\{v_2,v_3,v_4\}$, and the messages produced are $\{(v_3,1),(v_4,2)\}$. Because the set of messages is non-empty, the procedure goes back to {\bfseries Step 3}, which in turn updates the graph to the one below.

\begin{tikzpicture}[->,>=stealth',shorten >=1pt,auto,node distance=1.5cm,
  thick,main node/.style={circle,fill=white,draw,font=\sffamily\large\bfseries}]

  \node[main node, label={$v_1$}] (1) {1};
  \node[main node, label={$v_2$}] (2) [right of=1] {1};
  \node[main node, label={$v_3$}] (3) [right of=2] {1};
  \node[main node, label={$v_4$}] (4) [right of=3] {2};

  \path[every node/.style={font=\sffamily\small}]
    (1) edge [bend left](2)
    (2) edge [bend left](1)
    (2) edge (3)
    (4) edge (3);
\end{tikzpicture}
\\
Now we reach {\bfseries Step 4} again. This time, the active set is $\{v_3,v_4\}$ and the only message produced is $(v_4,1)$. The procedure again goes back to {\bfseries Step 3} and updates the graph to the one below.

\begin{tikzpicture}[->,>=stealth',shorten >=1pt,auto,node distance=1.5cm,
  thick,main node/.style={circle,fill=white,draw,font=\sffamily\large\bfseries}]

  \node[main node, label={$v_1$}] (1) {1};
  \node[main node, label={$v_2$}] (2) [right of=1] {1};
  \node[main node, label={$v_3$}] (3) [right of=2] {1};
  \node[main node, label={$v_4$}] (4) [right of=3] {1};

  \path[every node/.style={font=\sffamily\small}]
    (1) edge [bend left](2)
    (2) edge [bend left](1)
    (2) edge (3)
    (4) edge (3);
\end{tikzpicture}
\\
Now, no messages are produced in {\bfseries Step 4}, and thus the procedure terminates. 
The algorithm determines that all vertices in this example belong to the same connected component.

By observation, we found that the connected component algorithm in GraphX actually treats a directed graph as an undirected one. 
That is, two vertices $v_1$ and $v_n$ are considered in the same connected component of the input graph $G=(V,E)$ iff there exists a sequence of vertices $v_1,v_2,\ldots,v_n \in V$, such that $\{ (v_i,v_{i+1}), (v_{i+1}, v_i) \} \cap E \neq \emptyset$ for all $1\leq i<n$.
}

One can informally reason that the {\specspark} connected component specification has deterministic outcomes.
Note that \hscode{(VertexId, min)} is a commutative semigroup. This allows us to derive a similar proposition for \hscode{\kw{aggregateMessagesWithActiveSet}}. The calls to \hscode{\kw{aggregateMessages}} and \hscode{\kw{aggregateMessagesWithActiveSet}} in \hscode{\kw{pregel}} therefore have deterministic outcomes (Proposition~\ref{lem:lemDetAggMsg}). 
Examining the \hscode{vprog} in our connected component specification, the functions \hscode{\kw{mapVertices}} and \hscode{\kw{joinGraph}} also have deterministic outcomes. All potential sources of non-determinism in \hscode{\kw{pregel}} have deterministic outcomes. The connected component specification consequently has deterministic outcomes. Experiments in a distributed Spark environment confirm our reasoning.

\subsection{Triangle Count}
\label{section:triangle-count}
Let $G = (V, E)$ be an undirected graph without self-loops or multiple
edges. For $u, v \in V$, $\{ u, v \} \in E$ denotes that $u$ and $v$
are adjacent. A 
\emph{triangle} in $G$ is formed by $u, v, w \in V$ such that $\{ u, v \},
\{ u, w \}, \{ v, w \} \in E$. Counting the number of triangles 
is important to, for example, network analysis. The Spark GraphX library implements
the triangle counting algorithm using \hscode{\kw{aggregateMessages}}.

In the GraphX implementation, an undirected graph is represented by a
graph RDD where the source vertex identifier of every edge is greater than its
destination vertex identifier. An edge $\{ u, v
\} \in E$ with $u > v$ is thus represented by \hscode{($u$, $v$, \_)} in
an edge RDD. Below is the {\specspark}
specification extracted from the Spark GraphX implementation.
\vspace{-2mm}
\begin{lstlisting}[language=Haskell,basicstyle=\sffamily\footnotesize,morekeywords={aggregateMessages,mapVertices,intersection,mapVertexRDD}]
sendMsg src _ dst _ _ = [(dst, singleton src), (src, singleton dst)]
adjacentVRdd = aggregateMessages sendMsg (union) graphRdd

newGRdd = let adjacents = concat adjacentVRdd
              $\ \ $updt v _ = case lookup v adjacents of
                            Just adj $\rightarrow$ delete v adj
                            Nothing $\rightarrow$ empty
            in mapVertices updt graphRdd

sendMsg2 src srcA dst dstA _ =
  let num = size (intersection srcA dstA) 
  in [(dst, num),(src,num)]
sumTriangles = aggregateMessages sendMsg2 (+) newGRdd

triangleCount = mapVertexRDD ($\lambda$_ y $\rightarrow$ quot y 2) sumTriangles
\end{lstlisting}
\vspace{-1mm}

For each edge $\{ u, v \} \in E$, \hscode{sendMsg} sends $\{ u \}$
and $\{ v \}$ to vertices $v$ and $u$ respectively. Multiple messages
to a vertex are merged by union. After applying
\hscode{\kw{aggregateMessages}} with \hscode{sendMsg} and \hscode{\kw{union}},
\hscode{adjacentVRdd} is a vertex RDD where the attribute of the vertex $v$ 
is $\{ u : \{ u, v \} \in E \}$.

The implementation updates vertex attributes of the input graph
to obtain \hscode{newGRdd}. 
If the set $A$ of vertices
adjacent to $v$ is not empty, the attribute of $v$ is updated to $A
\setminus \{ v \}$. If $v$ does not have any adjacent vertices, its
attribute is set to the empty set. Hence the attribute of a vertex in
\hscode{newGRdd} contains its adjacent vertices but not itself.
Recall that we assume the input graph
does not have self-loops. A vertex cannot be
adjacent to itself.
Removing a vertex from the set of its adjacent vertices is redundant.

For each edge $\{ u, v \} \in E$ in \hscode{newGRdd}, \hscode{sendMsg2}
sends the message $| U \cap V |$ to $u$ and $v$ where $U$ and $V$ are
the sets of vertices adjacent to $u$ and $v$ respectively. Observe that
for every $w \in U \cap V$, we have $\{ w, u \}, \{ w, v \}, \{ u, v
\} \in E$. Let $\numTri{\{u, v\}}$ denote the number of triangles
containing the edge $\{ u, v \}$. $\numTri{\{u, v\}}$ is sent to
both $u$ and $v$. Messages are moreover merged
by summation. Hence the attribute of each vertex $v$ in
\hscode{sumTriangles} is $\sum_{\{ u, v \} \in E} \numTri{\{u, v\}}$.

Now consider a vertex $v$ in a triangle of $u, v, w$. The triangle is
counted in both $\numTri{\{ u, v \}}$ and $\numTri{\{ w, v \}}$. 
Since a triangle is always counted twice, 
the attribute  given as $\frac{1}{2} \sum_{\{ u, v \} \in E} \numTri{\{u, v\}}$
of vertex $v$ in \hscode{triangleCount} is the the number of triangles
containing $v$. 
Both calls to \hscode{\kw{aggregateMessages}} have deterministic outcomes because the algebras $(\hscode{Set}, \hscode{(union)})$ and $(\hscode{Int}, \hscode{(+)})$ are commutative semigroups (Propositions~\ref{lem:lemRedBKeyIsRed},~\ref{lem:lemDetAggMsg}, and Corollary~\ref{col:det-red}).

\hide{
Triangle counting is an important algorithm in the analysis of social
networks. Usually a social network is formalized as a graph whose
vertices are individual people and where an edge indicates presence of
a certain kind of relationship. A triangle exists if a vertex has two
adjacent vertices that are also adjacent to each other.  

The triangle counting algorithm in GraphX assumes the input to be an undirected graph $G=(V,E)$, such that $V$ is an ordered set and an edge $(v,u)\in E$ implies $v>u$. In this setting, two vertices $u,v$ are adjacent to each other iff exactly one of the edges $(u,v)$ and $(v,u)$ is in the graph. 
The algorithm computes, for each vertex $v \in V$, the number of triangles in $G$ that contain $v$. We use $\numTri{(v,u)}$ to denote the number of triangles sharing the edge $(v,u)$, i.e., the number of triangles that contain vertices $v$ and $u$.

Below we specify the implementation of the algorithm in {\specspark}:

{\bfseries Step 1:} The \hscode{sendMsg} function sends for each edge $(v,u)\in E$ two messages $(v,\{u\})$ and $(u,\{v\})$. According to the formal specification in Haskell, the function \hscode{aggregateMessages} first applies \hscode{sendMsg} to all edges $(v,u)\in E$ and produce a PairRdd with the set of elements $P=\{(v,\{u\}),(u,\{v\}): (v,u)\in E\}$. Then it applies $\hscode{reduceByKey}$ with the function \hscode{(Set.union)} to the PairRdd, which essentially labels each vertex $v$ with the set of its ``adjacent'' vertices $S(v)=\{u: (v,u)\in E \vee (u,v)\in E\}$.

\vspace{-1mm}
\begin{lstlisting}[language=Haskell,basicstyle=\sffamily\footnotesize]
sendMsg s _ d _ _ = [(d, singleton s),(s, singleton d)]
S = aggregateMessages sendMsg (Set.union) G
\end{lstlisting}
\vspace{-1mm}

{\bfseries Step 2:} The \hscode{sendMsg2} function computes for each edge $(v,u)\in E$ the count $k=|S(v) \cap S(u)|$ and generates the two messages $(v,k)$ and $(u,k)$. Observe that $k$ is also the number of triangles sharing the edge $(v,u)$, i.e., $k=\numTri{(v,u)}$.
The \hscode{(+)} function is addition of integers numbers.

According to the specification, \hscode{aggregateMessages} first applies \hscode{sendMsg2} to all edges $(v,u)\in E$ and produces a PairRdd with the set of elements $P=\{(v,k),(u,k): (v,u)\in E \wedge k=\numTri{(v,u)}\}$. 
Then it applies $\hscode{reduceByKey}$ with the function \hscode{(+)} to the PairRdd, which essentially labels each vertex $v$ with $\Sigma_{(v,u)\in E} \numTri{(v,u)}+\Sigma_{(u,v)\in E} \numTri{(u,v)}$, the total number of triangles with one edge linked with $v$. If a triangle has two edges linked with $v$, then it will be counted twice in the label.

\vspace{-1mm}
\begin{lstlisting}[language=Haskell,basicstyle=\sffamily\footnotesize]
sendMsg2 s sA d dA _ =
  let num = Set.size (intersection sA dA) in 
     [(d, num),(s,num)]
S = aggregateMessages sendMsg2 (+) G'
\end{lstlisting}
\vspace{-1mm}

{\bfseries Step 3:} Observe that each triangle containing a vertex $v$ has exactly two edges linked with $v$. Hence each adjacent triangle will be counted twice by the 2nd \hscode{aggregateMessages} function. The implementation of triangle counting hence divides the result by 2 at the end and the final result is indeed the number of neighboring triangles of each vertex. 

The above procedure can be described in high level as follows:
\begin{enumerate}
\item For each vertex $v$, compute the set of neighbors $N(v)=\{u: (v,u)\in E \vee (u,v)\in E\}$
\item For each edge $(v,u)\in E$, compute the intersection of the sets $N(v) \cap N(u)$, send the count to both vertices, and then compute the sum at each vertex.
\item Divide the sum at each vertex by two since each triangle is counted twice.
\end{enumerate}
}

\subsection{In-Degrees}
The Spark GraphX library implements several graph algorithms using
aggregation. We show how our specification helps to understand and
analyze Spark programs utilizing aggregate combinators.

Let $G = (V, E)$ with $E
\subseteq V \times V$ be a directed graph. We define the \emph{in-degree} of a
vertex $v \in V$ as $| \{ (u, v) : (u, v) \in E \}|$. The GraphX
library uses the function \hscode{\kw{aggregateMessages}} to compute in-degrees of
vertices in a graph RDD. Consider the following {\specspark}
specification for the GraphX implementation:
\begin{lstlisting}[language=Haskell,basicstyle=\sffamily\footnotesize]
inDegrees graphRdd = 
  let sendMsg _ _ dst _ _ = [(dst, 1)]
  in aggregateMessages sendMsg (+) graphRdd
\end{lstlisting}
By our specification, \hscode{\kw{aggregateMessages}} invokes
\hscode{sendMsg} on every edge in \hscode{graphRdd}. The \hscode{sendMsg}
function sends the message \hscode{1} to the destination vertex of an
edge. If several messages are sent to a vertex, they are summed up.
Hence \hscode{inDegree} returns a vertex RDD where each
vertex has the number of its incoming edges as the attribute. They are
in-degrees of vertices in \hscode{graphRdd}.
The call to \hscode{\kw{aggregateMessages}} has a deterministic outcome because $(\hscode{\kw{Int}}, \hscode{(+)})$ is a commutative semigroup (Propositions~\ref{lem:lemRedBKeyIsRed},~\ref{lem:lemDetAggMsg}, and Corollary~\ref{col:det-red}).
\hide{
In the implementation, the directed graph $G$ is represented as a \hscode{GraphRDD G}. According to the formal specification in Haskell, the function \hscode{aggregateMessages} first applies \hscode{sendMsg} to all edges $(v,v')\in E$ and produce a Pair RDD with the set of elements $P=\{(v,1): (v',v)\in E\}$. 
Then it applies \hscode{reduceByKey} with the function \hscode{(+)} to
the Pair RDD, which labels each vertex $v$ with $\Sigma_{(v,1)\in
  P}1$, which actually compute the number $\Sigma_{(v',v)\in
  E}1=|\{(v',v):(v',v)\in E\}|$.
}


\newpage

\changepage{50mm}{60mm}{-30mm}{-30mm}{}{-25mm}{}{}{}


\vspace{-0.0mm}
\section{Missing Proofs}\label{app:proofs}
\vspace{-0.0mm}

We start with proving the following auxiliary lemma.

\begin{lemma}\label{lem:broken-foldl}
\begin{equation}
  \foldl(f, \zero, \li_1 \cat \li_2) = \foldl(f, \foldl(f, \zero, \li_1), \li_2)
\end{equation}
\end{lemma}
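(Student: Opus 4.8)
The plan is to proceed by structural induction on the first list $\li_1$. The crucial preliminary observation is that the statement must be \emph{generalized over the accumulator}: rather than fixing the initial value~$\zero$, I would prove that for every value $a$ it holds that $\foldl(f, a, \li_1 \cat \li_2) = \foldl(f, \foldl(f, a, \li_1), \li_2)$, and then specialize to $a = \zero$. This strengthening is needed because the recursive clause of $\foldl$ threads a changing accumulator through the list, so the induction hypothesis must apply to an accumulator different from the one we started with.

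For the base case $\li_1 = [\,]$, I would unfold the defining equation $[\,] \cat \li_2 = \li_2$ on the left-hand side and $\foldl(f, a, [\,]) = a$ on the right-hand side; both sides then reduce to $\foldl(f, a, \li_2)$, so they agree.

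For the inductive step $\li_1 = x : \xs$, I would first rewrite the concatenation using $(x : \xs) \cat \li_2 = x : (\xs \cat \li_2)$, and then apply the recursive clause $\foldl(f, a, x : \ys) = \foldl(f, f(a, x), \ys)$ twice: once on the left-hand side with $\ys = \xs \cat \li_2$, and once on the inner fold of the right-hand side with $\ys = \xs$. After these unfoldings the left-hand side becomes $\foldl(f, f(a, x), \xs \cat \li_2)$ and the right-hand side becomes $\foldl(f, \foldl(f, f(a, x), \xs), \li_2)$; invoking the (generalized) induction hypothesis at the accumulator $f(a, x)$ rewrites the former into the latter, closing the case.

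The only genuine obstacle is the one already flagged, namely stating the induction hypothesis generically in the accumulator; a proof that tried to keep $\zero$ fixed would get stuck exactly at the point where the accumulator has changed to $f(\zero, x)$. Once the statement is suitably generalized, every step is a direct unfolding of the definitions of $\cat$ and $\foldl$ from Figure~\ref{figure:basic-functions}, requiring no associativity, commutativity, or identity assumptions on $f$.
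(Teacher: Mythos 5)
Your proof is correct and takes essentially the same route as the paper's: induction on $\li_1$ with direct unfolding of the definitions of $\cat$ and $\foldl$, using no algebraic assumptions on $f$. The paper's inductive step silently applies its induction hypothesis at the shifted accumulator $f(\zero, x)$, so your explicit generalization of the statement over the accumulator merely makes precise a quantification the paper leaves implicit.
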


\begin{proof}
By induction on the length of $\li_1$.
\begin{itemize}
  \item  for $\li_1 = [~]$:
    \begin{align*}
      \foldl(f, \foldl(f, \zero, [~]), \li_2)
      &= \foldl(f, \zero, \li_2)
      &\expl{def.\ of $\foldl$}
      \\
      &= \foldl(f, \zero, [~] \cat \li_2)
      &\expl{def.\ of $\cat$}
    \end{align*}

  \item  suppose the lemma holds for all $\li_1$ of length $n$.
    Now consider the list $x:\li_1$.
    It follows that
    \begin{align*}
      \foldl(f, \zero, x:p_1 \cat p_2)
      &= \foldl(f, f(\zero, x), p_1 \cat p_2)
      &\expl{def.\ of $\foldl$}
      \\
      &= \foldl(f, \foldl(f, f(\zero, x), p_1), p_2)
      &\expl{IH}
      \\
      &= \foldl(f, \foldl(f, \zero, x:p_1), p_2)
      &\expl{def.\ of $\foldl$} & ~\qed
    \end{align*}
\end{itemize}
\end{proof}


In the following we use the following function:
\begin{lstlisting}[language=Haskell,deletekeywords={seq},basicstyle=\sffamily\footnotesize,morekeywords={reducel,aggregateList,aggregate},escapechar=@]
aggregateList part  z  seq  comb xs = aggregate@$\dt$@ z seq comb (part xs)
\end{lstlisting}

\begin{restatable}{lemma}{aggregateCommu}\label{lemma:aggregate_commu}
The following are necessary (though not sufficient) conditions for a~call
$\agg(\zero, \seq, \comboper, \partit(\datalist))$ to be deterministic:
\begin{enumerate}
  \item  $\zero$ is the identity of $\comboper$ on $\gamma = \imgof{\foldl(\seq,
    \zero)}$,
  \item  $\comboper$ is closed on $\gamma$,
  \item  $\comboper$ is commutative on $\gamma$, and
  \item  $\comboper$ is associative on $\gamma$.
\end{enumerate}
\end{restatable}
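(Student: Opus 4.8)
The plan is to assume the call $\agg(\zero,\seq,\comboper,\rdd)$ has deterministic outcomes and to extract each algebraic property by instantiating the definition on carefully chosen lists and partitionings. First I would unfold $\agg\dt$ so that the hypothesis becomes the single identity
\begin{equation*}
  \foldl\bigl(\comboper,\zero,\map(\foldl(\seq,\zero),\partit(\datalist))\bigr) = \foldl(\seq,\zero,\datalist)
\end{equation*}
holding for every list $\datalist$ and every partitioning $\partit$. Two features of partitionings drive the whole argument: since a partitioning shuffles before splitting, the partitions of $\partit(\datalist)$ concatenate to an arbitrary permutation of $\datalist$, so I may present any collection of partitions in any order; and since $\foldl$ is total on $[~]$, empty partitions are admissible (this is exactly where $\agg$ differs from $\red$), so $\zero=\foldlzof{[~]}$ is realizable as a sub-result. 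The crucial leverage is that for a fixed $\datalist$ the right-hand side $\foldlzof{\datalist}$ is constant, so any two partitionings of the same $\datalist$ must yield the same value.

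Next I would establish two building blocks. Taking the trivial partitioning $[p]$ of a single list $p$ yields $\comboper(\zero,\foldlzof{p})=\foldlzof{p}$, i.e.\ $\zero\comboper a = a$ for every $a=\foldlzof{p}\in\gamma$; this is the left identity. Applying the partitioning $[p_1,p_2]$ to $\datalist=p_1\cat p_2$ and simplifying the outer fold with this left identity gives the list-homomorphism identity
\begin{equation*}
  \foldlzof{p_1}\comboper\foldlzof{p_2}=\foldlzof{p_1\cat p_2},
\end{equation*}
from which closedness (condition~2) is immediate, since the right-hand side lies in $\gamma=\imgof{\foldl(\seq,\zero)}$ by definition. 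Instantiating this identity with $p_2=[~]$ gives $a\comboper\zero=\foldlzof{p_1}=a$, the right identity, completing condition~1.

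For commutativity I would exploit the reordering freedom: fixing $a=\foldlzof{p}$, $b=\foldlzof{q}$ and $\datalist=p\cat q$, the partitioning $[p,q]$ evaluates (after using left identity) to $a\comboper b$, while $[q,p]$ is a legal partitioning of $\datalist$ because $q\cat p$ is a permutation of $p\cat q$, and it evaluates to $b\comboper a$; both must equal the fixed value $\foldlzof{\datalist}$, so $a\comboper b=b\comboper a$. For associativity I would fix $a,b,c$ as $\foldlzof{p},\foldlzof{q},\foldlzof{r}$ and $\datalist=p\cat q\cat r$, then compare the partitioning $[p,q,r]$, which folds to $(a\comboper b)\comboper c$, against $[p,\,q\cat r]$, whose second sub-result is $\foldlzof{q\cat r}=b\comboper c$ by the homomorphism identity, so that it folds to $a\comboper(b\comboper c)$; equating both to $\foldlzof{\datalist}$ yields condition~4.

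I expect the main obstacle to be associativity rather than the other three: it is the one property that cannot be read off from reordering partitions alone, because $\foldl(\comboper,\zero,\cdot)$ is rigidly left-nested. The trick that makes it go through is to realize the composite sub-result $b\comboper c$ as the fold of a single merged partition $q\cat r$, which is only justified once the homomorphism identity has been proved; hence the ordering of the steps (left identity, then homomorphism and closedness, then the two equational laws) is essential. A secondary point to handle with care is the status of the element $\zero\in\gamma$: it is realizable only through an empty partition, so the argument genuinely relies on empty partitions being permitted for $\agg$, unlike the $\red$ case.
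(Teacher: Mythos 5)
Your proof is correct and follows the paper's overall strategy of instantiating the determinism hypothesis on hand-picked partitionings: the trivial partitioning $[\datalist]$ for the left identity, an empty partition for the right identity, $[p_1,p_2]$ for the homomorphism identity and closedness, and the reordering $[p_2,p_1]$ for commutativity all match the paper's choices. The one genuine divergence is associativity. The paper compares the two three-block partitionings $[p_1,p_2,p_3]$ and $[p_2,p_3,p_1]$, obtains $(\foldlzof{p_1}\comboper\foldlzof{p_2})\comboper\foldlzof{p_3}=(\foldlzof{p_2}\comboper\foldlzof{p_3})\comboper\foldlzof{p_1}$, and then needs the already-established commutativity to rewrite the right-hand side into $\foldlzof{p_1}\comboper(\foldlzof{p_2}\comboper\foldlzof{p_3})$. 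You instead compare $[p,q,r]$ against the coarser partitioning $[p,\,q\cat r]$ and use the homomorphism identity $\foldlzof{q\cat r}=\foldlzof{q}\comboper\foldlzof{r}$ to read off the right-associated bracketing directly. Your route makes associativity independent of commutativity, which is slightly cleaner logically and would survive in a setting where the reordering freedom of partitionings were dropped; the paper's route keeps every step a comparison of two partitionings of literally the same shape, at the cost of threading commutativity through the associativity argument. Both arguments rest on the same two structural facts you correctly isolate: every element of $\gamma=\imgof{\foldl(\seq,\zero)}$ is realizable as some $\foldlzof{p}$, and partitionings may both permute blocks and produce empty blocks.
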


\begin{proof}
\begin{enumerate}

  \item  We assume that $\agg(\zero, \seq, \comboper, \partit(\datalist))$ is
    deterministic and show that $\zero$ is both the left and the right identity
    of $\comboper$ on $\gamma$.
    First, assume the following partitioning: $\partit_1(\datalist) =
    [\datalist]$.
    From the assumption that the $\agg$ is deterministic, it follows that
    \begin{align*}
      \foldlzof{\datalist} &= \aggList(\partit_1, \zero, \seq, \comboper,
      \datalist) \\
      &= \foldl(\comboper, \zero, [\foldlzof{L}])
      &\expl{def.\ of $\aggList$} \\
      &= \foldl(\comboper, \zero \comboper \foldlzof{L}, [])
      &\expl{def.\ of $\foldl$} \\
      &= \zero \comboper \foldlzof{L}
      &\expl{def.\ of $\foldl$}
    \end{align*}
    Therefore, $\zero$ is the left identity of $\comboper$ on $\gamma$.

    Second, assume the following partitioning: $\partit_2(\datalist) = [L,
    []]$.
    From the assumption that the $\agg$ is deterministic, it follows that
    \begin{align*}
      \foldlzof{\datalist} &= \aggList(\partit_2, \zero, \seq, \comboper,
      \datalist) \\
      &= \foldl(\comboper, \zero, [\foldlzof{L}, \foldlzof{[]}])
      &\expl{def.\ of $\aggList$} \\
      &= \foldl(\comboper, \zero, [\foldlzof{L}, \zero])
      &\expl{def.\ of $\foldlzof{\cdot}$ and $\foldl$} \\
      &= \foldl(\comboper, \zero \comboper \foldlzof{L}, [\zero])
      &\expl{def.\ of $\foldl$} \\
      &= \foldl(\comboper, \foldlzof{L}, [\zero])
      &\expl{$\zero$ is the left id.\ of $\comboper$} \\
      &= \foldl(\comboper, \foldlzof{L} \comboper \zero, [])
      &\expl{def.\ of $\foldl$} \\
      &= \foldlzof{L} \comboper \zero
      &\expl{def.\ of $\foldl$}
    \end{align*}
    Therefore, $\zero$ is also the right identity of $\comboper$ on $\gamma$.

  \item  We assume that $\agg(\zero, \seq, \comboper, \rdd(\datalist))$ is
    deterministic and show that $\comboper$ is closed on $\gamma$.
    First, we assume that $\datalist = p_1 \cat p_2$ and consider the following
    partitioning: $\partit(p_1 \cat p_2) = [p_1, p_2]$.
    From the assumption that the $\agg$ is deterministic, it follows that
    \begin{align*}
    \foldlzof{p_1 \cat p_2}
    &= \aggList(\partit, \zero, \seq, \comboper, \datalist)
    \\
    &= \foldl(\comboper, \zero, [\foldlzof{p_1}, \foldlzof{p_2}])
    &\expl{def.\ of $\aggList$}\\
    &= \foldl(\comboper, \zero \comboper \foldlzof{p_1}, [\foldlzof{p_2}])
    &\expl{def.\ of $\foldl$}\\
    &= \foldl(\comboper, \foldlzof{p_1}, [\foldlzof{p_2}])
    &\expl{$\zero$ is the id.\ of $\comboper$}\\
    &= \foldl(\comboper, \foldlzof{p_1} \comboper\foldlzof{p_2}, [])
    &\expl{def.\ of $\foldl$}\\
    &= \foldlzof{p_1} \comboper\foldlzof{p_2}
    &\expl{def.\ of $\foldl$}
    \end{align*}
    Therefore $\comboper$ is closed on $\gamma$.

  \item  We assume that $\agg(\zero, \seq, \comboper, \rdd(\datalist))$ is
    deterministic and show that $\comboper$ is commutative on $\gamma$.
    First, we assume that $\datalist = p_1 \cat p_2$ and consider the following
    two partitionings: $\partit_1(p_1 \cat p_2) = [p_1, p_2]$ and
    $\partit_2(p_1 \cat p_2) = [p_2, p_1])$.
    From the assumption that the $\agg$ is deterministic, it follows that
    \begin{align*}
      &&\aggList(\partit_1, \zero, \seq, \comboper, \datalist) &= \aggList(\partit_2, \zero, \seq, \comboper, \datalist) \\
    \iff&&\foldl(\comboper, \zero, [\foldlzof{p_1}, \foldlzof{p_2}]) &= \foldl(\comboper, \zero, [\foldlzof{p_2}, \foldlzof{p_1}])
      &\expl{def.\ of $\aggList$} \\
    \iff&&\foldl(\comboper, \zero \comboper \foldlzof{p_1}, [\foldlzof{p_2}]) &= \foldl(\comboper, \zero \comboper \foldlzof{p_2}, [\foldlzof{p_1}])
      &\expl{def.\ of $\foldl$} \\
    \iff&&\foldl(\comboper, \foldlzof{p_1}, [\foldlzof{p_2}]) &= \foldl(\comboper, \foldlzof{p_2}, [\foldlzof{p_1}])
      &\expl{$\zero$ is the id.\ of $\comboper$} \\
    \iff&&\foldl(\comboper, \foldlzof{p_1} \comboper \foldlzof{p_2}, []) &= \foldl(\comboper, \foldlzof{p_2} \comboper \foldlzof{p_1}, [])
      &\expl{def.\ of $\foldl$} \\
    \iff&&\foldlzof{p_1} \comboper \foldlzof{p_2} &= \foldlzof{p_2} \comboper \foldlzof{p_1})
      &\expl{def.\ of $\foldl$}
    \end{align*}
    Therefore, $\comboper$ is commutative on $\gamma$.

  \item  We assume that $\agg(\zero, \seq, \comboper, \rdd(\datalist))$ is
    deterministic and show that $\comboper$ is associative on $\gamma$.
    First, we assume that $\datalist = p_1 \cat p_2 \cat p_3$ and consider the
    following two partitionings: $\partit_1(p_1 \cat p_2 \cat p_3) = [p_1, p_2,
    p_3]$ and
    $\partit_2(p_1 \cat p_2 \cat p_3) = [p_2, p_3, p_1])$.
    From the assumption that the $\agg$ is deterministic, it follows that
    \begin{align*}
      &&\aggList(\partit_1, \zero, \seq, \comboper, \datalist) &= \aggList(\partit_2, \zero, \seq, \comboper, \datalist) \\
    \iff&&\foldl(\comboper, \zero, [\foldlzof{p_1}, \foldlzof{p_2}, \foldlzof{p_3}]) &= \foldl(\comboper, \zero, [\foldlzof{p_2}, \foldlzof{p_3}, \foldlzof{p_1}])
      &\expl{def.\ of $\aggList$} \\
    \iff&&\foldl(\comboper, \zero \comboper \foldlzof{p_1}, [\foldlzof{p_2}, \foldlzof{p_3}]) &= \foldl(\comboper, \zero \comboper \foldlzof{p_2}, [\foldlzof{p_3}, \foldlzof{p_1}])
      &\expl{def.\ of $\foldl$} \\
    \iff&&\foldl(\comboper, \foldlzof{p_1}, [\foldlzof{p_2}, \foldlzof{p_3}]) &= \foldl(\comboper, \foldlzof{p_2}, [\foldlzof{p_3}, \foldlzof{p_1}])
      &\expl{$\zero$ is the id.\ of $\comboper$} \\
    \iff&&\foldl(\comboper, \foldlzof{p_1} \comboper \foldlzof{p_2}, [\foldlzof{p_3}]) &= \foldl(\comboper, \foldlzof{p_2} \comboper \foldlzof{p_3}, [\foldlzof{p_1}])
      &\expl{def.\ of $\foldl$} \\
    \iff&&\foldl(\comboper, (\foldlzof{p_1} \comboper \foldlzof{p_2}) \comboper \foldlzof{p_3}, []) &= \foldl(\comboper, (\foldlzof{p_2} \comboper \foldlzof{p_3}) \comboper \foldlzof{p_1}, [])
      &\expl{def.\ of $\foldl$} \\
    \iff&&(\foldlzof{p_1} \comboper \foldlzof{p_2}) \comboper \foldlzof{p_3} &= (\foldlzof{p_2} \comboper \foldlzof{p_3}) \comboper \foldlzof{p_1}
      &\expl{def.\ of $\foldl$} \\
    \iff&&(\foldlzof{p_1} \comboper \foldlzof{p_2}) \comboper \foldlzof{p_3} &= \foldlzof{p_1} \comboper (\foldlzof{p_2} \comboper \foldlzof{p_3})
      &\expl{comm.\ of $\comboper$}
    \end{align*}
    Therefore, $\comboper$ is associative on $\gamma$.\qed
\end{enumerate}
\end{proof}


\hide{
\twoPartSuff*

\begin{proof}
\begin{itemize}
  \item[$\Rightarrow$]  This is just a~subcase.

  \item[$\Leftarrow$]  We prove that if $\aggList(\partit_2, \zero, \seq, \comboper,
  \datalist)$ is deterministic for any 2-partitioning $\partit_2$, then
  $\aggList(\partit, \zero, \seq, \comboper, \datalist)$ is also deterministic for
  any partitioning $\partit$.
  We prove this by an induction on the number $i$ of partitions in the
  partitioning.
  \begin{itemize}
  \item[$i=1$:]  The result for a~single partition is deterministic if $\zero$
    is the identity of $\comboper$ (which holds due to
    Lemma~\ref{lemma:aggregate_commu}).
  \item[$i=2$:]  Deterministic result for all 2-partitionings follows from the
    assumption.
  \item[$i=n+1$:]  Our induction hypothesis (IH) is that the result is
    deterministic for all $n$-partitionings.
    Suppose that $\datalist = p_1 \cat \ldots \cat p_{n} \cat p_{n+1}$ and
    consider its following $(n+1)$-partitioning: $\partit_{n+1}(\datalist) = [p_1,
    \ldots, p_n, p_{n+1}]$.
    Then
    \begin{align*}
        \MoveEqLeft
        \aggList(\partit_{n+1}, \zero, \seq, \comboper, \datalist) \\
        {}={}& \foldl(\comboper, \zero, [\foldlzof{p_1}, \ldots, \foldlzof{p_n}, \foldlzof{p_{n+1}}])
        & \expl{def.\ of $\aggList$} \\
        {}={}& \foldl(\comboper, \foldl(\comboper, \zero, [\foldlzof{p_1}, \ldots,
        \foldlzof{p_n}]), [\foldlzof{p_{n+1}}])
        & \expl{Lemma~\ref{lem:broken-foldl}} \\
        {}={}& \foldl(\comboper, \aggList(\partit_n, \zero, \seq, \comboper,  p_1 \cat
        \ldots \cat p_n), [\foldlzof{p_{n+1}}])
        & \expl{def.\ of $\aggList$} \\
        & \text{$\langle$ $\partit_n$ is an $n$-partitioning s.t.\ $\partit_n(p_1 \cat \ldots
        \cat p_n) = [p_1, \ldots, p_n]$ $\rangle$} \\
        {}={}& \foldl(\comboper, \foldlzof{p_1 \cat \ldots \cat p_n}, [\foldlzof{p_{n+1}}])
        & \expl{IH} \\
        {}={}& \foldl(\comboper, \zero \comboper \foldlzof{p_1 \cat \ldots
        \cat p_n}, [\foldlzof{p_{n+1}}])
        &\expl{$\zero$ is the id.\ of $\comboper$} \\
        {}={}& \foldl(\comboper, \zero, [\foldlzof{p_1 \cat \ldots \cat p_n}, \foldlzof{p_{n+1}}])
        & \expl{def.\ of $\foldl$} \\
        {}={}& \aggList(\partit_2, \zero, \seq, \comboper, \datalist)
        & \expl{def.\ of $\aggList$} \\
        & \text{$\langle$ $\partit_2$ is a~2-partitioning s.t.\ $\partit_2(p_1
        \cat \ldots \cat p_n \cat p_{n+1}) = [p_1 \cat \ldots \cat p_n,
        p_{n+1}]$ $\rangle$}
    \end{align*}
    The fact that $\aggList(\partit_2, \zero, \seq, \comboper, \datalist)$ is
    deterministic follows from the assumption.\qed
  \end{itemize}
\end{itemize}
\end{proof}
}


\hide{
\twoPartToConcrete*

\begin{proof}
\begin{itemize}
  \item[$\Rightarrow$:]  1~is a~subcase and 2, 3, 4 and 5 follow from
    Lemmas~\ref{lemma:aggregate_commu} and~\ref{lemma:two_part_suff}.

  \item[$\Leftarrow$:]  To prove this direction, we show that for all $p_1, p_2
    \in \listof{\alpha}$ it holds that
    \begin{align*}
    \foldlzof{p_1 \cat p_2} = \aggList(\partit_1, \zero, \seq, \comboper, p_1 \cat p_2)
    && \text{and}
    && \foldlzof{p_1 \cat p_2} = \aggList(\partit_2, \zero, \seq, \comboper, p_1 \cat p_2)
    \end{align*}
    for $\partit_1(p_1 \cat p_2) = [p_1, p_2]$ and
    $\partit_2(p_1 \cat p_2) = [p_2, p_1]$:
    \begin{align*}
    && \aggList(\partit_1, \zero, \seq, \comboper, p_1 \cat p_2)
    &&\!\!\!\!= \foldlzof{p_1 \cat p_2}
    &= \aggList(\partit_2, \zero, \seq, \comboper, p_1 \cat p_2)
    \\
    \iff
    && \foldl(\comboper, \zero, [\foldlzof{p_1}, \foldlzof{p_2}])
    &&\!\!\!\!= \foldlzof{p_1 \cat p_2}
    &= \foldl(\comboper, \zero, [\foldlzof{p_2}, \foldlzof{p_1}])
    &\expl{def.\ of $\aggList$} \\
    \iff
    && \foldl(\comboper, \zero \comboper \foldlzof{p_1}, [\foldlzof{p_2}])
    &&\!\!\!\!= \foldlzof{p_1 \cat p_2}
    &= \foldl(\comboper, \zero \comboper \foldlzof{p_2}, [\foldlzof{p_1}])
    &\expl{def.\ of $\foldl$} \\
    \iff
    && \foldl(\comboper, \foldlzof{p_1}, [\foldlzof{p_2}])
    &&\!\!\!\!= \foldlzof{p_1 \cat p_2}
    &= \foldl(\comboper, \foldlzof{p_2}, [\foldlzof{p_1}])
    &\expl{$\zero$ is the id.\ of $\comboper$} \\
    \iff
    && \foldl(\comboper, \foldlzof{p_1} \comboper \foldlzof{p_2}, [])
    &&\!\!\!\!= \foldlzof{p_1 \cat p_2}
    &= \foldl(\comboper, \foldlzof{p_2} \comboper \foldlzof{p_1}, [])
    &\expl{def.\ of $\foldl$} \\
    \iff
    && \foldlzof{p_1} \comboper \foldlzof{p_2}
    &&\!\!\!\!= \foldlzof{p_1 \cat p_2}
    &= \foldlzof{p_2} \comboper \foldlzof{p_1}
    &\expl{def.\ of $\foldl$} \\
    \end{align*}
    The last line follows from assumptions 1 and 2.
    \qed
\end{itemize}
\end{proof}
}


\begin{lemma}\label{lem:homo-means-good}
For all functions $h: \listof{A} \to B$, the following are equivalent:
\begin{enumerate}
  \item  $h$ is a~list homomorphism to $(B, \odot, \bot)$,
  \item  $\forall \xss \in \listof{\listof{A}}: \foldl(\odot, \bot, \map(h, \xss)) = h(\concat(\xss))$.
\end{enumerate}
\end{lemma}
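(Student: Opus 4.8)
The plan is to prove the two implications separately, after unfolding the definition of a list homomorphism to $(B, \odot, \bot)$ as the conjunction of the empty-list law $h([~]) = \bot$ and the concatenation law $h(\xs \cat \ys) = h(\xs) \odot h(\ys)$ for all lists $\xs, \ys$. Throughout I would use that $(B, \odot, \bot)$ is a monoid, so $\odot$ is associative with identity $\bot$.

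For the direction $2 \Rightarrow 1$, I would simply instantiate the universally quantified statement in~2 at two carefully chosen lists of lists. Taking $\xss = [~]$ gives $\foldl(\odot, \bot, [~]) = h(\concat([~]))$, which reduces to $\bot = h([~])$ and is exactly the empty-list law. Taking $\xss = [\xs, \ys]$ gives $\foldl(\odot, \bot, [h(\xs), h(\ys)]) = h(\concat([\xs,\ys]))$; unfolding $\map$ and $\concat$ on the right yields $h(\xs \cat \ys)$, and unfolding $\foldl$ on the left together with the identity law for $\bot$ collapses the left-hand side to $h(\xs) \odot h(\ys)$, giving the concatenation law. Both computations are routine unfoldings of $\map$, $\concat$, and $\foldl$.

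For the direction $1 \Rightarrow 2$, I would proceed by induction on the outer list $\xss$. The base case $\xss = [~]$ holds since both sides evaluate to $\bot$, using $h([~]) = \bot$. For the inductive step $\xss = \xs : \xss'$, I would unfold $\map$ and one step of $\foldl$ to obtain $\foldl(\odot, h(\xs), \map(h, \xss'))$, then factor $h(\xs)$ back out so as to apply the induction hypothesis $\foldl(\odot, \bot, \map(h, \xss')) = h(\concat(\xss'))$, and finally close the case with the concatenation law via $h(\xs) \odot h(\concat(\xss')) = h(\xs \cat \concat(\xss')) = h(\concat(\xs : \xss'))$.

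The main obstacle is precisely this accumulator-factoring step: passing from $\foldl(\odot, h(\xs), \map(h, \xss'))$ to $h(\xs) \odot \foldl(\odot, \bot, \map(h, \xss'))$. I expect to discharge it either by invoking Lemma~\ref{lem:broken-foldl} (rewriting $h(\xs) : \map(h, \xss')$ as $[h(\xs)] \cat \map(h, \xss')$ and pushing the fold across the concatenation) together with the monoid laws, or by first establishing a one-line auxiliary fact, namely that $\foldl(\odot, a, bs) = a \odot \foldl(\odot, \bot, bs)$ for every $a$ and every list $bs$ whenever $\odot$ is associative with identity $\bot$, itself proved by a short induction on $bs$. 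Everything else in the argument is purely definitional.
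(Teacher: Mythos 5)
Your proposal is correct, but it diverges from the paper's proof in the inductive direction. For $2 \Rightarrow 1$ both arguments instantiate the hypothesis at small lists; the only difference is that the paper also instantiates at $\xss = [\xs]$ to \emph{derive} the needed instance $\bot \odot h(\xs) = h(\xs)$, whereas you invoke the unit law of the monoid outright. For $1 \Rightarrow 2$ the paper performs induction on the last element, i.e.\ it treats $\xss_n \cat [\xs]$ and uses Lemma~\ref{lem:broken-foldl} to split the fold over $\map(h, \xss_n) \cat \map(h,[\xs])$; the accumulator then stays at $\bot$ throughout, so the argument needs nothing beyond the two homomorphism laws. You instead do cons-induction on $\xs : \xss'$ and must commute $h(\xs)$ out of the accumulator position, which is exactly where associativity and the identity law of $\odot$ enter via your auxiliary fact $\foldl(\odot, a, \mathit{bs}) = a \odot \foldl(\odot, \bot, \mathit{bs})$. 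That auxiliary induction is the right tool (your alternative of using Lemma~\ref{lem:broken-foldl} alone does not factor the accumulator out, it only splits the fold over a list concatenation, so you would still need the same commuting step); with it your proof goes through. The trade-off is that the paper's snoc-induction proves the equivalence for an arbitrary structure $(B, \odot, \bot)$ with no algebraic assumptions, while yours needs $(B, \odot, \bot)$ to be a monoid --- an assumption that is indeed available at the lemma's only point of use, inside the $\Leftarrow$ direction of Lemma~\ref{lem:homoIff}, where condition~1 supplies the commutative monoid structure, but which makes the statement you actually establish slightly weaker than the one the paper proves.
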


\begin{proof}
\begin{itemize}
  \item[$(1 \Rightarrow 2)$:]
    By induction on the length of $\xss$:
    \begin{itemize}
      \item  for $\xss = [~]$:
        \begin{align*}
        \foldl(\odot, \bot, \map(h, [~]))
        &= \foldl(\odot, \bot, [~])
        &\expl{def.\ of $\map$} \\
        &= \bot
        &\expl{def.\ of $\foldl$} \\
        &= h([~])
        &\expl{assumption} \\
        &= h(\concat([~]))
        &\expl{def.\ of $\concat$} \\
        \end{align*}
    \item  Consider the following induction hypothesis for $\xss_n$ of the length $n$:
    \begin{equation}
      \mathrm{IH}: \foldl(\odot, \bot, \map(h, \xss_n)) = h(\concat(\xss_n)).
    \end{equation}
    For $\xss_n \cat [\xs]$ we proceed as follows:
    \begin{align*}
      \foldl(\odot, \bot, \map(h, \xss_n \cat [\xs]))
      &=
      \foldl(\odot, \bot, \map(h, \xss_n) \cat \map(h, [\xs]))
      &\expl{def.\ of $\map$}\\
      &=
      \foldl(\odot, \foldl(\odot, \bot, \map(h, \xss_n)), \map(h, [\xs]))
      &\expl{Lemma~\ref{lem:broken-foldl}}\\
      &=
      \foldl(\odot, \foldl(\odot, \bot, \map(h, \xss_n)), [h(\xs)])
      &\expl{def.\ of $\map$}\\
      &=
      \foldl(\odot, h(\concat(\xss_n)), [h(\xs)])
      &\expl{IH}\\
      &=
      h(\concat(\xss_n)) \odot h(\xs)
      &\expl{def.\ of $\foldl$}\\
      &=
      h(\concat(\xss_n) \cat \xs)
      &\expl{assumption}\\
      &=
      h(\concat(\xss_n \cat [\xs]))
      &\expl{def.\ of $\concat$}\\
    \end{align*}
    \end{itemize}

  \item[$(2 \Rightarrow 1)$:]
    We prove that the two properties of a~list homomorphism hold:
    \begin{itemize}
      \item  From $\foldl(\odot, \bot, \map(h, [~])) = h(\concat([~]))$ it
        follows that $h([~]) = \bot$.

      \item  To prove that $h(\xs \cat \ys) = h(\xs) \odot h(\ys)$, first we
        consider the list $\xss = [\xs]$:
        \begin{align*}
        &&
        \foldl(\odot, \bot, \map(h, [\xs])) &= h(\concat([\xs]))
        \\
        \iff&&
        \foldl(\odot, \bot, [h(\xs)]) &= h(\xs)
        &\expl{def.\ of $\map$, def.\ of $\concat$} \\
        \iff&&
        \foldl(\odot, \bot \odot h(\xs), [~]) &= h(\xs)
        &\expl{def.\ of $\foldl$} \\
        \iff&&
        \bot \odot h(\xs) &= h(\xs)
        \numberthis
        \label{067cdbbe-6c92-4514-8212-60c179160a01}
        &\expl{def.\ of $\foldl$} \\
        \end{align*}
        Then we consider the list $\xss = [\xs, \ys]$:
        \begin{align*}
        &&
        \foldl(\odot, \bot, \map(h, [\xs, \ys])) &= h(\concat([\xs, \ys]))
        \\
        \iff&&
        \foldl(\odot, \bot, [h(\xs), h(\ys)]) &= h(\xs \cat \ys)
        &\expl{def.\ of $\map$, def.\ of $\concat$} \\
        \iff&&
        \foldl(\odot, \bot \odot h(\xs), [h(\ys)]) &= h(\xs \cat \ys)
        &\expl{def.\ of $\foldl$} \\
        \iff&&
        \foldl(\odot, (\bot \odot h(\xs)) \odot h(\ys), [~]) &= h(\xs \cat \ys)
        &\expl{def.\ of $\foldl$} \\
        \iff&&
        (\bot \odot h(\xs)) \odot h(\ys) &= h(\xs \cat \ys)
        &\expl{def.\ of $\foldl$} \\
        \iff&&
        h(\xs) \odot h(\ys) &= h(\xs \cat \ys)
        &\expl{(\ref{067cdbbe-6c92-4514-8212-60c179160a01})}
        &&
    ~\qed
        \end{align*}
    \end{itemize}
\end{itemize}
\end{proof}


\homoIff*

\begin{proof}
\begin{enumerate}
  \item[$\Rightarrow$:]
    \begin{enumerate}
      \item  Proving 1:
        Follows from Lemma~\ref{lemma:aggregate_commu}.

    \item  Proving 2:
    consider the list $\xs \cat \ys$ and its partitioning
      $\partit(\xs \cat \ys) = [\xs, \ys]$.
      \begin{align*}
      &&
      \aggList(\partit, \zero, \seq, \comboper, \xs \cat \ys)
      &=
      \foldlzof{\xs \cat \ys}
      &\expl{def.\ of det.\ $\agg$}
      \\
      \iff&&
      \foldl(\comboper, \zero, [\foldlzof{\xs}, \foldlzof{\ys}])
      &=
      \foldlzof{\xs \cat \ys}
      &\expl{def.\ of $\aggList$}
      \\
      \iff&&
      \foldl(\comboper, \zero \comboper \foldlzof{\xs}, [\foldlzof{\ys}])
      &=
      \foldlzof{\xs \cat \ys}
      &\expl{def.\ of $\foldl$}
      \\
      \iff&&
      \foldl(\comboper, \foldlzof{\xs}, [\foldlzof{\ys}])
      &=
      \foldlzof{\xs \cat \ys}
      &\expl{$\zero$ is the id.\ of $\comboper$}
      \\
      \iff&&
      \foldl(\comboper, \foldlzof{\xs} \comboper \foldlzof{\ys}, [~])
      &=
      \foldlzof{\xs \cat \ys}
      &\expl{def.\ of $\foldl$}
      \\
      \iff&&
      \foldlzof{\xs} \comboper \foldlzof{\ys}
      &=
      \foldlzof{\xs \cat \ys}
      &\expl{def.\ of $\foldl$}
      \end{align*}
    \end{enumerate}

  \item[$\Leftarrow$:]
    Consider an arbitrary partitioning $\partit(\datalist)$ of $\datalist$ and
    its permutation $\perm$ s.t.\ $\datalist =
    \concat(\perm(\partit(\datalist)))$.
    From the definition of $\foldlzof{\cdot}$, it follows that
    $\foldlzof{[~]} = \foldl(\seq, \zero, [~]) = \zero$, and, therefore,
    $\foldlzof{\cdot}$ is a~list homomorphism to $(\imgof{\foldl(\seq,
    \zero)}, \comboper, \zero)$.
    From Lemma~\ref{lem:homo-means-good} it follows
    that
    \begin{align*}
      &&
      \foldl(\comboper, \zero, \map(\foldlzof{\cdot}, \perm(\partit(\datalist))))
      &=
      \foldlzof{\concat(\perm(\partit(\datalist)))}
      \\
      \iff &&
      \foldl(\comboper, \zero, \map(\foldlzof{\cdot}, \perm(\partit(\datalist))))
      &=
      \foldlzof{\datalist}
      &\expl{def.\ of $\perm$ and $\partit$}
      \\
      \iff &&
      \aggList(\perm \circ \partit, \zero, \seq, \comboper, \datalist)
      &=
      \foldlzof{\datalist}
      &\expl{def.\ of $\aggList$}
    \end{align*}
    Because $\comboper$ is associative and commutative, it follows that
    $\aggList(\perm_x \circ \partit, \zero, \seq, \comboper, \datalist) =
    \foldlzof{\datalist}$ for any $\perm_x$.
    Therefore, $\agg(\zero, \seq, \comboper, \rddof{\datalist})$ is
    deterministic.  \qed
\end{enumerate}
\end{proof}


\aggreSuffCond*

\begin{proof}
\begin{itemize}
  \item[$1 \implies 2$:]
    This is a special case.
    We pick $p_1$ such that $\foldlzof{p_1} = e$ and $p_2 = [d]$.
    When we substitute into~(\ref{eq:9461f388-4968-41b8-83af-b8a49f9cfedc}), we get
    \begin{equation}\label{e06e34ec-d86e-41ef-a2f5-83a58e0ffa16}
    \foldlzof{p_1 \cat [d]} = e \comboper \foldlzof{[d]} .
    \end{equation}
    For the left-hand side, according to Lemma~\ref{lem:broken-foldl}, it holds
    that
    \begin{equation}
    \foldlzof{p_1 \cat [d]}
    = \foldl(\seq, \zero, p_1 \cat [d])
    = \foldl(\seq, \foldl(\seq, \zero, p_1), [d])
    = \foldl(\seq, \foldlzof{p_1}, [d]) .
    \end{equation}

    After substitution, we get $\foldl(\seq, e, [d])$, which is (from the
    definition of $\foldl$) equal to $\seq(e, d)$.
    For the right-hand side of~(\ref{e06e34ec-d86e-41ef-a2f5-83a58e0ffa16}), we
    just notice that $\foldlzof{[d]} = \foldl(\seq, \zero, [d]) = \seq(\zero,
    d)$.

  \item[$2 \implies 1$:]
    Set $x = \foldl(\seq, \zero, p_1) = \foldlzof{p_1}$ and substitute
    into~(\ref{eq:9461f388-4968-41b8-83af-b8a49f9cfedc}) to obtain a~new target
    for proving:
    \begin{align}
    && \foldlzof{p_1 \cat p_2}
    &= \foldlzof{p_1} \comboper \foldlzof{p_2}
    \nonumber \\
    \iff
    && \foldl(\seq, \zero, p_1 \cat p_2)
    &= \foldlzof{p_1} \comboper \foldlzof{p_2}
    &\expl{def.\ of $\foldlzof{\cdot}$}
    \nonumber \\
    \iff
    && \foldl(\seq, \foldl(\seq, \zero, p_1), p_2)
    &= \foldlzof{p_1} \comboper \foldlzof{p_2}
    & \expl{Lemma~\ref{lem:broken-foldl}} \nonumber \\
    \iff
    && \foldl(x, \seq, p_2)
    &= x \comboper \foldlzof{p_2}
    & \expl{subst.\ of $x$} \label{eq:aslkfjlkajsdalkfs}
    \end{align}
    We prove~(\ref{eq:aslkfjlkajsdalkfs}) using induction on the length $n$ of
    $p_2$.
    \begin{itemize}
    \item[$n = 0$:]
      for $p_2 = []$, we get to prove the following:
      \begin{equation}
      \foldl(\seq, x, []) = x \comboper \foldl(\seq, \zero, [])).
      \end{equation}
      From the definition of $\foldl$, we get an equivalent formula
      \begin{equation}
        x = x \comboper \zero ,
      \end{equation}
      which is true due to $\zero$ being the identity of $\comboper$ on
      $\gamma$.

    \item[$n = i+1$:]  We assume~(\ref{eq:aslkfjlkajsdalkfs}) holds for $p_2$ of
      length $i$, i.e.
      \begin{equation}
      \text{IH}:~~\foldl(\seq, x, p_i) = x \comboper \foldl(\seq, \zero, p_i)
      \end{equation}
      and prove that, for any $h \in \alpha$,
      \begin{equation}
        \foldl(\seq, x, p_i \cat [h]) = x \comboper \foldl(\seq, \zero, p_i
        \cat [h]) .
      \end{equation}
      We do it in the following way:
      \begin{align*}
        \MoveEqLeft
        \foldl(\seq, x, p_i \cat [h]) \\
        & = \foldl(\seq, \foldl(\seq, x, p_i), [h]) & \expl{Lemma~\ref{lem:broken-foldl}} \\
        & = \foldl(\seq, \seq(\foldl(\seq, x, p_i), h), []) & \expl{def.\ of $\foldl$} \\
        & = \seq(\foldl(\seq, x, p_i), h) & \expl{def.\ of $\foldl$} \\
        & = \foldl(\seq, x, p_i) \comboper \seq(\zero, h) & \expl{appl.\ of~(\ref{eq:cd2c544e-7b22-4c31-b1d0-1160239694b7})} \\
        & = (x \comboper \foldl(\seq, \zero, p_i)) \comboper \seq(\zero, h) & \expl{IH} \\
        & = x \comboper (\foldl(\zero, \seq, p_i) \comboper \seq(\zero, h)) & \expl{assoc.\ of $\comboper$} \\
        & = x \comboper \seq(\foldl(\seq, \zero, p_i),  h) & \expl{appl.\ of~(\ref{eq:cd2c544e-7b22-4c31-b1d0-1160239694b7})} \\
        & = x \comboper \foldl(\seq, \seq(\foldl(\seq, \zero, p_i),  h), []) & \expl{def.\ of $\foldl$} \\
        & = x \comboper \foldl(\seq, \foldl(\seq, \zero, p_i), [h])) & \expl{def.\ of $\foldl$} \\
        & = x \comboper \foldl(\seq, \zero, p_i \cat [h])) & \expl{Lemma~\ref{lem:broken-foldl}}
      &&~\qed
      \end{align*}
    \end{itemize}
\end{itemize}
\end{proof}


\begin{lemma}\label{lem:redl}
\begin{equation}
  \redl(f, \xs) = \redl'(f, \xs)
\end{equation}
where
\begin{lstlisting}[language=Haskell,basicstyle=\sffamily\upshape\footnotesize,morekeywords={reducel}]
reducel' f xs = fromJust (foldl f' Nothing xs)
  where f' x y = case x of
    Nothing $\rightarrow$ Just y
    Just x' $\rightarrow$ Just (f x' y)
\end{lstlisting}
\end{lemma}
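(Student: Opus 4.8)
The plan is to prove that the two functions agree on every non-empty list (the only lists on which $\redl$ is defined), by first peeling off the head element and then establishing a~strengthened invariant describing how $f'$ relates to $f$ under $\foldl$.

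First I would restrict attention to a~non-empty list $x:\xs$, since $\redl$ has no defining clause for $[~]$. Unfolding the definitions on both sides gives
\begin{align*}
\redl(f, x:\xs) &= \foldl(f, x, \xs) \\
\redl'(f, x:\xs) &= \fromJust(\foldl(f', \nothing, x:\xs)) = \fromJust(\foldl(f', \just(x), \xs)),
\end{align*}
where the last equality uses the definition of $\foldl$ together with the $\nothing$-branch of $f'$, namely $f'(\nothing, x) = \just(x)$. The goal thus reduces to showing $\fromJust(\foldl(f', \just(x), \xs)) = \foldl(f, x, \xs)$.

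The crux is the following auxiliary claim, which I would prove by induction on $\xs$: for every accumulator $a$ and every list $\xs$,
\begin{equation*}
\foldl(f', \just(a), \xs) = \just(\foldl(f, a, \xs)).
\end{equation*}
The base case $\xs = [~]$ is immediate from the definition of $\foldl$. For the step $\xs = y:\ys$, one rewrites $\foldl(f', \just(a), y:\ys) = \foldl(f', \just(f(a, y)), \ys)$ using the $\just$-branch $f'(\just(a), y) = \just(f(a, y))$, and then applies the induction hypothesis with accumulator $f(a, y)$ to obtain $\just(\foldl(f, f(a, y), \ys)) = \just(\foldl(f, a, y:\ys))$. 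Instantiating the claim at $a = x$ and applying $\fromJust$ to both sides discharges the reduced goal, since $\fromJust(\just(v)) = v$.

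I expect no genuine obstacle here: the statement is essentially a~bookkeeping lemma showing that $\redl'$ simulates $\redl$ by threading a~$\maybe$ wrapper whose contents are always of the form $\just$ once the first element has been consumed. The only subtle point is the transition at the head of the list, where the accumulator changes from $\nothing$ to $\just$; this is precisely why the induction must be carried out over an~already-wrapped $\just$ accumulator rather than over the raw $\foldl$ starting from $\nothing$. Quantifying the invariant over all accumulators $a$ is what makes the induction step go through cleanly.
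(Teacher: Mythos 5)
Your proof is correct, but it is organized differently from the paper's. The paper proves the lemma by induction on the length of $\xs$ \emph{from the right}: the inductive step passes from $x:\xs$ to $x:\xs \cat [a]$, peels off the last element using the auxiliary Lemma~\ref{lem:broken-foldl} ($\foldl$ over a~concatenation), and then needs an informal side remark that, because $f'$ has been applied at least once, the nested $\foldl$ cannot return $\nothing$, so the $\just$-branch of $f'$ applies. You instead peel off the \emph{head} once and for all, and then prove the generalized invariant $\foldl(f', \just(a), \xs) = \just(\foldl(f, a, \xs))$ by ordinary structural induction on $\xs$ with the accumulator $a$ universally quantified. Your route buys two things: it dispenses with Lemma~\ref{lem:broken-foldl} entirely, and it replaces the paper's informal ``cannot be $\nothing$'' observation with an explicit invariant (the accumulator is always $\just(\cdot)$ after the first element), which is the kind of statement that mechanizes directly. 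The paper's snoc-style induction has the minor virtue of uniformity with the other proofs in its appendix, which all decompose lists at the right end via Lemma~\ref{lem:broken-foldl}, but as a~standalone argument yours is the cleaner one.
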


\begin{proof}
by induction on the length of $\xs$:
\begin{enumerate}
  \item  for $\xs = [~]$, both $\red$ and $\redl$ are undefined.

  \item  for $\xs = [x]$:
    \begin{align*}
    \redl(f, [x])  &= \foldl(f, x, [~]) = x
    \end{align*}
    and
    \begin{align*}
    \redl'(f, [x]) &= \fromJust(\foldl(f', \nothing, [x]) & \expl{def.\ of $\redl'$} \\
                   &= \fromJust(\foldl(f', f'(\nothing, x), [~]) &\expl{def.\ of $\foldl$} \\
                   &= \fromJust(\foldl(f', \just(x), [~]) &\expl{def.\ of $f'$} \\
                   &= \fromJust(\just(x)) &\expl{def.\ of $\foldl$} \\
                   &= x &\expl{def.\ of $\fromJust$}
    \end{align*}

  \item  assume the following induction hypothesis:
    \begin{equation}
      \redl(f, x:\xs) = \redl'(f', x:\xs) = R
    \end{equation}
    We now prove that the lemma holds for $x:\xs \cat [a]$.
    First, we compute the result for $\redl(f, x:\xs \cat [a])$:
    \begin{align*}
    \redl(f, x:\xs \cat [a])
    &=
    \foldl(f, x, \xs \cat [a])
    &\expl{def.\ of $\redl$}
    \\
    &=
    \foldl(f, \foldl(f, x, \xs), [a])
    &\expl{Lemma~\ref{lem:broken-foldl}}
    \\
    &=
    \foldl(f, \redl(f, x:\xs), [a])
    &\expl{def.\ of $\redl$}
    \\
    &=
    \foldl(f, R, [a])
    &\expl{IH}
    \\
    &=
    \foldl(f, f(R, a), [~])
    &\expl{def.\ of $\foldl$}
    \\
    &=
    f(R, a)
    &\expl{def.\ of $\foldl$}
    \end{align*}
    We proceed by computing the result for $\redl'(f, x:\xs \cat [a])$:
    \begin{align*}
    \MoveEqLeft
    \redl'(f, x:\xs \cat [a]) \\
    {}={}&
    \fromJust(\foldl(f', \nothing, x:\xs \cat [a]))
    &\expl{def.\ of $\redl'$}
    \\
    {}={}&
    \fromJust(\foldl(f', \foldl(f', \nothing, x:\xs),  [a]))
    &\expl{Lemma~\ref{lem:broken-foldl}}
    \\
    {}={}&
    \fromJust(\foldl(f', f'(\foldl(f', \nothing, x:\xs), a), [~]))
    &\expl{def.\ of $\foldl$}
    \\
    {}={}&
    \fromJust(f'(\foldl(f', \nothing, x:\xs), a))
    &\expl{def.\ of $\foldl$}
    \\
    &
    \langle \text{$f'$ is applied at least once on $x:\xs$ $\implies$ the result of the nested $\foldl$ cannot be $\nothing$} \rangle
    \\
    {}={}&
    \fromJust(\just(f(\fromJust(\foldl(f', \nothing, x:\xs)), a))
    &\expl{def.\ of $f'$}
    \\
    {}={}&
    f(\fromJust(\foldl(f', \nothing, x:\xs)), a)
    &\expl{def.\ of $\fromJust$}
    \\
    {}={}&
    f(\redl'(f',x:\xs), a)
    &\expl{def.\ of $\redl'$}
    \\
    {}={}&
    f(R, a)
    &\expl{IH}
    &&~\qed
    \end{align*}
\end{enumerate}
\end{proof}


\lemmaRedIsAggr*

\begin{proof}
We show that given the following definition of the function $\red''$,
\begin{lstlisting}[language=Haskell,basicstyle=\sffamily\footnotesize,morekeywords={aggregate,reduce},deletekeywords={seq},mathescape]
reduce'' :: ($\alpha$ $\rightarrow$ $\alpha$ $\rightarrow$ $\alpha$) $\rightarrow$ RDD $\alpha$ $\rightarrow$ $\alpha$
reduce'' ($\comboper$) rdd = fromJust (aggregate Nothing seq' ($\comboper$') rdd) ,
\end{lstlisting}
it holds that $\red''(\comboper, \rdd) = \red\dt(\comboper, \rdd)$ for all
$\comboper$ and $\rdd$.
In case $\rdd$ is a~partitioning of an empty list, the result of both $\red'$
and $\red''$ is undefined.
For a~non-empty list:
\begin{align*}
  \MoveEqLeft
  \red''(\comboper', \xs:\xss)
  \\
  {}={}&
  \fromJust(\agg(\nothing, \seq', \comboper', \xs:\xss))
  &\expl{def.\ of $\red''$}
  \\
  {}={}&
  \fromJust(\foldl(\comboper', \nothing, \map(\lambda \ys\;.\;\foldl(\seq', \nothing, \ys), \xs:\xss)))
  &\expl{def.\ of $\agg$}
  \\
  & \text{$\langle$ from the assumption on partitionings, no element of $\xs:\xss$ is empty $\rangle$}
  \\
  {}={}&
  \fromJust(\foldl(\comboper', \nothing, \map(\lambda \ys\;.\;\just(\fromJust(\foldl(\seq', \nothing, \ys))), \xs:\xss)))
  &\expl{def.\ of $\fromJust$}
  \\
  {}={}&
  \fromJust(\foldl(\comboper', \nothing, \map(\lambda \ys\;.\;\just(\redl(\comboper, \ys)), \xs:\xss)))
  &\expl{Lemma~\ref{lem:redl}}
  \\
  {}={}&
  \fromJust(\foldl(\comboper', \nothing, \just(\redl(\comboper, \xs)): \map(\lambda \ys\;.\;\just(\redl(\comboper, \ys)), \xss)))
  &\expl{def.\ of $\map$}
  \\
  {}={}&
  \fromJust(\foldl(\comboper', \nothing \comboper' \just(\redl(\comboper, \xs)), \map(\lambda \ys\;.\;\just(\redl(\comboper, \ys)), \xss)))
  &\expl{def.\ of $\foldl$}
  \\
  {}={}&
  \fromJust(\foldl(\comboper', \just(\redl(\comboper, \xs)), \map(\lambda \ys\;.\;\just(\redl(\comboper, \ys)), \xss)))
  &\expl{def.\ of $\comboper'$}
  \\
  %
  {}={}&
  \fromJust(\just(\foldl(\comboper, \redl(\comboper, \xs), \map(\lambda \ys\;.\;\redl(\comboper, \ys), \xss))))
  &\expl{def.\ of $\comboper'$}
  \\
  {}={}&
  \foldl(\comboper, \redl(\comboper, \xs), \map(\lambda \ys\;.\;\redl(\comboper, \ys), \xss))
  &\expl{def.\ of $\fromJust$}
  \\
\hide{
  {}={}&
  \foldl(\comboper, \redl(\comboper, \xs), \map(\lambda \ys\;.\;\redl(\comboper, \ys), \xss))
  &\expl{def.\ of $\fromJust$}
  \\
}
  {}={}&
  \redl(\comboper,\redl(\comboper, \xs):\map(\lambda \ys\;.\;\redl(\comboper, \ys), \xss))
  &\expl{def.\ of $\redl$}
  \\
  {}={}&
  \redl(\comboper, \map(\lambda \ys\;.\;\redl(\comboper, \ys), \xs:\xss))
  &\expl{def.\ of $\map$}
  \\
  {}={}&
  \red\dt(\comboper, \xs:\xss)
  &\expl{def.\ of $\red\dt$}
  &&~ \qed
\end{align*}
\end{proof}


\coroReduce*

\begin{proof}
From Lemma~\ref{lem:lemmaRedIsAggr}, it follows that we can investigate the
function $\agg(\nothing, \seq', \comboper', \rdd)$ instead of $\red(\comboper,
\rdd)$.
From Corollary~\ref{col:det-agg}, we obtain that $\agg(\nothing, \seq',
\comboper', \rdd)$ has deterministic outcome iff the following two conditions hold:
\begin{enumerate}
  \item  $(\imgof{\foldl(\seq', \nothing)}, \comboper', \nothing)$ is a commutative monoid,
  \item  $\forall d \in \alpha, e \in \imgof{\foldl(\seq', \nothing)}: \seq'(e, d) = e \comboper' \seq'(\nothing, d)$.
\end{enumerate}
We start with investigating condition~2:
\begin{itemize}
  \item  For the case $e = \nothing$:
  \begin{align*}
    &&
    \seq'(e, d)
    &=
    e \comboper' \seq'(\nothing, d)
    \\
    \iff&&
    \seq'(\nothing, d)
    &=
    \nothing \comboper' \seq'(\nothing, d)
    &\expl{subst.\ of $e = \nothing$}
    \\
    \iff&&
    \just(d)
    &=
    \nothing \comboper' \just(d)
    &\expl{def.\ of $\seq'$}
    \\
    \iff&&
    \just(d)
    &=
    \just(d)
    &\expl{def.\ of $\comboper'$}
  \end{align*}

  \item  For the case $e = \just(x)$:
  \begin{align*}
    &&
    \seq'(e, d)
    &=
    e \comboper' \seq'(\nothing, d)
    \\
    \iff&&
    \seq'(\just(x), d)
    &=
    \just(x) \comboper' \seq'(\nothing, d)
    &\expl{subst.\ of $e = \just(x)$}
    \\
    \iff&&
    \just(x \comboper d)
    &=
    \just(x) \comboper' \just(d)
    &\expl{def.\ of $\seq'$}
    \\
    \iff&&
    \just(x \comboper d)
    &=
    \just(x \comboper d)
    &\expl{def.\ of $\comboper'$}
  \end{align*}
\end{itemize}

We can observe that the condition is a~tautology.
Therefore, the condition~1 is a~sufficient and necessary condition for a~call
to $\agg(\nothing, \seq', \comboper', \rdd)$ to have a~deterministic outcome.

We proceed by investigating the conditions for $(\imgof{\foldl(\seq',
\nothing)}, \comboper', \nothing)$ to be a~commutative monoid.
First, we observe that for $\comboper: \alpha \times \alpha \to \alpha$, it
holds that $\imgof{\foldl(\seq', \nothing)} = \maybe(\alpha)$.
\begin{itemize}
  \item  \emph{Identity}:  From the definition, $\nothing$ is the identity of
    $\comboper'$.

  \item  \emph{Commutativity}:  From the definition, $\comboper'$ is
    commutative iff $\comboper$ is commutative.

  \item  \emph{Associativity}:  Consider elements $a,b,c \in
    \maybe(\alpha)$.
    We explore when $(a \comboper' b) \comboper' c = a \comboper' (b \comboper' c)$:
    \begin{itemize}

      \item  If any member of $\{a, b, c\}$ is $\nothing$, the condition holds
        because $\nothing$ is the (left and right) identity of $\comboper'$.

      \item  For $a = \just(a')$, $b = \just(b')$, and $c = \just(c')$, it
        holds that:
        \begin{align*}
          &&
          (\just(a) \comboper' \just(b)) \comboper' \just(c)
          &=
          \just(a) \comboper' (\just(b) \comboper' \just(c))
          \\
          \iff&&
          \just(a \comboper b) \comboper' \just(c)
          &=
          \just(a) \comboper' \just(b \comboper c)
          &\expl{def.\ of $\comboper'$}
          \\
          \iff&&
          \just((a \comboper b) \comboper c)
          &=
          \just(a \comboper (b \comboper c))
          &\expl{def.\ of $\comboper'$}
        \end{align*}
    \end{itemize}
    Therefore, $\comboper'$ is associative iff $\comboper$ is associative.

  \item  \emph{Closed}:
    It is easy to observe that $\comboper'$ is closed on $\maybe(\alpha)$.
\end{itemize}

From the previous conditions, we infer that $\agg(\nothing, \seq', \comboper',
\rdd)$ has deterministic outcome iff $(\alpha, \comboper)$ is a~commutative
semiring.
\qed
\end{proof}


\lemmaTreeAgg*

\begin{proof}
\begin{itemize}
  \item[$\Rightarrow$:]
Consider the following function:
\begin{lstlisting}[language=Haskell,basicstyle=\sffamily\footnotesize,morekeywords={dividel}]
dividel :: [$\alpha$] $\rightarrow$ ([$\alpha$], $\alpha$, $\alpha$, [$\alpha$])
dividel x1:x2:xs = ([], x1, x2, xs) .
\end{lstlisting}
Obviously, \hscode{dividel} is one possible way how \hscode{\kw{divide}!} can
function.
We further consider the following modification of \hscode{apply}:
\begin{lstlisting}[language=Haskell,basicstyle=\sffamily\footnotesize,morekeywords={dividel,applyl}]
applyl :: ($\beta$ $\rightarrow$ $\beta$ $\rightarrow$ $\beta$) $\rightarrow$ [$\beta$] $\rightarrow$ $\beta$
applyl comb [r] = r
applyl comb [r, r'] = comb r r'
applyl comb rs = let (ls', l', r', rs') = dividel rs in applyl comb (ls' ++ [comb l' r'] ++ rs')
\end{lstlisting}
After inlinining \hscode{\kw{dividel}} to \hscode{\kw{applyl}}, we can modify it to obtain yet futher modification:
\begin{lstlisting}[language=Haskell,basicstyle=\sffamily\footnotesize,morekeywords={dividel,applyl}]
applyl' :: ($\beta$ $\rightarrow$ $\beta$ $\rightarrow$ $\beta$) $\rightarrow$ [$\beta$] $\rightarrow$ $\beta$
applyl' comb [r] = r
-- applyl' comb [r, r'] = comb r r'
applyl' comb r1:r2:rs = applyl' comb ((comb r1 r2):rs)
\end{lstlisting}
Note that the case for a list of length 2 is reduntant now.
Clearly it holds that \hscode{\kw{applyl}'(f, xs) = reducel(f, xs)}.
If we substitute \hscode{\kw{reducel}} for \hscode{\kw{apply}} in the definition of
\hscode{\kw{treeAggregate}}, and further use the property of a~partitioning that it
is never an empty list, we obtain the definition of \hscode{\kw{aggregate}}.

\item[$\Leftarrow$:]
From Lemma~\ref{lemma:aggregate_commu}, it follows that $\comboper$ is
associative and commutative.
Therefore, any sequence of \hscode{\kw{divide}!}-\hscode{\kw{apply}} operations in
\hscode{\kw{apply}} will yield the same outcome as if we consider the
(deterministic) \hscode{\kw{dividel}}.
\qed
\end{itemize}
\end{proof}


\lemTreeRedIsRed*

\begin{proof}
Follows the same structure as the proof of Proposition~\ref{lemma:treeAgg}.
\qed
\end{proof}


When inferring conditions for a deterministic outcome of the call to
\hscode{\kw{aggregateByKey}}, we make use of the following auxiliary function:
\begin{lstlisting}[language=Haskell,basicstyle=\sffamily\footnotesize,deletekeywords={seq},morekeywords={aggregate,aggregateWithKey}]
aggregateWithKey :: $\alpha$ $\rightarrow$ $\gamma$ $\rightarrow$ ($\gamma$ $\rightarrow$ $\beta$ $\rightarrow$ $\gamma$) $\rightarrow$ ($\gamma$ $\rightarrow$ $\gamma$ $\rightarrow$ $\gamma$) $\rightarrow$ PairRDD $\alpha$ $\beta$ $\rightarrow$ $\gamma$
aggregateWithKey k z seq comb pairRdd =
  let select p = key p == k
     vrdd = filter (not . null)
              (map ((map value) . (filter select)) pairRdd)
  in aggregate z seq comb vrdd
\end{lstlisting}
%
%
We also use the following version of \hscode{\kw{aggregateByKey}} with the
partitioning given explicitly:
\begin{lstlisting}[language=Haskell,basicstyle=\sffamily\footnotesize,morekeywords={aggregateByKey,aggregateListByKey}]
aggregateListByKey :: ([($\alpha$, $\beta$)] $\rightarrow$ [[($\alpha$, $\beta$)]]) $\rightarrow$ $\gamma$ $\rightarrow$ ($\gamma$$\rightarrow$$\beta$$\rightarrow$$\gamma$) 
                     $\rightarrow$ ($\gamma$$\rightarrow$$\gamma$$\rightarrow$$\gamma$) $\rightarrow$ [($\alpha$, $\beta$)] $\rightarrow$ PairRDD $\alpha$ $\gamma$
aggregateListByKey part z mergeComb mergeValue list = aggregateByKey z mergeComb mergeValue (part list)
\end{lstlisting}

\begin{restatable}{lemma}{lemAggBKeyIsAggWKey}\label{lem:lemAggBKeyIsAggWKey}
It holds that
\vspace{-1ex}
\begin{align*}
  \MoveEqLeft
  \lookUp(k, \aggBKey(\zero, \seq, \comboper, \prdd)) = \aggWKey(k, \zero, \seq, \comboper, \prdd)) ,
\end{align*}
\noindent where $\lookUp$ searches the first value with a given
key in an RDD:
\begin{align*}
& \lookUp(k, \xss) = \head_z(\concat(\map(\map(\val \circ \filterkey~k),
\xss))) ,
\end{align*}
and $\head_z$ returns $z$ when the input is empty.
\end{restatable}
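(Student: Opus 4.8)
The plan is to fix the pair RDD $\prdd = [p_1, \dots, p_m]$ and the key $k$, and to show that both sides reduce to a single left fold of $\comboper$ over the per-partition $\seq$-folds of the values carrying key $k$. Write $V_i = \filterkey(k, p_i)$ for the list of values of key $k$ occurring in partition $p_i$, and let $I = \{\, i : V_i \neq [\,]\,\}$ be the indices of partitions that actually mention $k$. The common normal form I aim for is
\[
  \foldl\bigl(\comboper, \zero,\; \mu([\, \foldl(\seq, \zero, V_i) : i \in I \,])\bigr),
\]
where $\mu$ is the permutation of the nonempty per-partition results introduced by the nondeterministic shuffles. First I would rewrite each side into this form, and then argue that, as the shuffles vary, both sides range over exactly the same outcomes, so the two nondeterministic computations are equal.

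The core of the argument is an invariant about the in-partition pre-aggregation of $\aggBKey$. I would prove, by induction on a partition $p$, that
\[
  \hscode{hasValue}(k, \zero, \foldl(\mergeBy\ \seq, [\,], p)) = \foldl(\seq, \zero, \filterkey(k, p)),
\]
and that the accumulator $\foldl(\mergeBy\ \seq, [\,], p)$ holds a pair with key $k$ precisely when $\filterkey(k, p) \neq [\,]$. This is the delicate step: $\hscode{addTo}$ reverses the accumulator and deletes the previous entry for the key it writes, and $\mergeBy$ interleaves updates for all keys simultaneously, so the induction must confirm that writes for keys other than $k$ never disturb the value $\lookup$ retrieves for $k$, while each write for $k$ extends its running $\seq$-fold by exactly one element. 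An entirely analogous induction at the cross-partition level shows that $\foldl(\mergeBy\ \comboper, [\,], \mathit{preAgg})$ associates with $k$ the value $\foldl(\comboper, \zero, [\, \foldl(\seq, \zero, V_i) : i \in I \,])$, taken in whatever order the $k$-pairs are listed after the $\hscode{\kw{concatMap}!}$ shuffle, using that $\mathit{preAgg}$ contains exactly one $k$-pair for each $i \in I$.

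Once the merged value for $k$ is isolated, I would note that the final $\hscode{\kw{repartition}!}$ only redistributes pairs among partitions and leaves the unique $k$-pair intact, so that $\lookUp$ --- which flattens with $\concat$ and takes $\head_\zero$ --- returns exactly that merged value, or $\zero$ when $I = \emptyset$. On the other side, unfolding $\aggWKey$ shows that $\filter(\xnot \circ \xnull)$ discards precisely the partitions outside $I$, that the per-partition map projects partition $p_i$ to $V_i$, and that the inner $\agg$ then evaluates to $\foldl(\comboper, \zero, \mu'([\, \foldl(\seq, \zero, V_i) : i \in I \,]))$ with $\mu'$ the permutation supplied by its $\hscode{\kw{map}!}$. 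Both sides therefore equal the displayed normal form up to the choice of permutation, and since the shuffles on each side range over all permutations, the two computations have identical outcome sets, which is the claimed equality. The principal obstacle is the first inductive invariant on $\hscode{addTo}$ and $\hscode{hasValue}$, where the rearrangement and deletion performed by $\hscode{addTo}$ must be shown harmless for the key under lookup.
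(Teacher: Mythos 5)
Your proposal is correct and follows essentially the same route as the paper: your per-partition invariant on $\foldl(\mergeBy\ \seq, [\,], p)$ and your cross-partition induction are exactly the content of the paper's auxiliary identities \eqref{eq:filterkey-fold}, \eqref{eq:head-fold-lift}, and \eqref{eq:mapfold-select}, which the paper then chains together in a point-free calculation. The only difference is presentational --- you argue pointwise on the accumulator and handle the chaotic shuffles by comparing outcome sets, where the paper fixes concrete $\perm$/$\repartition$ instances and commutes them out by naturality --- and your explicit treatment of the \hscode{addTo} reordering is, if anything, slightly more careful about the one genuinely delicate step.
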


\begin{proof}
{ 
  \newcommand{\xseq}[0]{\otimes}
  \newcommand{\xcomb}[0]{\oplus}

To avoid too many parentheses, we use curried functions for
the proof of this lemma. We need a number of additional lemmas.
The following property allows one to swap $\filterkey~k$ and
$\foldl~(\mergeBy~(\xcomb))~[~]$:
\begin{equation}
   \filterkey~k \circ \foldl~(\mergeBy~(\xcomb))~[~] =
   \foldl~(\mergeBy~(\xcomb))~[~] \circ \filterkey~k \mbox{.}
   \label{eq:filterkey-fold}
\end{equation}
The next property says that, given a key $k$ and a binary
operator $(\odot)$, filtering the list with $k$ and performing
$\foldl~(\mergeBy (\odot))~[~]$ gives you a single value:
\begin{equation}
  \head_z \circ \map~\val \circ \foldl~(\mergeBy~(\odot))~[~]
  \circ \filterkey~k  =
    \foldl \odot z \circ \map~\val \circ \filterkey~k \mbox{,}
    \label{eq:head-fold-lift}
\end{equation}
where $\head_z$ returns $z$ when the input is empty. Finally,
in the equation below, given a RDD and any binary operator $(\odot)$,
the LHS computes $\foldl~(\mergeBy~(\odot))~[~])$ on each
partition, pick those with key $k$, and concatenates their values.
The RHS filters the values with key $k$, and computes $\foldl~(\odot)~z$
for each partition.
\begin{align}
  \MoveEqLeft
  \concat \circ \map~(\map~\val \circ \filterkey~k \circ
    \foldl~(\mergeBy~(\odot))~[~]) \nonumber \\
    {}={}&\map~(\foldl~(\odot)~z) \circ \filter~(\xnot \circ \xnull) \circ
       \map~(\map~\val \circ \filterkey~k) \mbox{.}
       \label{eq:mapfold-select}
\end{align}
All the lemmas above can be proved by induction. The proof of this lemma
follows:
  \begin{align*}
    \MoveEqLeft
    \lookUp~k \circ \aggBKey~\zero~(\xseq)~(\xcomb)
    \\
    {}={}&
    \head_z \circ \concat \circ \map~(\map~\val \circ \filterkey~k) \circ
     \repartition \circ {}
    \\
    &
    \foldl~(\mergeBy~(\xcomb))~[~] \circ \concat \circ
    \map~(\foldl~(\mergeBy~(\xseq))~[~]) \circ \perm
    &\expl{def.\ of $\aggBKey$}
    \\
    {}={}&
    \head_z \circ \map~\val \circ \filterkey~k \circ \foldl~(\mergeBy~(\xcomb))~[~] \circ {}
    \\
    &
    \concat \circ
    \map~(\foldl~(\mergeBy~(\xseq))~[~]) \circ \perm
    &\expl{naturality}
    \\
    {}={}&
    \head_z \circ \map~\val \circ \foldl~(\mergeBy~(\xcomb))~[~] \circ \filterkey~k \circ {}
    \\
    &
    \concat \circ \map~(\foldl~(\mergeBy~(\xseq))~[~]) \circ
    \perm
    &\expl{by \eqref{eq:filterkey-fold}}
    \\
    {}={}&
    \foldl~(\xcomb)~\zero \circ \map~\val \circ \filterkey~k \circ
    \concat \circ \map~(\foldl~(\mergeBy~(\xseq))~[~]) \circ \perm
    &\expl{by \eqref{eq:head-fold-lift}}
    \\
    {}={}&
    \foldl~(\xcomb)~\zero \circ \concat \circ
    \map~(\map~\val \circ \filterkey~k \circ \foldl~(\mergeBy~(\xseq)~[~]))
    \circ \perm
    &\expl{naturality}
    \\
    {}={}&
    \foldl~(\xcomb)~\zero \circ \map~(\foldl~(\xseq)~\zero) \circ
    \filter~(\xnot \circ \xnull) \circ \map~(\map~\val \circ \filterkey~k) \circ \perm
    &\expl{by \eqref{eq:mapfold-select}}
    \\
    {}={}&
    \foldl~(\xcomb)~\zero \circ \map~(\foldl~(\xseq)~\zero) \circ \perm \circ
    \filter~(\xnot \circ \xnull) \circ \map~(\map~\val \circ \filterkey~k)
    &\expl{naturality}
    \\
    {}={}&
    \aggWKey~k~\zero~(\xseq)~(\xcomb)
    &\expl{def.\ of $\aggWKey$}
&&~\qed
  \end{align*}
} 
\end{proof}

\lemAggByKeyIsAgg*


\begin{proof}
From Lemma~\ref{lem:lemAggBKeyIsAggWKey}, it follows that
$\aggBKey(\zero, \seq, \comboper, \prdd)$ has deterministic outcome iff
for all keys $k \in \alpha$ and partitionings $\partit$:
\begin{equation}
  \aggWKey(k, \zero, \seq, \comboper, \partit(\datalist)) =
  \foldl(\zero, \seq, \filterkey(k, \datalist)) .
\end{equation}

\noindent
From the defition of \hscode{\kw{aggregateWithKey}}, we infer that this is
equivalent to
%
\begin{align*}
  &&
  \agg(\zero, \seq, \comboper, \partit(\filterkey(k, \datalist)))
  &=
  \foldl(\zero, \seq, \filterkey(k, \datalist))
  \\
  \iff&&
  \agg(\zero, \seq, \comboper, \partit(\datalist'))
  &=
  \foldl(\zero, \seq, \datalist') ,
  &\expl{subst.\ $\datalist' = \filterkey(k, \datalist)$}
\end{align*}
which is the condition for $\agg(\zero, \seq, \comboper,
\partit(\datalist'))$ to have a~deterministic outcome.
\qed
\end{proof}


\noindent
Consider the following function.

\begin{lstlisting}[language=Haskell,basicstyle=\sffamily\footnotesize,morekeywords={reduce,reduceWithKey}]
reduceWithKey :: $\alpha$ $\rightarrow$ ($\beta$ $\rightarrow$ $\beta$ $\rightarrow$ $\beta$) $\rightarrow$ PairRDD $\alpha$ $\beta$ $\rightarrow$ $\beta$
reduceWithKey k mergeValue pairRdd =
  let select p = key p == k
     vrdd = filter (not . null)
              (map ((map value) . (filter select)) pairRdd)
  in reduce mergeValue vrdd
\end{lstlisting}

\begin{restatable}{lemma}{lemRedBKeyIsRedWKey}\label{lem:lemRedBKeyIsRedWKey}
It holds that
\begin{align*}
  \MoveEqLeft
  \lookup(k, \redBKey(\comboper, \prdd))
  = \redWKey(k, \comboper, \prdd)) .
\end{align*}
\end{restatable}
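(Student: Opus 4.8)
The statement is the $\red$-analogue of Lemma~\ref{lem:lemAggBKeyIsAggWKey}, and the cleanest route is to reduce it to that lemma rather than redo its point-free naturality chain from scratch. The key observation is that the $\maybe$-wrapping construction $(\seq', \comboper')$ used in Lemma~\ref{lem:lemmaRedIsAggr} to express $\red$ through $\agg$ lifts verbatim to the keyed combinators: running $\aggBKey(\nothing, \seq', \comboper', \prdd)$ computes exactly $\redBKey(\comboper, \prdd)$, only with every resulting value wrapped in $\just$. The plan is therefore to establish this correspondence, invoke Lemma~\ref{lem:lemAggBKeyIsAggWKey}, and then strip the $\just$ on the per-key side. (As in Lemma~\ref{lem:lemAggBKeyIsAggWKey}, $\lookup$ is read as the RDD-level $\lookUp$; the two agree because each key occurs at most once in the output of $\redBKey$.)

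First I would prove the keyed analogue of Lemma~\ref{lem:redl}: for every key $k$ and every partitioning,
\[
\lookUp(k, \redBKey(\comboper, \prdd)) = \fromJust\bigl(\lookUp(k, \aggBKey(\nothing, \seq', \comboper', \prdd))\bigr).
\]
This is shown by checking that $\seq'$ and $\comboper'$ mimic, modulo $\just$, the single $\mathit{merge}$ function of $\redBKey$ at both stages of the computation. Within a partition, the $\mathit{merge}$ of $\redBKey$ accumulates the values $v_1, \dots, v_m$ carrying key $k$ into $\redl(\comboper, [v_1, \dots, v_m])$, whereas $\mergeBy\ \seq'$ started from $\nothing$ produces $\just(v_1 \comboper \cdots \comboper v_m)$, since $\seq'(\nothing, v) = \just\, v$ turns the first occurrence into the seed and $\seq'(\just\, x, v) = \just(x \comboper v)$ continues the fold. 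Across partitions, the final $\foldl$ with $\mathit{merge}$ (resp.\ $\mergeBy\ \comboper'$) combines the per-partition results in the same fashion, with $\comboper'$ treating $\nothing$ as its identity exactly where $\redBKey$ inserts a first occurrence. The crucial point is that the $\nothing$ state of the $\maybe$ type records precisely the ``key not yet seen'' condition of $\redBKey$'s $\mathit{merge}$, which is also what makes the aggregate side total where the reduce side is partial.

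With this correspondence in hand, Lemma~\ref{lem:lemAggBKeyIsAggWKey} yields
\[
\fromJust\bigl(\lookUp(k, \aggBKey(\nothing, \seq', \comboper', \prdd))\bigr) = \fromJust\bigl(\aggWKey(k, \nothing, \seq', \comboper', \prdd)\bigr),
\]
so it remains only to identify the right-hand side with $\redWKey(k, \comboper, \prdd)$. This is immediate once one notices that $\redWKey$ and $\aggWKey$ build the \emph{identical} intermediate RDD $\mathit{vrdd}$ (the partitions of values carrying key $k$, with empty partitions removed) and then apply $\red(\comboper, \cdot)$ resp.\ $\agg(\nothing, \seq', \comboper', \cdot)$ to it; fixing the deterministic instantiations of the chaotic functions as in the proof of Lemma~\ref{lem:lemAggBKeyIsAggWKey}, the pointwise equality $\red\dt(\comboper, \mathit{vrdd}) = \fromJust(\agg(\nothing, \seq', \comboper', \mathit{vrdd}))$ established inside the proof of Lemma~\ref{lem:lemmaRedIsAggr} then closes the chain.

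I expect the main obstacle to be the first step: carefully verifying the $\redBKey$--$\aggBKey$ correspondence through both the per-partition \hscode{concatMap!} phase and the cross-partition \hscode{repartition!} phase, and in particular handling the degenerate cases (a key absent from a partition, or an empty value list, where $\redl$ is undefined but $\foldl(\seq', \nothing, \cdot)$ safely returns $\nothing$). An alternative that sidesteps the reduction is to mirror the proof of Lemma~\ref{lem:lemAggBKeyIsAggWKey} directly, re-establishing its auxiliary naturality identities \eqref{eq:filterkey-fold}--\eqref{eq:mapfold-select} with $\redl$ in place of $\foldl(\seq, \zero, \cdot)$ and $\redBKey$'s $\mathit{merge}$ in place of $\mergeBy$; this is routine but more tedious and must separately track the non-emptiness side conditions of $\red$.
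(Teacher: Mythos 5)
Your proposal is correct, but it takes a genuinely different route from the paper. The paper's proof of this lemma is simply ``Similar to that of Lemma~\ref{lem:lemAggBKeyIsAggWKey}'', i.e.\ it redoes the point-free naturality calculation with $\redl$ and $\redBKey$'s $\mathit{merge}$ in place of $\foldl(\seq,\zero,\cdot)$ and $\mergeBy$ --- exactly the ``alternative'' you mention in your last paragraph. Your primary route instead reduces $\redBKey(\comboper,\cdot)$ to $\aggBKey(\nothing,\seq',\comboper',\cdot)$ via the $\maybe$-lifting already introduced for Lemma~\ref{lem:lemmaRedIsAggr}, invokes Lemma~\ref{lem:lemAggBKeyIsAggWKey} as a black box, and strips $\just$ at the end using the identity $\red\dt(\comboper,\cdot)=\fromJust(\agg(\nothing,\seq',\comboper',\cdot))$ from that earlier proof. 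This is a sound reduction: the per-partition and cross-partition merges of $\redBKey$ and of $\aggBKey$ with $(\nothing,\seq',\comboper')$ do agree modulo $\just$, with the $\nothing$ state encoding ``key not yet seen'' exactly as you say, and $\redWKey$ and $\aggWKey$ do build the identical intermediate $\mathit{vrdd}$. What your approach buys is reuse: no re-derivation of the three auxiliary identities \eqref{eq:filterkey-fold}--\eqref{eq:mapfold-select} for the reduce-specific merge, and the partiality of $\redl$ is absorbed once into the $\maybe$ wrapper rather than tracked through the whole chain. What it costs is the new keyed correspondence lemma (your first step), which must be verified through both the \hscode{\kw{concatMap}!} and \hscode{\kw{repartition}!} phases and is roughly comparable in length to the inductive proofs it replaces. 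Both arguments are valid; yours arguably makes the structural relationship between the keyed and unkeyed combinators more explicit than the paper's one-line appeal to similarity.
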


\begin{proof}
Similar to that of Lemma~\ref{lem:lemAggBKeyIsAggWKey}.
\qed
\end{proof}


\lemRedBKeyIsRed*

\begin{proof}
Folows the same structure as the proof of Proposition~\ref{lem:lemAggByKeyIsAgg}.
\qed
\end{proof}


\lemDetAggMsg*

\begin{proof}
When \hscode{reduceByKey} has deterministic outcome, then it holds (from
definition) that for all vertices $v \in \vertexid$, lists $\datalist \in
\listof{\alpha}$, and partitionings $\partit$:
\begin{align*}
  \lookup(v, \redListWKey(\partit, \comboper, \datalist))
  {}={}& \redl(\comboper, \filterkey(v, \datalist)) .
\end{align*}
When applying $\lookup(v, \aggMsgs(\send,\comboper,\graphrddof{\vertexlist,
\edgelist}))$, the result will be the same as if the $\lookup$ is applied to
the last line of function \hscode{\kw{aggregateMessagesWithActiveSet}}:
\begin{lstlisting}[language=Haskell,basicstyle=\sffamily\footnotesize,deletekeywords={},mathescape,morekeywords={reduceByKey}]
  lookup(v, reduceByKey($\comboper$, pairRdd)) .
\end{lstlisting}
Since $\redBKey(\comboper, \hscode{pairRdd})$ has deterministic outcome, it follows that
\begin{equation}
  \lookup(v, \redBKey(\comboper, \hscode{pairRdd})) = \redl(\comboper, \filterkey(v, \hscode{pairRdd})) .
\end{equation}
This is a~sufficent condition to conclude that
$\aggMsgs(\send,\comboper,\graphrddof{\vertexlist, \edgelist}))$ has
a~deterministic outcome.
\qed
\end{proof}



\end{document}